\documentclass[preprint,11pt]{elsarticle}
\makeatletter
\def\ps@pprintTitle{%
 \let\@oddhead\@empty
 \let\@evenhead\@empty
 \def\@oddfoot{\centerline{\thepage}}%
 \let\@evenfoot\@oddfoot}
\makeatother

\usepackage{tikz,calc}
\usepackage{subcaption}
\usepackage{mathrsfs}
\usepackage{color}
\usepackage{array}
\usepackage{amssymb}                           
\usepackage{amsmath}                           
\usepackage[linesnumbered,vlined,ruled,algo2e]{algorithm2e}


\SetCommentSty{mycommfont}


\newcommand{\mscr}[1]{\mathscr{#1}}
\newcommand{\mrm}[1]{\mathrm{#1}}
\newcommand{\mtt}[1]{\mathtt{#1}}
\newcommand{\msf}[1]{\mathsf{#1}}
\renewcommand{\mit}[1]{\mathit{#1}}
\newcommand{\lrangle}[1]{\langle #1 \rangle}

\newtheorem{theorem}{Theorem}
\newtheorem{lemma}{Lemma}
\newtheorem{corollary}{Corollary}
\newtheorem{proposition}{Proposition}
\newtheorem{observation}{Observation}
\newdefinition{example}{Example}
\newdefinition{defn}{Definition}
\newdefinition{condition}{Condition}
\newproof{proof}{Proof}

\newcommand{\rev}[1]{#1^\mathrm{R}}
\newcommand{\prev}[1]{#1^{\triangleleft}}
\newcommand{\plink}{\sqsubset}
\newcommand{\rad}{\rho}

\newcommand{\Stack}{\mathit{Stack}}
\newcommand{\Pals}{\mit{Pals}}

\newcommand{\ot}{:=}
\newcommand{\PCE}{\msf{Stabilize}}
\newcommand{\slowExt}{\msf{SlowExtend}}
\newcommand{\fastExt}{\msf{FastExtend}}
\newcommand{\unstable}{\mit{unstable}}
\newcommand{\ZRed}{\msf{ZReduce}}
\newcommand{\ZDet}{\msf{ZDetect}}
\newcommand{\ZDetF}{\msf{ZDetectChain}}
\newcommand{\ACE}{\textbf{\AE}}
\newcommand{\Append}{\msf{append}}
\newcommand{\ZR}{\mathrel{\rightarrow}}
\newcommand{\Rad}[2]{\rho_{#1}(#2)}
\newcommand{\fst}[1]{\textcolor{blue}{#1}}
\newcommand{\snd}[1]{\textcolor{red}{#1}}

\SetArgSty{textrm}

\newcommand{\palin}[4][]{
	\draw[#1] (#2-#3,#4) -- (#2,#4) -- (#2+#3,#4);
	\draw[#1] (#2-#3,#4-0.2) -- (#2-#3,#4+0.2);
	\draw[#1] (#2,#4-0.2) -- (#2,#4+0.2);
	\draw[#1] (#2+#3,#4-0.2) -- (#2+#3,#4+0.2);
}
\newcommand{\zshape}[4][]{
	\draw[#1] (2*#2-#3,#4) -- (#2,#4) -- (#3,#4) -- (2*#3-#2,#4);
	\draw[#1] (2*#2-#3,#4-0.2) -- (2*#2-#3,#4+0.2);
	\draw[#1] (#2,#4-0.2) -- (#2,#4+0.2);
	\draw[#1] (#3,#4-0.2) -- (#3,#4+0.2);
	\draw[#1] (2*#3-#2,#4-0.2) -- (2*#3-#2,#4+0.2);
}

\begin{document}
	
\begin{frontmatter}
		
	\title{Linear-Time Online Algorithm for  \\ Inferring the Shortest Path from a Walk}
	\author[Tohoku]{Shintaro~Narisada\footnote{He is currently working in KDDI Corporation, Tokyo, Japan.}}
	\author[Tohoku]{Diptarama~Hendrian}
	\author[Tohoku]{Ryo~Yoshinaka}
	\author[Tohoku]{Ayumi~Shinohara}
	\address[Tohoku]{Graduate School of Information Sciences, Tohoku University, Sendai, Japan}	
		
\begin{abstract}
We consider the problem of inferring an edge-labeled graph from the sequence of edge labels seen in a walk on that graph.
It has been known that this problem is solvable in $\mrm{O}(n \log n)$ time when the targets are path or cycle graphs. 
This paper presents an online algorithm for the problem of this restricted case that runs in $\mrm{O}(n)$ time, based on Manacher's algorithm for computing all the maximal palindromes in a string.
\end{abstract}

\begin{keyword}
	graph inference, string rewriting, palindrome
\end{keyword}

\end{frontmatter}
\section{Introduction}
Aslam and Rivest~\cite{Aslam1990359} proposed the problem of \emph{minimum graph inference from a walk}.
Let us consider an edge-labeled undirected (multi)graph $G$.
A \emph{walk} of $G$ is a sequence of edges $e_1,\dots,e_n$ such that each $e_i$ connects $v_{i-1}$ and $v_i$ for some (not necessarily pairwise distinct) vertices $v_0,v_1,\dots,v_n$.
The \emph{output} of the walk is the sequence of the labels of those edges. 
For a string $w$, \emph{minimum graph inference from a walk} is the problem to compute a graph $G$ with the smallest number of \emph{vertices} such that $w$ is the output of a walk of $G$.
We give an example in Figure~\ref{figure:z}.
\begin{figure}[t]
	\centering
	\newcommand{\paline}[4]{
	\draw (#1,#4) -- (#2,#4) -- (#3,#4);
	\draw (#1,#4-0.2) -- (#1,#4+0.2);
	\draw (#2,#4-0.2) -- (#2,#4+0.2);
	\draw (#3,#4-0.2) -- (#3,#4+0.2);
	}
	\begin{tikzpicture}[scale=0.8,every node/.style={circle},inner sep=0pt,minimum size=2.5mm]\small
		\node[draw,fill=black] (1) at (0,1.8) {}; 
		\node[draw,fill=black] (2) at (1.5,1.8) {}; 
		\node[draw,fill=black] (3) at (3,1.8) {}; 
		\node[draw,fill=black] (4) at (4.5,1.8) {}; 
		\node[draw,fill=black] (5) at (6,1.8) {}; 
		\draw[-,thick] (1) to node[above] {\Large$\mtt{a}$} (2);
		\draw[-,thick] (2) to node[above] {\Large$\mtt{b}$} (3);
		\draw[-,thick] (3) to node[above] {\Large$\mtt{c}$} (4);
		\draw[-,thick] (4) to node[above] {\Large$\mtt{a}$} (5);
		\draw[->,densely dashed,thick,red] (0,1.2) -- (6,1.2) -- (1.5,0.9) -- (3,0.7) -- (0,0.4) -- (4.5,0.4) -- (0,0) -- (6,0);
	\end{tikzpicture}
	\caption{Minimum path graph that has ${\tt abcaacbbbaabccbbca}$ as a walk output\label{figure:z}}
\end{figure}
With no assumption on graphs to infer, trivially the graph with a single vertex with self-loops labeled with all output letters is always minimum.
The problem has been studied for different graph classes in the literature.

Aslam and Rivest~\cite{Aslam1990359} proposed polynomial time algorithms for the minimum graph inference problem for path graphs and cycle graphs,
which include the variant of minimum path graph inference where a walk must start from an end of a path graph and end in the other end (Table~\ref{table:inferres}).
Raghavan~\cite{RAGHAVAN1994108} studied the problem further and showed that both minimum path and cycle graph inference from walk are reduced to path graph inference from an end-to-end walk in $\mrm{O}(n)$ time.
Moreover, he presented an $\mrm{O}(n \log n)$ time algorithm for inferring minimum path/cycle graph from a walk, while showing inferring minimum graph with bounded degree $k$ is NP-hard for any $k \ge 3$.
Maruyama and Miyano~\cite{MARUYAMA1995257} strengthened Raghavan's result so that inferring minimum tree with bounded degree $k$ is still NP-hard for any $k \ge 3$.
On the other hand, Maruyama and Miyano~\cite{MARUYAMA1996289} showed that it is solvable in linear time when trees have no degree bound.
They also studied a variant of the problem where the input consists of multiple path labels rather than a single walk label, which was shown to be NP-hard.
Akutsu and Fukagawa~\cite{Akutsu2005371} considered another variant, where the input is the numbers of occurrences of vertex-labeled paths.
They showed a polynomial time algorithm with respect to the size of output graph, when
the graphs are trees of unbounded degree and the lengths of given paths are fixed.  
They also proved that the problem is strongly NP-hard even when the graphs are planar of unbounded degree.
\begin{table}[b]
	\caption{Time complexity of minimum graph inference bounded degree 2 from a walk \label{table:inferres}}
	\vspace{2mm}
	\centering
	\begin{tabular}{|r|r|r|r|} \hline
		& \multicolumn{3}{ c| }{Connected graph bounded degree $2$}  \\ 
		\cline{2-4}
		\multicolumn{1}{ |c| }{Algorithms}   & \multicolumn{2}{ c| }{path} &  \multicolumn{1}{ c| }{cycle} \\ 
		\cline{2-3}
		& end-to-end walk & general walk &  \\
		\hline
		\multicolumn{1}{ |c| }{Aslam \& Rivest~\cite{Aslam1990359}} & $\mrm{O}(n^3)$ & $\mrm{O}(n^3)$ & $\mrm{O}(n^5)$ \\ 
		\hline
		\multicolumn{1}{ |c| }{Raghavan~\cite{RAGHAVAN1994108}}& $\mrm{O}(n \log n)$ & $\mrm{O}(n \log n)$ & $\mrm{O}(n \log n)$ \\ 
		\hline
		\multicolumn{1}{ |c| }{Proposed}& $\mrm{O}(n)$ & $\mrm{O}(n)$ & $\mrm{O}(n)$ \\ 
		\hline
	\end{tabular}
\end{table}

This paper focuses on the problem on graphs of bounded degree 2, i.e., path and cycle graphs.
We propose a linear-time online algorithm that infers the minimum path graph from an end-to-end walk.
Thanks to Raghavan's result~\cite{RAGHAVAN1994108}, this entails that one can infer the minimum path/cycle graph in linear time from a walk, which is not necessarily end-to-end.
Aslam and Rivest~\cite{Aslam1990359} showed that the minimum path graphs that have end-to-end walks $x y \rev{y} y z$ and $xyz$ coincide, where $x,y,z$ are label strings and $\rev{y}$ is the reverse of $y$.
Let us call a nonempty string of the form $y \rev{y} y$ a \emph{Z-shape}.
Their result implies that to obtain the minimum path graph of a label string, one can repeatedly contract an arbitrary occurrence of a Z-shape $y \rev{y} y$ to $y$ until the sequence contains no such substring.
Then the finally obtained string is just the sequence of labels of the edges of the minimum path graph.
Raghavan~\cite{RAGHAVAN1994108}  achieved an $\mrm{O}(n \log n)$ time algorithm by introducing a sophisticated order of rewriting, which always contract the smallest Z-shapes in the sequence.
We follow their approach of repetitive contraction of Z-shapes but with a different order.
The order we take might appear more naive;
We read letters of the input string one by one and always contract the firstly found Z-shape.
This approach makes our algorithm online.
Apparently finding Z-shapes is closely related to finding palindromes.
Manacher~\cite{Manacher75} presented a linear-time ``online'' algorithm that finds all the maximal palindromes in a string.
To realize linear-time Z-shape elimination, we modify Manacher's algorithm for Z-shape detection and elimination, though it is not a straightforward adjustment.
Our experimental results show that our algorithm is faster than Raghavan's in practice, too.

A preliminary version of this paper appears in~\cite{Narisada2018}.
\section{Preliminaries}
For a tuple $\vec{e}=(e_1,\dots,e_m)$ of elements, we represent $(e_0,e_1,\dots,e_m)$ by $e_0;\vec{e}$ or $(e_0;\vec{e})$.
The interval between two integers $i$ and $j$ is denoted by $[i:j] = \{\, k \in \mathbb{Z} \mid i \le k \le j\,\}$.

Let $\Sigma$ be an alphabet.
A sequence of elements of $\Sigma$ is called a \emph{string} and the set of strings is denoted by $\Sigma^*$.
The empty string is denoted by $\varepsilon$ and the set of nonempty strings is $\Sigma^+ = \Sigma^* \setminus \{\varepsilon\}$.
For a string $w=xyz$, $x$, $y$, and $z$ are called a \emph{prefix}, a \emph{substring}, and a \emph{suffix} of $w$, respectively.
A prefix $x$ of $w$ is said to be \emph{proper} if $x \neq w$.
The length of $w$ is denoted by $|w|$.
The $i$-th letter of $w$ is denoted by $w[i]$ for $1 \leq i \leq |w|$. 
For $1 \leq i \leq j \leq |w|$, ${w}[i:j]$ represents the string $w[i] \dots w[j]$.
If $j > i$, $w[i:j]$ means the empty string. 
The longest proper prefix $w[1:|w|-1]$ of $w$ is abbreviated as $\prev{w}$ if $w \neq \varepsilon$.
The reversed string of $w$ is denoted by $\rev{w} = w[|w|] \cdots w[1]$.
The string repeating $w$ $k$ times is $w^k$.

A string $y$ is called an \emph{even palindrome} if $y = x\rev{x}$ for a string $x \in \Sigma^*$.
The \emph{radius} of $y$ is $r=|x|$.
Throughout this paper by a palindrome, we exclusively mean an even palindrome,  
 because we consider only even palindromes in this paper.
When $y$ occurs as a substring $w[i:j]$ of a string $w$, the position $c = (i+j-1)/2$ is called the \emph{center} (of the occurrence) of $y$.\footnote{%
It may be more reasonable to define the center to be $(i+j)/2$, but we have chosen to stick to integers.}
Especially, $y$ is said to be the \emph{maximal palindrome} (centered) at $c$
iff either  $i = 1$, $j = |w|$, or $w[i-1] \neq w[j+1]$.
The radius of  the maximal palindrome centered at $c$ in $w$ is denoted by $\Rad{w}{c}$.
The intervals $[c-\rad_w(c)+1:c]$ and $[c+1 : c+\rad_w(c)]$ are called the \emph{left} and \emph{right arms} of the maximal palindrome centered at $c$, respectively.  

A string $z$ is called a \emph{Z-shape}
if $z = x\rev{x}x$ for a non-empty string $x \in \Sigma^+$.
The \emph{tail} of $z$ is the suffix $\rev{x} x$. 
When $z$ occurs as a substring $z = w[i:i+3|x|-1]$ of a string $w$, the positions $p_1 = i+|x|-1$ and $p_2 = i+2|x|-1$ are called the \emph{left} and \emph{right pivots} (of the occurrence) of $z$.
The occurrence of the Z-shape is represented by a pair $\lrangle{p_1,p_2}$.
Note that the left and right pivots are the centers of the constituent palindromes $x\rev{x}$ and $\rev{x}x$, respectively.
Obviously, a pair $\lrangle{p_1,p_2}$ of positions in $w$ is a Z-shape occurrence if and only if $\rad_w(p_1),\rad_w(p_2) \ge p_2-p_1 > 0$.
We note that the empty string $\varepsilon$ is a palindrome but not a Z-shape by definition.
\begin{example}
Let us consider the string $w=\mtt{ababccbaabcc}$ illustrated below.
\[	\begin{tikzpicture}[scale=0.7]
	\foreach \x in {1,...,12}
		\draw(\x*0.5+0.35, 0.2) node{\scriptsize$\x$};
	\draw(0,-0.2) node{$w =$};
	\foreach \x/\s in {1/a,2/b,3/a,4/b,5/c,6/c,7/b,8/a,9/a,10/b,11/c,12/c}
		\draw(\x*0.5, -0.2) node[anchor=west]{\normalfont$\mathstrut\mathtt{\s}$};
	\palin{3.1}{1.5}{-0.8}
	\palin{4.6}{2}{-1.3}
	\zshape[thick]{3.1}{4.6}{-1.8}
\end{tikzpicture}\]
	We can find among others two maximal palindromes $w[3:8]=\mtt{abccba}$ and $w[5:12]=\mtt{ccbaabcc}$ whose centers are $5$ and $8$ and radii are  $\rad_w(5)=3$ and $\rad_w(8)=4$, respectively.
	Those two palindromes form a Z-shape $w[3:11]=\mtt{abccbaabc}$, whose occurrence is denoted by $\lrangle{5,8}$.
\end{example}

\subsection*{Minimum graph inference from a walk}
Let us define a binary relation $\to$ over nonempty strings by $x y \rev{y} y z \ZR xyz$ for $x,z \in \Sigma^*$ and $y \in \Sigma^+$,
saying that $x y \rev{y} y z$ \emph{reduces} to $xyz$. 
We call a string $w$ \emph{reducible} if it admits a string $w'$ such that $w \ZR w'$. Otherwise it is \emph{irreducible}.
In general, there can be different strings to which $w$ can reduce. 
Aslam and Rivest~\cite{Aslam1990359} proved that every string $w$ admits a unique irreducible string $w'$ such that $w \ZR^* w'$, where $\ZR^*$ is the reflexive and transitive closure of $\ZR$, which is obtained by repeatedly reducing $w$ by an arbitrary order.
Let us call the string $w'$ the \emph{Z-normal form} of $w$ and denote it by $\hat{w}$.
Their result can be written as follows.
\begin{theorem}[\cite{Aslam1990359}] \label{theo:end}
	The sequence of the labels of the edges of the minimum path graph with output $T$ of an end-to-end walk is its Z-normal form $\widehat{T}$.  
\end{theorem}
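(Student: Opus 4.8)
The plan is to characterize, for a fixed path graph, exactly which strings arise as outputs of its end-to-end walks, and then to read off minimality from the confluence statement of Aslam and Rivest quoted just above. Fix a path graph $G$ and let $P \in \Sigma^*$ be the string spelled by its edge labels read from one designated end to the other (a path graph is determined by $P$ up to reversal, so I reason about $P$). The whole theorem rests on the following equivalence, which I will call the realizability lemma: a string $T$ is the output of an end-to-end walk of $G$ if and only if $T \ZR^* P$.

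The ``if'' direction is the easy one, and I would prove it by induction on the length of a reduction $T \ZR^* P$. If $T = P$, the walk that traverses $G$ straight from one end to the other outputs $T$. For the inductive step suppose $T = xy\rev{y}yz \ZR xyz$ with the shorter string realized by an end-to-end walk $W$ of $G$. After $W$ has produced the prefix $xy$ it rests at some vertex $u$, the last $|y|$ edges it crossed spelling $y$; splicing in the detour that retraces those $|y|$ edges backward (producing $\rev{y}$) and then forward again (producing $y$) before $W$ resumes gives an end-to-end walk of the \emph{same} $G$ whose output is $xy\rev{y}yz = T$. Hence every string reducing to $P$ is realizable.

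The ``only if'' direction is the crux. I would model a walk by the sequence of visited vertex indices $0 = i_0, i_1, \dots, i_n = |P|$ with $|i_t - i_{t-1}| = 1$; the label read at step $t$ is that of the traversed edge, so a \emph{monotone} walk forces $i_t = t$ and $T = P$, and there is nothing to prove. Otherwise the trajectory decomposes into maximal monotone runs, at least three of them, alternating in direction. The key idea is to locate a \emph{shortest} such run $R^{*}$: one checks that it may be taken to be an interior run (if the shortest run were the first or last one, a descent that cannot pass below the endpoint forces an adjacent run to be equally short and interior), so both of its neighboring runs are at least as long as $R^{*}$. Writing $s = |R^{*}|$, the two extrema flanking $R^{*}$ are then centers $c_1 < c_2$ of maximal palindromes in $T$ with $\Rad{T}{c_1}, \Rad{T}{c_2} \ge s = c_2 - c_1 > 0$; by the criterion recorded above this is exactly a Z-shape occurrence $\lrangle{c_1,c_2}$. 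Contracting it deletes the ``bump'' in which the walk runs out over the edges of $R^{*}$ and immediately returns, so the contracted string $T'$ is the output of a strictly shorter end-to-end walk of the same $G$ and $T \ZR T'$. The induction hypothesis gives $T' \ZR^* P$, whence $T \ZR^* P$.

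Granting the realizability lemma, the theorem follows quickly from confluence. By the lemma the path graphs realizing $T$ as an end-to-end walk output correspond precisely to the strings $P$ with $T \ZR^* P$, and the number of edges of such a graph is $|P|$. Since $T$ has, by the cited result, a unique normal form $\widehat{T}$, every such $P$ also reduces to $\widehat{T}$; as each application of $\ZR$ strictly decreases length, $|P| \ge |\widehat{T}|$, with equality only for the irreducible string $P = \widehat{T}$. Thus $\widehat{T}$ is the unique shortest realizable label string, i.e.\ it spells the edges of the minimum path graph. I expect the ``only if'' direction to be the main obstacle, and within it the delicate point is exactly the extremal selection of the run to contract: one must ensure the chosen Z-shape is contractible as a genuine detour, so that its removal preserves both the graph and the end-to-end condition, which is what forces the shortest-run choice together with the small boundary-case check rather than an arbitrary turning point.
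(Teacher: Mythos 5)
The paper itself contains no proof of this statement: Theorem~\ref{theo:end} is imported from Aslam and Rivest~\cite{Aslam1990359}, the only ingredient stated alongside it being the confluence fact that every string has a unique irreducible $\ZR$-normal form. Your proposal is, as far as I can verify, a correct self-contained proof modulo that same confluence fact, and it takes a route worth comparing to the original. Your realizability lemma --- $T$ is an end-to-end walk output of the path spelled by $P$ iff $T \ZR^* P$ (up to reversal of $P$, a point you flag and which is harmless since only $|P|$ enters the minimality argument) --- confines all graph reasoning to one walk-surgery statement. The delicate step is exactly where you locate it: in the ``only if'' direction your extremal choice of a shortest monotone run $R^*$ is what guarantees both flanking runs have length at least $s=|R^*|$, hence the two turning points are palindrome centers $c_1 < c_2$ with $\rad_T(c_1),\rad_T(c_2) \ge s = c_2-c_1$, matching the paper's Z-shape criterion; your boundary check is sound because a non-monotone end-to-end walk has $k \ge 3$ runs with the outer runs ascending, so a shortest first (resp.\ last) run forces $|R_2|=|R_1|$ (resp.\ $|R_{k-1}|=|R_k|$) since the descent cannot pass below vertex $0$ (resp.\ started no higher than the top endpoint), yielding an equally short interior run. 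Crucially, the deleted tail $\rev{x}x$ corresponds to a closed sub-walk (out over $R^*$ and back, returning to the same vertex), so its removal leaves a valid, strictly shorter end-to-end walk on the same graph --- this is precisely where an arbitrary turning point would fail, and your shortest-run selection handles it. Compared with Aslam and Rivest's original development (which argues about graphs throughout, showing the minimum graphs of $xy\rev{y}yz$ and $xyz$ coincide), your route buys a clean division of labor: after the surgery lemma, minimality is pure rewriting --- every realizable $P$ reduces to $\widehat{T}$, each $\ZR$ step strictly shortens, so $|P| \ge |\widehat{T}|$ with equality iff $P=\widehat{T}$. Two cosmetic remarks only: the flanking extrema need merely carry palindromes of radius at least $s$, not maximal ones; and the closing correspondence between realizing graphs and strings $P$ with $T \ZR^* P$ should say ``up to reversal,'' which your opening sentence already licenses.
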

Therefore, to infer the minimum path graph from an end-to-end walk is to calculate its Z-normal form.
\begin{example}\label{ex:toy}
	The Z-normal form of $T = \mathtt{cbaaaabccbaabba}$
	is $\widehat{T}=\mathtt{cba}$, which is obtained by
	$\mathtt{cb \underline{aaa}abccbaabba} \ZR 
	\mathtt{\underline{cbaabccba}abba} \ZR 
	\mathtt{c\underline{baabba}} \ZR \mathtt{cba}$.
	Here, underlines show Z-shapes to contract.
	Another way to obtain $\widehat{T}$ is
	$\mathtt{cb aaaabcc\underline{baabba}} \linebreak[0]  \ZR 
	\mathtt{cb a\underline{aaa}bcc ba} \ZR 
	\mathtt{\underline{cbaabccba}} \ZR \mathtt{cba}$.
\end{example}


\section{Z-shape reduction algorithm}

We call a string $w$ \emph{pp-irreducible} (proper prefix is irreducible) if its longest proper prefix $\prev{w}=w[1:|w|-1]$ is irreducible.
A string $w$ is said to be \emph{ss-reducible} (solely suffix is reducible) if it is reducible and pp-irreducible.
Clearly a Z-shape occurs in an ss-reducible string as a suffix. By deleting its tail, we obtain an irreducible string.
A pp-irreducible string is either irreducible or ss-reducible.
Strings our online algorithm handles are all pp-irreducible.

Starting with $w=v_0=\varepsilon$, our algorithm repeats the following procedure for $i=1,2,\dots$.
We extend $w = v_{i-1}$ by reading letters from the input string $T$ one by one until it becomes an ss-reducible string $w = u_i$.
Then we reduce $u_i$ to $v_i = \widehat{u}_i$ by deleting the tail of the Z-shape and resume reading letters of $T$.
By repeatedly applying the procedure, we finally obtain the Z-normal form $w = \widehat{T}$.
\begin{example}
	Let us consider $T = \mathtt{cbaaa abccba abba}$ in Example~\ref{ex:toy} as an input.
	The shortest reducible prefix of $T$ is $u_1 = \mtt{cbaaa}$, whose suffix $\mtt{aaa}$ is a Z-shape.
	By reducing the string, we obtain $\widehat{u}_1 = \mtt{cba}$.
	By adding letters from the remaining of the input string $T$, it becomes $u_2 = \mtt{cba abc cba}$, which itself is a Z-shape and reduced to $ \widehat{u}_2 = \mtt{cba}$.
	Reading further letters of $T$ gives $u_3 = \mtt{cbaabba}$, which shall be reduced to $\widehat{u}_3 = \mtt{cba}$.
	This is the Z-normal form $\widehat{T}$ of $T$.
\end{example}

\subsection{Z-shape detection}
We first discuss how to find a Z-shape in a pp-irreducible string.
\begin{lemma}\label{lem:uniqueZ}
	Every ss-reducible string has a unique nonempty suffix palindrome
	and thus has a unique Z-shape.
\end{lemma}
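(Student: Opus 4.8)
The statement has two halves, and the easy half also delivers the ``thus''. For \emph{existence}: since $w$ is reducible it contains a Z-shape, and since $w$ is pp-irreducible the string $\prev{w}$ is irreducible, so no Z-shape of $w$ can avoid the last position; hence every Z-shape of $w$ ends at position $|w|$ and occurs as a suffix $x\rev{x}x$, whose tail $\rev{x}x$ is a nonempty (even) suffix palindrome. For the reduction of the second claim to the first: every Z-shape of $w$ is a suffix, so it is completely determined by its length $3|x|$, i.e. by the radius $|x|$ of its tail, and distinct Z-shapes therefore have tails of distinct radii. Thus ``take the tail'' injects the Z-shapes of $w$ into its nonempty suffix palindromes; a unique suffix palindrome forces at most one Z-shape, while reducibility forces at least one, so the Z-shape is unique once uniqueness of the suffix palindrome is proved.

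The substance of the lemma is that a pp-irreducible string cannot carry two distinct nonempty suffix palindromes. I would argue by contradiction, the goal being to exhibit a Z-shape occurrence $\lrangle{p_1,p_2}$ that ends strictly before position $|w|$, which is impossible in a pp-irreducible string. Assume suffix palindromes of radii $r_1<r_2$, with corresponding suffixes $P$ and $Q$. Being the shorter suffix palindrome, $P$ is a suffix of $Q$; since $\rev{P}=P$ and $\rev{Q}=Q$, the equality ``$P$ is a suffix of $Q$'' reflects to ``$P$ is a prefix of $Q$'', so $P$ is a border of $Q$ and $Q$ has period $d=2(r_2-r_1)$. The plan is to turn this periodicity, via the characterization that $\lrangle{p_1,p_2}$ is a Z-shape occurrence iff $\Rad{w}{p_1},\Rad{w}{p_2}\ge p_2-p_1>0$, into such an internal occurrence; note that the centres of $P$ and $Q$ are $|w|-r_1$ and $|w|-r_2$ with $\Rad{w}{|w|-r_1}=r_1$ and $\Rad{w}{|w|-r_2}=r_2$.

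To see the mechanism in the benign regimes, take the candidate pivots $p_1=|w|-r_2$ and $p_2=|w|-r_1$, so $p_2-p_1=r_2-r_1$ and $\Rad{w}{p_1}=r_2$, $\Rad{w}{p_2}=r_1$. If $r_2<2r_1$ then both radii exceed $p_2-p_1$ and the right end $2p_2-p_1=|w|-2r_1+r_2$ is $<|w|$, so $\lrangle{p_1,p_2}$ is an internal Z-shape and we are done. If $r_2=2r_1$ the border forces $Q=PP$; writing $P=\rev{v}v$ with $|v|=r_1$, the length-$3r_1$ prefix $\rev{v}\,v\,\rev{v}$ of $Q$ is a Z-shape (tail $v\rev{v}$) whose right end is $|w|-r_1<|w|$, again a contradiction.

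The main obstacle is the regime $r_2>2r_1$, where $P$ sits so deep inside $Q$ that $\Rad{w}{p_2}=r_1<p_2-p_1$ and the naive pivot pair fails; a single palindromic reflection only reproduces the same inequality. A check of small instances shows that here the witnessing internal Z-shape can be \emph{shorter} than the half-period and may even have to borrow characters lying to the left of $Q$, so the argument cannot be confined to $Q$ alone. I expect two viable routes through this case: (i) a Fine--Wilf/periodicity lemma asserting that a palindrome of period $d$ whose host string continues this period far enough on the left must contain $v\rev{v}v$ for a suitable $v$; or (ii) an induction (on $|w|$, or on the largest suffix-palindrome radius) that, from two far-apart suffix palindromes, extracts a strictly smaller configuration still carrying two suffix palindromes and invokes the inductive hypothesis. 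Pinning down exactly which positions constitute the internal Z-shape when the two radii are far apart is the crux; the rest is bookkeeping.
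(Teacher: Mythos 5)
Your existence argument and the reduction of Z-shape uniqueness to suffix-palindrome uniqueness are fine, and your handling of the regimes $r_2 < 2r_1$ and $r_2 = 2r_1$ is correct as far as it goes. But the case $r_2 > 2r_1$, which you leave open, is not bookkeeping --- it is where the entire content of the lemma sits, and the statement you propose to prove there (``a pp-irreducible string cannot carry two distinct nonempty suffix palindromes'') is \emph{false}. The string $w = \mtt{aabbaa}$ is irreducible (hence pp-irreducible, though not ss-reducible, so the lemma itself is untouched) and has two nonempty suffix palindromes, $\mtt{aa}$ and $\mtt{aabbaa}$, of radii $1$ and $3$; consistently with your partial results it satisfies $r_2 = 3 > 2 = 2r_1$. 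Consequently neither of your proposed routes --- a Fine--Wilf periodicity argument or an induction extracting a smaller two-palindrome configuration --- can close the hard case, because they work only from pp-irreducibility and the two palindromes, and that data does not suffice. The missing ingredient is the ss-reducibility hypothesis: one of the two suffix palindromes is the \emph{tail} of a suffix Z-shape $\lrangle{|w|-2s, |w|-s}$, which supplies the extra fact $\rad_w(|w|-2s) \ge s$ about a position $2s$ from the end. Your uniqueness argument never invokes this, and that is the gap.

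The paper's proof uses exactly this extra fact. Writing $c$ for the center of the second suffix palindrome, it splits on the position of $c$ relative to the two pivots. When $c \le |w|-2s$ (your hard regime with $r_2 = |w|-c \ge 2s$), it mirrors the \emph{whole Z-shape occurrence} through the big palindrome at $c$, obtaining the non-suffix occurrence $\lrangle{2c+s-|w|,\,2c+2s-|w|}$ --- note this mirrors both pivots and so needs $\rad_w(|w|-2s)\ge s$, not merely the tail palindrome. When $c > |w|-s$ (also inside your hard regime when $s > 2\rad_w(c)$), it mirrors the small palindrome across the tail's center $|w|-s$ and pairs the image with the left pivot, giving $\lrangle{|w|-2s,\,|w|-2s+\rad_w(c)}$, again leaning on $\rad_w(|w|-2s) \ge s$. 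The intermediate case $|w|-2s < c < |w|-s$ corresponds roughly to your benign pivot-pair computation. So the repair is not more periodicity structure on the pair of palindromes, but reintroducing the suffix Z-shape as one of the two compared objects and mirroring it, or mirroring into it.
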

\begin{proof}
	Suppose that a string $w$ has a suffix Z-shape occurrence $\lrangle{|w|-2s,|w|-s}$ and a nonempty suffix palindrome centered at $c$ such that $1 \le \rad_w(c) \neq s$.
	It suffices to show that $w$ contains another Z-shape which is not a suffix.
	Figure~\ref{fig:uniqueP} may help understanding the following arguments.
	(a) If $c \le |w|-2s$, then one can find the mirrored occurrence $\lrangle{2c+s-|w|,2c+2s-|w|}$ of that Z-shape with respect to $c$, which is of course not a suffix.
	(b) If $|w|-2s < c < |w|-s$, then $\lrangle{|w|-2s,c}$ is a nonsuffix Z-shape occurrence in $w$.
	(c) If $c > |w|-s$, then one can find the mirrored occurrence of the suffix palindrome with respect to $|w|-s$, 
	whose center is $|w|-2s+\rad_w(c)$.
	That is, $\lrangle{|w|-2s,|w|-2s+\rad_w(c)}$ is a nonsuffix Z-shape occurrence in $w$.
\qed\end{proof}
\begin{figure}[t]
\centering
\begin{tikzpicture}[scale=0.8]\small
\draw(-1, 3) node{\scriptsize(a)};
\zshape[thick]{2.2}{4.4}{1.2}
\zshape{7.6}{9.8}{2.5}
\palin{6}{6}{1.9}
\draw(12, 2.9) node{\scriptsize$|w|$};
\draw(9.8, 2.9) node{\scriptsize$|w|-s$};
\draw(7.6, 2.9) node{\scriptsize$|w|-2s$};
\draw(6, 2.3) node{\scriptsize$c$};
\draw(2.1, 1.6) node{\scriptsize$2c+s-|w|$};
\draw(4.5, 1.6) node{\scriptsize$2c+2s-|w|$};
\draw(-1, 0) node{\scriptsize(b)};
\zshape{4}{8}{-0.5}
\palin{7}{5}{-1.1}
\zshape[thick]{4}{7}{-1.7}
\draw(12, -0.1) node{\scriptsize$|w|$};
\draw(8, -0.1) node{\scriptsize$|w|-s$};
\draw(4, -0.1) node{\scriptsize$|w|-2s$};
\draw(7, -0.7) node{\scriptsize$c$};
\draw(-1, -3) node{\scriptsize(c)};
\zshape{4}{8}{-3.5}
\palin{9}{3}{-4.1}
\zshape[thick]{4}{7}{-4.7}
\draw(12, -3.1) node{\scriptsize$|w|$};
\draw(8, -3.1) node{\scriptsize$|w|-s$};
\draw(4, -3.1) node{\scriptsize$|w|-2s$};
\draw(9, -3.7) node{\scriptsize$c$};
\draw(7, -4.4) node{\scriptsize$|w|-2s+\rad_w(c)$};
\end{tikzpicture}
\caption{\label{fig:uniqueP}%
If $w$ has a suffix Z-shape and two suffix palindromes (drawn with thin lines), it has a non-suffix Z-shape (thick lines).
}
\end{figure}
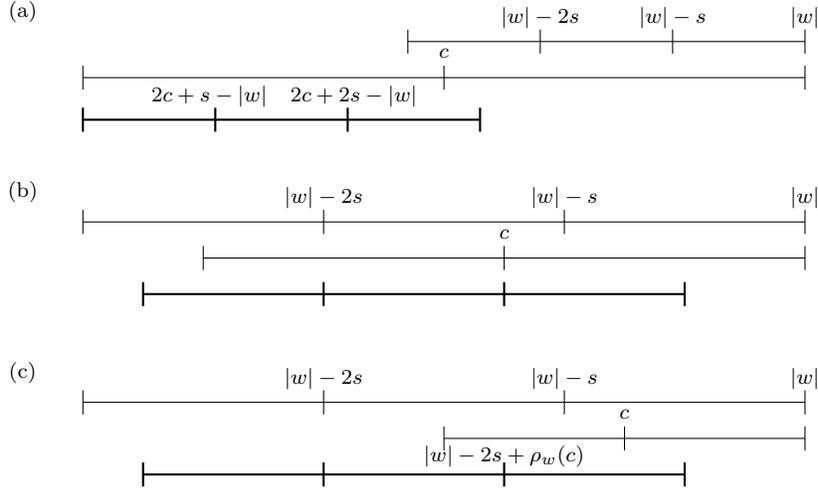
\begin{corollary}\label{cor:uniqueZ}
	Let $c$ be the center of a nonempty suffix palindrome of a pp-irreducible string $w$.
	Then, $w$ is ss-reducible if and only if $\rad_w(c - \rad_w(c)) \ge \rad_w(c)$.
\end{corollary}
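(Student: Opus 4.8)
The plan is to reduce both implications to the radius characterization of Z-shape occurrences stated in the Preliminaries, namely that $\lrangle{p_1,p_2}$ is a Z-shape occurrence if and only if $\rad_w(p_1),\rad_w(p_2)\ge p_2-p_1>0$, applied at the two pivots $p_2=c$ and $p_1=c-\rad_w(c)$. First I would fix notation by writing $r=\rad_w(c)$ and observing that, because the palindrome centered at $c$ is a suffix, its right arm ends exactly at position $|w|$; since it already touches the right boundary it cannot be extended rightwards, so it is the maximal palindrome at $c$ and hence $c+r=|w|$. This identifies the suffix palindrome's radius with $\rad_w(c)$, which is what lets me speak of $\rad_w(c)$ rather than of an a priori smaller radius.

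For the ``if'' direction I would assume $\rad_w(c-r)\ge r$ and set $p_1=c-r$, $p_2=c$. Then $p_2-p_1=r>0$, $\rad_w(p_2)=r=p_2-p_1$, and $\rad_w(p_1)\ge r$ by hypothesis, so the characterization immediately yields that $\lrangle{c-r,c}$ is a Z-shape occurrence. Its right pivot $c$ is the center of a suffix palindrome of radius $r=p_2-p_1$, so the constituent palindrome $\rev{x}x$ (with $|x|=r$) is a suffix of $w$ and the whole Z-shape ends at $|w|$; thus the Z-shape occurs as a suffix and $w$ is reducible. Since $w$ is pp-irreducible by assumption, reducibility makes it ss-reducible.

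For the ``only if'' direction I would assume $w$ is ss-reducible and invoke Lemma~\ref{lem:uniqueZ}: $w$ then has a unique nonempty suffix palindrome and a unique Z-shape, the latter occurring as a suffix. The given center $c$ must therefore be that of the unique suffix palindrome. The suffix Z-shape $\lrangle{p_1,p_2}$ has its right pivot $p_2$ equal to the center of its suffix constituent $\rev{x}x$, which is a suffix palindrome and hence, by uniqueness, satisfies $p_2=c$ with $|x|=p_2-p_1=\rad_w(c)=r$; consequently $p_1=c-r$. Applying the characterization to this genuine Z-shape occurrence gives $\rad_w(c-r)=\rad_w(p_1)\ge p_2-p_1=r$, as required.

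The step I expect to be the crux is the bookkeeping that ties the given center $c$ to the pivots of the Z-shape: in the forward direction I must argue that the right pivot of the unique suffix Z-shape really coincides with $c$ (this is exactly where the uniqueness in Lemma~\ref{lem:uniqueZ} is indispensable), and in both directions I must keep the identification $\rad_w(c)=r=p_2-p_1$ straight. Degenerate cases such as $c-r\le 0$ need no separate treatment: then $\rad_w(c-r)=0<r$, which correctly signals the absence of any suffix Z-shape, consistent with the characterization refusing to produce a Z-shape occurrence whose left pivot falls outside the string.
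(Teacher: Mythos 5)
Your proof is correct and follows exactly the route the paper intends: the corollary is stated without proof as an immediate consequence of Lemma~\ref{lem:uniqueZ} together with the pivot characterization $\rad_w(p_1),\rad_w(p_2)\ge p_2-p_1>0$, and your argument---identifying $\rad_w(c)$ with the suffix palindrome's radius via $c+\rad_w(c)=|w|$, applying the characterization at $\lrangle{c-\rad_w(c),c}$ for the ``if'' direction, and using the uniqueness from Lemma~\ref{lem:uniqueZ} to pin the right pivot to $c$ for the ``only if'' direction---is precisely that intended derivation, carefully spelled out.
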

There can be several suffix palindromes in an irreducible string.
Lemma~\ref{lem:uniqueZ} implies that only one among those can become\footnote{%
	To avoid lengthy expressions, we casually say that a palindrome centered at $c$ in $x$ \emph{becomes} or \emph{extends to} a (bigger) palindrome in $xy$ when $\rad_x(c) < \rad_{xy}(c)$, without explicitly mentioning several involved mathematical objects that should be understood from the context or that are not important. Other similar phrases should be understood in an appropriate way.} the tail of the unique Z-shape in an ss-reducible string, in which moment the other ones that used to be suffix palindromes are not suffix palindromes any more.
Lemma~\ref{lem:uniqueZ} and Corollary~\ref{cor:uniqueZ} suggest us to keep watching just one (arbitrary) suffix palindrome when reading letters from the input in order to detect a Z-shape.
When the palindrome we are watching has become a non-suffix palindrome, we look for another suffix palindrome to track.
Suppose we are tracking a suffix palindrome centered at $c$ of radius $r = \rad_{w}(c) = |w|-c$ in $w$.
When appending a new letter $t$ from the input to $w$, it is still a suffix palindrome in $wt$ if and only if $wt[c-r]=wt[c+r+1]=t$.
In that case, it is the tail of a Z-shape if and only if $\rad_w(c-r-1) \ge r+1$.

\begin{algorithm2e}[!t]
	\caption{Z-detector} \label{alg:zdet}
	\SetKwFor{Fn}{Function}{}{}
	\SetKwRepeat{Do}{do}{while}
	
	Let $\Pals$ be an empty array and $w = \varepsilon$\;
	\Fn{$\msf{\ZDet}(T)$}{
		Let $T := \texttt{\$} T \texttt{\#}$\tcp*{\texttt{\$} and \texttt{\#} are sentinel symbols}
		$w.\Append(T)$\tcp*{append a new letter from $T$ to $w$}
		\While{there remains to read in $T$}{
			$w.\Append(T)$\;
			$\msf{\ZDetF}(|w|-1)$\;
		}
		\textbf{output} ``No Z-shape'' and \textbf{halt}\;
	}
	\Fn{$\msf{\ZDetF}(c)$}{
		$\msf{Extend}(c)$\;
		\For(\tcp*[f]{in increasing order}){$r \ot 1$ \textbf{to} $\Pals[c]$}{
			\lIf{$r+\Pals[c-r] < \Pals[c]$\label{all:detif}}{%
				$\Pals[c+r] \ot \Pals[c-r]$%
			}\lElse{%
				$\msf{\ZDetF}(c+r)$ and
				\textbf{break}%
			}
		}
	}
	\Fn{$\msf{Extend}(c)$}{  \label{alg:extend}
		$r \ot |w| - c - 1$\;
		\While{$w[c + r + 1] = w[c - r]$}{
			$r \ot r+1$\; 
			\lIf{$\Pals[c - r] \ge r$}{%
				\textbf{ouput} $\lrangle{c-r,c}$ and \textbf{halt}\label{all:find}}
			$w.\Append(T)$\;
		}
		$\Pals[c] \ot r$\;
	}
\end{algorithm2e}
Before presenting our own algorithm for calculating the Z-normal form of an input string, we present an algorithm that finds a Z-shape in an input string following the above tactics.
The algorithm is essentially same as Manacher's algorithm~\cite{Manacher75}, which computes the maximum radius at every position of an input string.
Algorithm~\ref{alg:zdet} outputs the first occurrence of a Z-shape of an input unless it is Z-irreducibe, while computing the maximum radius at each position in an input string.
Commenting out Line~\ref{all:find} gives his original algorithm with slightly different appearance.\footnote{%
Another inessential change from Manacher's algoritm is in Line~\ref{all:detif}.
Our algorithm recurses and breaks when it is confirmed that $r+\Pals[c-r] \ge \Pals[c]$,
while for his original it is only when $r+\Pals[c-r] = \Pals[c]$.
When $r+\Pals[c-r] > \Pals[c]$, his algorithm lets $\Pals[c+r] = \Pals[c] - r$, based on Lemma~\ref{lem:prefast}, and continues iterating the \textbf{for} loop.
}
The algorithm reads letters from the input one by one, while focusing on the leftmost (thus biggest) suffix palindrome.
The algorithm computes the maximum radius at each position from left to right and stores those values in an array $\Pals$.

The function $\msf{Extend}(c)$ calculates $\Pals[c]$ by naively comparing letters on the left and right in the same distance from $c$, knowing that the radius is at least $|w|-c-1$.
That is, the palindrome at $c$ is a suffix palindrome in $\prev{w}=w[1:|w|-1]$ but it is not certain that it is the case in $w$.
When we know that the palindrome at $c$ cannot be extended any further, i.e., that the palindrome at $c$ is not a suffix any more,
we have $w[c+r+1]=w[c-r]$ for all $0 \le r < \rad_w(c)$ and $w[c+r+\rad_w(c)] \neq w[c-\rad_w(c)]$, where $|w|=c+\rad_w(c)+1$. 
Due to the symmetry, the maximum radius at a position $c+r$ in the right arm of a suffix palindrome at $c$ coincides the one at the corresponding position $c-r$ in the left arm for $r < \rad_w(c)$, 
as long as the left end of the palindrome at $c-r$ does not reach the left end of the palindrome at $c$. 
The function $\msf{\ZDetF}(c)$ copies the value of $\Pals[c-r]$ to $\Pals[c+r]$ for $r \le \Pals[c]$ as long as $r+\Pals[c-r] < \Pals[c]$.
If $r+\Pals[c-r] \ge \Pals[c]$, it is not necessarily the case that $\rad_w(c+r) = \rad_w(c-r)$, so we call $\ZDetF(c+r)$ to calculate the radius at $c+r$.
At that time, still we know that $\rad_w(c+r) \ge \rad_w(c)-r = |w|-(c+r)-1$.
That is, $c+r$ is a suffix palindrome center in $\prev{w}$ and is a candidate of a suffix palindrome center in $w$.
Thus the function $\msf{Extend}(c+r)$ starts comparison of letters on the ends of both arms of the palindrome at $c+r$.

By the correctness of Manacher's algorithm and Corollary~\ref{cor:uniqueZ}, we see that Algorithm~\ref{alg:zdet} outputs the Z-shape occurrence of the shortest ss-reducible prefix of the input.
If the input has no Z-shape, it halts with the array $\Pals$ such that $\Pals[c]=\rad_w(c)$ for all the positions $c$.

\subsection{Palindrome chain and stable positions}
A nice property of Algorithm~\ref{alg:zdet} is that when extending a suffix palindrome at $c$, for all positions $d < c$, we have already computed $\Pals[d] = \rad_w(d)$ so that the Z-shape with right pivot $c$ can be detected certainly (if one exists).
One may think of using Algorithm~\ref{alg:zdet} to compute the Z-normal form by deleting the tail of the found Z-shape.
However, deleting a Z-shape tail alters the already calculated maximal radii even on positions that are not deleted.
Maintaining those values is not a trivial issue. 
The following example demonstrates that it should take more than linear time for Z-normalization if we adhere to keep the nice property of Algorithm~\ref{alg:zdet}.
\begin{example}\label{ex:deletetail}
	Consider input string
	$T_m = v_m \mathtt{a}^{2^m}$ where
	$v_0 = \mathtt{ba}$ and 
	$v_i = v_{i-1}^R \mtt{a} t_i t_i \mtt{a} v_{i-1}$ where $t_i$ is a letter not in $v_{i-1}$ for each $i \ge 1$.
	Note that $v_i^R = v_i$ unless $i=0$ and $\mtt{a} v_i$ is a suffix of $v_j$ for all $i <j$.
	The length $n$ of $T_m$ is $\mrm{O}(2^m)$. 
	For example, $v_2 = \mathtt{aba11aba a22a aba11aba}$ where $t_1 = \mathtt{1}$ and $t_2 = \mathtt{2}$.	
	Here $v_m \mtt{aa}$ is an ss-reducible string that has a suffix Z-shape whose tail is $\mtt{aa}$,
	which is the unique nonempty suffix palindrome by Lemma~\ref{lem:uniqueZ}.
	If we provide Algorithm~\ref{alg:zdet} with $v_m \mtt{aa}$, it calculates $\Pals[i]=\rad_{v_m\mtt{aa}}(i)$ for all $i \le |v_m|$
	before detecting the suffix Z-shape by $\Pals[|v_m \mtt{a}|] = \Pals[|v_m|] = 1$.
	The Z-normal form of $v_m \mtt{aa}$ is $v_m$.
	The irreducible string $v_m \mtt{a}$ has $m$ suffix palindromes;
	$\mtt{aa}$ and $\mtt{a}v_i\mtt{a}$ for each $i=1,\dots,m-1$.
	Let $c_i$ be the center of the suffix occurrence of the palindrome $\mtt{a}v_i\mtt{a}$ for $1 \le i < m$.
	Here $\rad_{v_m \mtt{aa}}(c_i) =\rad_{v_m \mtt{a}}(c_i) = \rad_{v_m}(c_i) + 1$.
	Therefore, if we maintain the maximum radius of every palindrome each time we contract a Z-shape, it takes at least $\Omega(m \times 2^m) = \Omega(n \log n)$ time to get the Z-normal form $v_m$ of $T_m$.
\end{example}
Therefore, we have to partly give up to maintain the exact values of maximal radii.
However, under a certain condition, maximal radii become stable and any appended string will not alter the values.
This subsection introduces key technical notions and discusses the condition for positions to be stable.

For distinct positions $c$ and $d$ in an pp-irreducible string $w$,
let us write $c \plink_w d$ if $c \le d-\rad_w(d) \le d \le c+\rad_w(c) \le d+\rad_w(d)$.
Clearly $c \plink_w d$ implies $\rad_w(c) \ge 1$. 
If $c = d-\rad_w(d)$, then $\lrangle{c,d}$ is a Z-shape occurrence in $w$.
Actually the condition $c \le d-\rad_w(d)$ in the above definition is redundant for a pp-irreducible string:
one can see that if $d-\rad_w(d) < c < d \le c+\rad_w(c) \le d+\rad_w(d)$, then $\lrangle{c,d}$ is a non-suffix Z-shape.
Since this paper discusses only pp-irreducible strings, to claim $c \plink_w d$, it is enough to confirm that $c < d \le c+\rad_w(c) \le d+\rad_w(d)$.
A \emph{palindrome chain from $c_0$ in $w$} is a sequence $\vec{c}=(c_0,\dots,c_k)$ of positions in $w$ such that $c_{i-1} \plink_w c_{i}$ for each $i \in [1 : k]$.
We have $\rad_w(c_i) \ge 1$ for each $i \in [0:k-1]$.
The \emph{frontier of the palindrome chain $\vec{c}$ in $w$} is the position $F_w(\vec{c}) = c_k + \rad_w(c_k)$,
and the \emph{maximum frontier from a position $c$} is
\[
\mscr{F}_w(c) = \max\{\, F_w(\vec{c}) \mid \text{$\vec{c}$ is a palindrome chain from $c$}\,\}
\,.\]
A palindrome chain from $c$ is \emph{maximal} if its frontier is $\mscr{F}_w(c)$.
The \emph{originator} $\mscr{A}_w(d)$ of a position $d$ in $w$ is the smallest position $\mscr{A}_w(d)=c$ such that $c \le d \le \mscr{F}_w(c)$.
\begin{figure}[t]
\centering
\begin{tikzpicture}[scale=0.8]\small
\palin{3.5}{3.5}{1.5}
\palin{7}{3}{1}
\palin{9.5}{2}{0.5}
\palin{11}{1}{0}
\foreach \x/\s in {-1/x,0/a,1/b,2/b,3/c,4/d,5/d,6/e,7/e,8/d,9/d,10/c,11/b,12/b,13/a,14/a,15/b,16/b,17/c,18/d,19/d,20/c,21/b,22/b,23/c,24/y}
	\draw(\x*0.5, 2) node[anchor=west]{\large$\mathstrut\mathtt{\s}$};
\foreach \x in {1,...,26}
	\draw(\x*0.5-0.7, 2.5) node{\scriptsize$\x$};
\end{tikzpicture}
\caption{\label{fig:pchain}%
A string $w = \mathtt{xabbcddeeddcbbaabbcddcbbcy}$ has a palindrome chain $(8,15,20,23)$, whose frontier is $25$.
The originator of any position between 8 and 25 is 8.}
\end{figure}
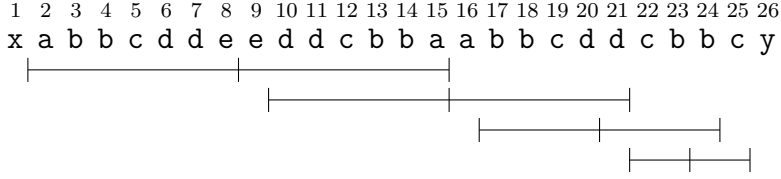
Figure~\ref{fig:pchain} illustrates a palindrome chain in a string $w = \mathtt{xabbcddeeddcbbaabbcddcbbcy}$.                                                                               
\begin{observation}\label{obs:frontier}
Every pp-irreducible string $w$ is uniquely factorized as $w = w_1 \dots w_k$ so that $\mscr{F}_w(|w_1 \dots w_{i-1}|+1)=|w_1 \dots w_{i}|$ for each $i \in [1:k]$
 and $\mscr{A}_w(c) = |w_1 \dots w_{i-1}|+1$ if $|w_1 \dots w_{i-1}| < c \le |w_1 \dots w_{i}| $.
\end{observation}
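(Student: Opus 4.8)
The plan is to build the factorization greedily from the frontier function and then reduce the whole statement to a single closure property of maximal frontiers. Concretely, I would set $p_0 = 0$ and $p_i = \mscr{F}_w(p_{i-1}+1)$, writing $a_i = p_{i-1}+1$ for the left boundary of the $i$-th block and $w_i = w[a_i : p_i]$. Two preliminary facts are immediate. First, using the trivial one-element chain $(a_i)$ we get $p_i = \mscr{F}_w(a_i) \ge a_i + \rad_w(a_i) \ge a_i > p_{i-1}$, and since the frontier $c_k + \rad_w(c_k)$ of any chain is bounded by $|w|$, the sequence $p_0 < p_1 < \dots$ strictly increases up to $|w|$; hence the $w_i$ form a genuine factorization of $w$ and the identity $\mscr{F}_w(|w_1\cdots w_{i-1}|+1) = |w_1\cdots w_i|$ holds by construction. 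Uniqueness is then forced: the required frontier identity pins down $p_1 = \mscr{F}_w(1)$, then $a_2 = p_1+1$, and so on, so no other factorization satisfies it.

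For the originator claim I would show $\mscr{A}_w(c) = a_i$ whenever $a_i \le c \le p_i$. The upper bound is easy: $a_i \le c \le p_i = \mscr{F}_w(a_i)$ exhibits $a_i$ as a legitimate originator candidate, so $\mscr{A}_w(c) \le a_i$. Everything then hinges on ruling out smaller candidates, i.e. on the following \emph{block-closure} property, which I regard as the heart of the argument:
\[
a_j \le c' \le p_j \implies \mscr{F}_w(c') \le p_j .
\]
Granting it, any $c' < a_i$ lies in an earlier block $j \le i-1$, whence $\mscr{F}_w(c') \le p_j \le p_{i-1} = a_i - 1 < c$; thus no $c' < a_i$ can be an originator of $c$, and $\mscr{A}_w(c) = a_i$ follows.

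To prove block closure I would work with the simplified characterization valid in pp-irreducible strings, namely $c \plink_w d \iff c < d \le R(c) \le R(d)$ with $R(x) = x + \rad_w(x)$, so that along any palindrome chain the positions strictly increase, each lies in the right arm of its predecessor, and the reaches $R$ are non-decreasing, the frontier being the last (largest) reach. Fix a maximal chain $(a_j = d_0 \plink_w \dots \plink_w d_m)$ with $R(d_m) = p_j$. Since $d_{l+1} \le R(d_l)$ and $R(d_{l+1}) \ge R(d_l)$, the right-arm intervals $[d_l, R(d_l)]$ overlap consecutively and their union is exactly $[a_j, p_j]$; so any $c'$ in the block lies in some $[d_l, R(d_l)]$. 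Suppose, for contradiction, that a chain from $c'$ reaches a frontier $R(e) > p_j \ge R(d_l)$. Along that chain $R$ is non-decreasing; in the only nontrivial case it starts below $R(d_l)$, so there is a first position $c_t$ with $R(c_t) \ge R(d_l)$. Every earlier position has reach $< R(d_l)$, which inductively keeps it strictly inside the right arm of $d_l$ and in particular forces $c_t \le R(c_{t-1}) < R(d_l)$ while $c_t > d_l$. Hence $d_l \plink_w c_t$, and splicing the prefix $(d_0,\dots,d_l)$ onto the tail $(c_t,\dots,e)$ of the $c'$-chain yields a chain from $a_j$ with frontier $R(e) > p_j$, contradicting $\mscr{F}_w(a_j) = p_j$.

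The main obstacle is precisely this splicing step: one must verify that a position realizing a larger reach can always be grafted onto the canonical chain from the block's left boundary, and this is where pp-irreducibility, through the clean $\plink_w$ characterization and the non-decreasing behaviour of $R$, is essential. The degenerate situations — $c' = d_l$, the chain from $c'$ having length zero, or $R(c') \ge R(d_l)$ so that $d_l \plink_w c'$ holds directly — I would dispatch by the same grafting. Once block closure is in hand, the originator identity and the uniqueness of the factorization follow as above.
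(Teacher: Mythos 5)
Your proposal is correct, and it is essentially the paper's own route: the paper states this result as an unproved Observation, but the ``block-closure'' property at the heart of your argument is precisely its Corollary~\ref{cor:pchain} (built on Lemma~\ref{lem:pchain}), and your grafting step --- locating $c'$ in the right arm of a chain element $d_l$, taking the first chain position $c_t$ with $R(c_t) \ge R(d_l)$ so that $d_l \plink_w c_t$, and splicing the prefix chain onto the tail --- is the same chain-splicing used in the paper's proofs of those lemmas, which appear just after the Observation. Your write-up simply inlines that deferred machinery, correctly invoking pp-irreducibility to justify the simplified characterization $c \plink_w d \iff c < d \le c+\rad_w(c) \le d+\rad_w(d)$ on which the splice depends.
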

\begin{defn}\label{def:stable}
	Let $w$ be a pp-irreducible string and $c$ a position in $w$.
	We say that $c$ is \emph{stable} in $w$, if $\mscr{F}_w(c) < |w|$.
	Moreover, $c$ is \emph{strongly stable} if all positions in $[1: c]$ are stable.
\end{defn}

\begin{lemma}\label{lem:stable}
If $c$ is stable in a pp-irreducible string $w$, then $|\widehat{wx}|>\mscr{F}_w(c)$ and $w[c:\mscr{F}_w(c)+1]=\widehat{wx}[c:\mscr{F}_w(c)+1]$ for any string $x$, unless $x$ has a prefix $y$ such that $|\widehat{wy}| < c$.
\end{lemma}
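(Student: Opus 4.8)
The plan is to prove both conclusions simultaneously by tracking the left-to-right, suffix-reducing computation of $\widehat{wx}$, which is legitimate because $\ZR$ is confluent (Aslam and Rivest). Since $\prev{w}$ is irreducible, reading the first $|w|$ letters triggers at most the single reduction that turns $w$ into $\widehat{w}$; thereafter I read $x$ one letter at a time. A key preliminary observation is that appending one letter to an \emph{irreducible} string $u'$ and renormalizing costs at most one reduction: $u'a$ is pp-irreducible, and if it is ss-reducible, then chopping the tail of its unique suffix Z-shape (Lemma~\ref{lem:uniqueZ}) leaves a prefix of the irreducible string $u'$, which is therefore itself irreducible. Hence the whole computation is a sequence of single chops, and each chop deletes a suffix while \emph{keeping intact the prefix ending at its left pivot}.

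I will maintain the invariant that, as long as no chop so far has cut the length below $c$, the current normal form $u$ satisfies $|u|>\mscr{F}_w(c)$ and $u[1:\mscr{F}_w(c)+1]=w[1:\mscr{F}_w(c)+1]$; this frozen prefix immediately yields the stated conclusion on $[c:\mscr{F}_w(c)+1]$. A chop with left pivot $p_1$ keeps exactly $[1:p_1]$, so it preserves the invariant when $p_1\ge\mscr{F}_w(c)+1$, and it is an over-reduction (triggering the exception $|\widehat{wy}|<c$) when $p_1<c$. The entire substance of the argument is thus to rule out the \emph{dangerous} case $c\le p_1\le\mscr{F}_w(c)$.

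To rule it out, suppose such a chop occurs in the pre-chop string $u$, with left pivot $p_1$ and right pivot $p_2=p_1+s$; note $|u|>\mscr{F}_w(c)+1$ by the invariant. By Corollary~\ref{cor:uniqueZ} the palindrome at $p_1$ in $u$ has radius at least $s$, and the suffix palindrome at $p_2$ reaches the right end $|u|$. Taking $g=p_1$ if $p_2\ge\mscr{F}_w(c)+1$ and $g=p_2$ otherwise, I obtain a palindrome in $u$ centred at some $g\in[c:\mscr{F}_w(c)]$ whose right arm reaches at least position $\mscr{F}_w(c)+1$. From $g+\rad_u(g)\ge\mscr{F}_w(c)+1$ together with $\rad_u(g)\le g$ I get $g\ge(\mscr{F}_w(c)+1)/2$, so the interval $[2g-\mscr{F}_w(c):\mscr{F}_w(c)+1]$ lies inside the frozen prefix; being symmetric about $g$ and equal to $w$ there, it is a palindrome of $w$ as well. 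Hence $\rad_w(g)\ge\mscr{F}_w(c)+1-g$ and $g+\rad_w(g)>\mscr{F}_w(c)$.

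Finally I splice $g$ into a maximal palindrome chain $(c=c_0,\dots,c_k)$ from $c$, using two facts that are immediate from the definition of $\plink_w$: the reaches $c_j+\rad_w(c_j)$ are non-decreasing along the chain, so each is $\le\mscr{F}_w(c)$, and the intervals $[c_j:c_j+\rad_w(c_j)]$ cover $[c:\mscr{F}_w(c)]$. Thus $g$ lies in some $[c_j:c_j+\rad_w(c_j)]$. If $g=c_j$ then $g+\rad_w(g)=c_j+\rad_w(c_j)\le\mscr{F}_w(c)$, contradicting $g+\rad_w(g)>\mscr{F}_w(c)$; and if $g>c_j$ then $c_j\plink_w g$ (the required inequalities hold since $c_j+\rad_w(c_j)\le\mscr{F}_w(c)<g+\rad_w(g)$), so the chain $(c_0,\dots,c_j,g)$ has frontier $g+\rad_w(g)>\mscr{F}_w(c)$, contradicting the maximality of $\mscr{F}_w(c)$. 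This contradiction eliminates the dangerous case and closes the induction; the same reasoning applied with $x=\varepsilon$ (the string $u=w$) handles the initial reduction of $w$ to $\widehat{w}$. I expect the main obstacle to be the third step: transferring a palindrome of the reduced-and-extended string $u$ back to a palindrome of the original $w$ through the frozen prefix, where the index bookkeeping around the boundary position $\mscr{F}_w(c)+1$ must be handled carefully; the splicing step, by contrast, is purely combinatorial once the two chain facts are isolated.
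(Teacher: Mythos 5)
Your proof is correct, and its skeleton coincides with the paper's: both realize $\widehat{wx}$ as a left-to-right sequence of single suffix chops (legitimate by confluence, since each intermediate normal form plus one letter is pp-irreducible and Lemma~\ref{lem:uniqueZ} gives a unique chop), and both kill a hypothetical chop with left pivot in $[c:\mscr{F}_w(c)]$ by splicing a palindrome into a maximal palindrome chain from $c$. Where you genuinely diverge is in the execution of that key step. The paper splices the pivots of the offending suffix Z-shape into the chain \emph{inside the current string} to manufacture a chain of frontier $|w|$, contradicting stability, and then disposes of the inductive step with the single sentence ``the same argument as the base case applies to $ut$''---which, read literally, leaves two things unsaid: that $u=\widehat{wv}$ agrees with $w$ also on $[1:c-1]$ (the chain's palindromes have left arms crossing $c$, while the stated induction hypothesis only covers $[c:\mscr{F}_w(c)+1]$), and what exactly a frontier-$|ut|$ chain in the \emph{modified} string $ut$ contradicts, given that stability of $c$ in $ut$ is not an available hypothesis. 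Your version fills precisely these holes: you maintain the stronger frozen-prefix invariant on all of $[1:\mscr{F}_w(c)+1]$, consolidate the two pivots into a single center $g\in[c:\mscr{F}_w(c)]$ whose right arm reaches $\mscr{F}_w(c)+1$ (the case split on $p_2$ versus $\mscr{F}_w(c)+1$ is clean and exhaustive), truncate that palindrome at the boundary, pull it back into $w$ itself through the frozen prefix (the inequality $\rad_u(g)\le g$ guaranteeing $2g-\mscr{F}_w(c)\ge 1$ is exactly the bookkeeping needed), and then contradict the \emph{maximality} of $\mscr{F}_w(c)$ in the fixed string $w$ rather than stability in a moving string. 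What the paper's route buys is brevity; what yours buys is a self-contained induction in which every palindrome and every chain lives in $w$, so no stability-type property of intermediate strings ever has to be transferred. Two small notes: your citation of Corollary~\ref{cor:uniqueZ} for $\rad_u(p_1)\ge s$ should really be to the definition of a Z-shape occurrence ($\rad_u(p_1),\rad_u(p_2)\ge p_2-p_1$), and your splicing step $c_j<g\le c_j+\rad_w(c_j)\le g+\rad_w(g)\Rightarrow c_j\plink_w g$ silently relies on the paper's remark that this abbreviated check suffices in pp-irreducible strings---it does, but it is worth saying so explicitly since the full definition of $\plink_w$ also demands $c_j\le g-\rad_w(g)$.
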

\begin{proof}
We can show the lemma by induction on $|x|$.
For $x=\varepsilon$, suppose $|\widehat{w}| \ge c$ but the claim does not hold.
Then $w$ must have a suffix Z-shape $\lrangle{p,(p+|w|)/2}$ such that $c \le |\widehat{w}| = p < \mscr{F}_w(c)$.
Let $(c_0,\dots,c_k)$ be a maximal palindrome chain where $c=c_0$ and $F_w(c_0,\dots,c_k)=\mscr{F}_w(c)$.
If $p > c_k$, then $(c_0,\dots,c_k,(p+|w|)/2)$ is a palindrome whose frontier is $|w|$, which is a contradiction.
Otherwise there exists $i$ such that $c_i < p \le c_{i+1}$.
Since $\lrangle{c_i,p}$ is not a Z-shape, $(c_0,\dots,c_i,p,(p+|w|)/2)$ is a palindrome chain whose frontier is $|w|$.

For $x=vt$ with $v \in \Sigma^*$ and $t \in \Sigma$, let $u=\widehat{wv}$.
By the induction hypothesis, $w[c:\mscr{F}_w(c)]=u[c:\mscr{F}_w(c)]$, which implies $w[c:\mscr{F}_w(c)]=ut[c:\mscr{F}_w(c)]$, unless $v$ has a prefix $y$ such that $|\widehat{wy}| < c$.
The same argument as the base case applies to $ut$, which completes the proof.
\qed\end{proof}
Lemma~\ref{lem:stable} implies that if $c$ is stable in $w$, then it remains stable in $\widehat{wx}$ unless $x$ has a prefix $y$ such that $|\widehat{wy}| < c$.
\begin{corollary}\label{cor:stronglystable}
If $c$ is strongly stable in $w$, then
 we have $|\widehat{wx}| > \mscr{F}_w(c)$
 and $c$ is strongly stable in $\widehat{wx}$ for any $x$.
\end{corollary}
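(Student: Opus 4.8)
The plan is to derive the corollary from Lemma~\ref{lem:stable} by showing that its exceptional clause never applies once $c$ is strongly stable. Concretely, I would prove by induction on $|x|$ the combined statement that $c$ is strongly stable in $\widehat{wx}$ and $|\widehat{wx}| > \mscr{F}_w(c)$. The crux is the single fact that \emph{no prefix of $x$ can reduce $w$ below length $c$}, i.e. $|\widehat{wy}| \ge c$ for every prefix $y$ of $x$. Once this is secured, the exception ``$x$ has a prefix $y$ with $|\widehat{wy}| < d$'' of Lemma~\ref{lem:stable} is vacuous \emph{simultaneously} for every position $d \le c$ (since $|\widehat{wy}| \ge c \ge d$), so the lemma applies to each such $d$ and delivers everything needed.

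First I would rule out deletions below $c$ by a minimal-counterexample argument. Suppose some prefix $y$ of $x$ satisfies $|\widehat{wy}| < c$, and take the shortest such $y = y' t$ with $t \in \Sigma$. By minimality $u := \widehat{wy'}$ has length $\ge c$, and by the induction hypothesis $c$ is strongly stable in $u$, so every position in $[1:c]$ is stable in $u$. Now $\widehat{wy} = \widehat{ut}$ and $ut$ is pp-irreducible because $u$ is irreducible, hence $ut$ is either irreducible or ss-reducible. Since its length drops, $ut$ must be ss-reducible, with a suffix Z-shape $\lrangle{p_1,p_2}$ (the unique one, by Lemma~\ref{lem:uniqueZ}) and $|\widehat{ut}| = p_1 < c$. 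The crucial point is that $p_1 < c$, so $p_1$ is \emph{stable} in $u$, and when I apply Lemma~\ref{lem:stable} to $u$, the one-letter string $t$, and the position $p_1$, the exceptional clause cannot fire: its only nontrivial instance would require $|\widehat{ut}| < p_1$, whereas in fact $|\widehat{ut}| = p_1$. Hence the lemma forces $|\widehat{ut}| > \mscr{F}_u(p_1) \ge p_1 = |\widehat{ut}|$, a contradiction. It is essential here to test the pivot $p_1$ rather than $c$ itself: invoking the lemma at $c$ is exactly the case where its exception does fire, which is the source of the apparent circularity.

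With deletions below $c$ excluded, I would finish as follows. Applying Lemma~\ref{lem:stable} to $w$, $x$, and each $d \le c$ yields $|\widehat{wx}| > \mscr{F}_w(d)$ together with the letter identity $w[d:\mscr{F}_w(d)+1] = \widehat{wx}[d:\mscr{F}_w(d)+1]$; taking $d = c$ already gives the first claim $|\widehat{wx}| > \mscr{F}_w(c)$. To upgrade to strong stability I would show that each frontier is frozen, $\mscr{F}_{\widehat{wx}}(d) = \mscr{F}_w(d)$ for every $d \le c$. A maximal palindrome chain from $d$ keeps all its pivots and arms inside $[d:\mscr{F}_w(d)]$, because the inequalities defining $\plink_w$ make the per-pivot reaches $c_i + \rad_w(c_i)$ nondecreasing along the chain while the left arms stay at or above the previous pivot; thus the preserved letters reproduce both the chain radii and the capping mismatch at $\mscr{F}_w(d)+1$, so the frontier neither shrinks nor admits a longer chain. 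Combined with $|\widehat{wx}| > \mscr{F}_w(d) = \mscr{F}_{\widehat{wx}}(d)$, this makes every $d \le c$ stable in $\widehat{wx}$, i.e. $c$ is strongly stable there, closing the induction.

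I expect the main obstacle to be precisely this frontier-freezing step, since Lemma~\ref{lem:stable} guarantees only equality of \emph{letters} on $[d:\mscr{F}_w(d)+1]$, and I must argue that this suffices to pin down the maximal \emph{radii} of every chain pivot and, most importantly, the mismatch that caps the frontier — so that no post-reduction chain can cross a position whose radius is fixed by a preserved mismatch. The other delicate point, the circularity in forbidding deletions below $c$, is handled as above by testing a pivot strictly below $c$, where the exceptional clause of Lemma~\ref{lem:stable} provably cannot trigger.
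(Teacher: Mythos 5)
Your proof is correct and rests on the same core mechanism as the paper's: both derive the corollary from Lemma~\ref{lem:stable} by taking a \emph{shortest} prefix $y$ of $x$ with $|\widehat{wy}| < c$ and applying the lemma at a stable position strictly below $c$ where the exceptional clause provably cannot fire. The paper does this in one shot on $w$ at the landing position $d = |\widehat{wy}|$ (minimality of $y$ rules out any prefix $y''$ with $|\widehat{wy''}| < d$, and $y$ itself gives equality $|\widehat{wy}| = d$ rather than the strict inequality the exception requires), while you run the same exclusion one letter at a time through $u = \widehat{wy'}$ and test the left pivot $p_1 = |\widehat{ut}|$ — identical logic in stepwise packaging, at the cost of needing the induction hypothesis (strong stability in $u$), which your induction on $|x|$ legitimately supplies. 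Where you go beyond the paper is the strong-stability half: the paper's two-sentence proof silently treats ``stable in $w$ plus no deletion below $c$ implies stable in $\widehat{wx}$'' as immediate from Lemma~\ref{lem:stable}, whereas you correctly observe that the lemma only preserves \emph{letters} on $[d:\mscr{F}_w(d)+1]$ and supply the missing frontier argument. Your confinement inequalities (left arms of chain palindromes bounded below by the previous pivot, reaches nondecreasing along the chain, capping mismatches inside the preserved window) are exactly what make this work; note that for stability only the direction $\mscr{F}_{\widehat{wx}}(d) \le \mscr{F}_w(d)$ is needed, since Lemma~\ref{lem:stable} already gives $|\widehat{wx}| > \mscr{F}_w(d)$, and the ``no longer chain'' case closes by truncating the first palindrome of a hypothetical longer chain in $\widehat{wx}$ to radius $\mscr{F}_w(d)+1-c_i$ (its arms lie in the preserved window by the chain condition) so the chain transplants back into $w$ and contradicts the definition of $\mscr{F}_w(d)$. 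So your proposal is sound and, on this point, more complete than the published proof.
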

\begin{proof}
It suffices to show that there is no $y$ such that $|\widehat{wy}| = d$ for some $d < c$.
Since all positions $d \in [1: c]$ are stable, Lemma~\ref{lem:stable} implies that there is no $y$ such that $|\widehat{wy}| = d$.
\qed\end{proof}

\begin{observation}\label{obs:unstable}
If $c$ is not stable, then there is $x$ such that $|\widehat{wx}|=c$ unless $|\widehat{w}| < c$.
\end{observation}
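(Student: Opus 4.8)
The plan is to dispose of the trivial ``unless'' case and then, assuming $|\widehat{w}| \ge c$, to construct an explicit extension $x$ for which $wx$ reduces to the prefix $w[1:c]$. First I would record two easy facts. Since $w$ is pp-irreducible, deleting the tail of its (suffix) Z-shape when it is reducible shows that $\widehat{w}$ is always a prefix of $w$; hence $|\widehat{w}| \ge c$ forces $w[1:c]$ to be a prefix of the irreducible string $\widehat{w}$ and therefore itself irreducible. By uniqueness of the Z-normal form it will then suffice to exhibit one reduction sequence $wx \ZR^* w[1:c]$, after which $\widehat{wx}=w[1:c]$ and $|\widehat{wx}|=c$ follow immediately. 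Second, $c$ not stable means $\mscr{F}_w(c) \ge |w|$, and since every frontier is at most $|w|$ we in fact have $\mscr{F}_w(c) = |w|$; so I may fix a maximal palindrome chain $\vec{c}=(c_0,\dots,c_k)$ from $c_0=c$ with $c_k+\rad_w(c_k)=|w|$.

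The core idea is that this chain describes a stack of suffix palindromes that can be peeled off one at a time, and I would prove by induction on $k$ that some $x$ satisfies $wx \ZR^* w[1:c_0]$. In the base case $k=0$, the palindrome centered at $c_0$ is a suffix palindrome of $w$ reaching position $|w|$; appending its left arm $w[2c_0-|w|+1:c_0]$ turns this palindrome $y\rev{y}$ into a Z-shape $y\rev{y}y$, whose single contraction yields $w[1:c_0]$. For the inductive step, the tail $(c_1,\dots,c_k)$ is again a maximal chain with frontier $|w|$, so the induction hypothesis provides $x'$ with $wx' \ZR^* w[1:c_1]$; in the truncated string $w[1:c_1]$ the relation $c_0 \plink_w c_1$ guarantees that the palindrome centered at $c_0$ now has radius exactly $c_1-c_0$ and so is a suffix palindrome, and I would append its left arm to create one further Z-shape contracting $w[1:c_1]$ down to $w[1:c_0]$, taking $x$ to be $x'$ followed by that arm.

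Two small verifications carry the argument. After truncating to the new right end $c_{i+1}$, the palindrome at $c_i$ really reaches that end, because $c_i+\rad_w(c_i) \ge c_{i+1}$ is part of $c_i \plink_w c_{i+1}$; and its left arm fits inside the string, because $c_{i+1}-c_i \le \rad_w(c_i) \le c_i$ gives $2c_i \ge c_{i+1}$. I also need that a reduction sequence is preserved under appending letters on the right, so that $wx'$ followed by the new arm $\gamma$ reduces as $wx'\gamma \ZR^* w[1:c_1]\gamma$. I expect the main obstacle to be bookkeeping these ``becomes a suffix palindrome after truncation'' claims cleanly, that is, showing that contracting one Z-shape re-exposes the next chain palindrome as a genuine suffix palindrome rather than merely a palindrome somewhere inside the string; this is precisely what the chain relation $\plink_w$ is designed to certify. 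Once the induction is complete, combining $wx \ZR^* w[1:c]$ with the irreducibility of $w[1:c]$ established above gives the desired $|\widehat{wx}|=c$.
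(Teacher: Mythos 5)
Your proof is correct, and its core is the same peeling construction as the paper's: given a palindrome chain $(c_0,\dots,c_k)$ from $c$ with frontier $|w|$, append the left arm of the last chain palindrome to complete a Z-shape, contract it to truncate the string at that center, and recurse; your inductive step with the arm $\gamma$ reproduces exactly the paper's iterative construction of $x = x_k \cdots x_0$. Where you genuinely diverge is in handling reducible $w$. The paper splits into two cases: for irreducible $w$ it runs the peeling argument, while for ss-reducible $w$ it first needs an argument ``similar to the proof of Lemma~\ref{lem:stable}'' to show that, unless $|\widehat{w}| < c$, a maximal chain from $c$ can be chosen whose last link $\lrangle{c_{k-1},c_k}$ is the suffix Z-shape, and it then transfers the truncated chain $(c_0,\dots,c_{k-1})$ to $\widehat{w}$ before re-applying the irreducible case. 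You bypass that second case entirely: since $\ZR$ allows contracting any Z-shape occurrence (not only suffix ones) and is preserved under right context, your explicit sequence $wx \ZR^* w[1:c]$ is legal whether or not $w$ is reducible, and confluence (uniqueness of the Z-normal form) together with your opening observation --- $|\widehat{w}| \ge c$ makes $w[1:c]$ a prefix of the irreducible string $\widehat{w}$, hence irreducible --- gives $\widehat{wx} = w[1:c]$ at once. This is a genuine simplification and makes the role of the ``unless'' clause transparent, where the paper instead keeps the argument inside the pp-irreducible bookkeeping its algorithmic lemmas use. One loose end in your write-up: the chain definition permits $\rad_w(c_k) = 0$, in which case $c_k = |w|$ and your base case's ``single contraction'' of $y\rev{y}y$ with $y = \varepsilon$ is not a legal reduction step, since Z-shapes are nonempty by definition. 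The fix is one line --- take $x = \varepsilon$ with zero reduction steps there, or, as the paper does, assume $\rad_w(c_k) \neq 0$ without loss of generality by dropping $c_k$ (and if $k=0$ with $\rad_w(c_0)=0$, then $c = |w|$ and $|\widehat{w}| \ge c$ forces $w$ irreducible, so $x = \varepsilon$ again suffices).
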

\begin{proof}
Suppose $w$ is irreducible.
If $|w|$ = c, then we have done.
Otherwise, let $(c_0,\dots,c_k)$ be a maximal palindrome chain where $c=c_0$ and $F_w(c_0,\dots,c_k)=\mscr{F}_w(c)=|w|$.
Without loss of generality, we may assume $\rad_w(c_k) \neq 0$.
Then for $x_k = w[2c_k-|w|+1:c_k]$, $wx_k$ has a suffix Z-shape $x_k \rev{x_k} x_k$ whose Z-normal form is $\widehat{wx_k} = w[1:c_k]$.
That is, $F_{\widehat{wx_k}}(c_0,\dots,c_{k-1})=|\widehat{wx_k}| \ge c$.
By repeatedly applying the argument, we obtain $x = x_k \dots x_0$ for which $\widehat{wx} = w[1:|c|]$.

If $w$ is ss-reducible, let $(c_0,\dots,c_k)$ be a maximal palindrome chain such that $c=c_0$ and $\lrangle{c_{k-1},c_k}$ is the suffix Z-shape.
An argument similar to the one in the proof of Lemma~\ref{lem:stable} shows that indeed there must be such a palindrome chain unless $|\widehat{w}|<c$.
Here $(c_0,\dots,c_{k-1})$ is a maximal palindrome chain in $\widehat{w}$ whose frontier is $|\widehat{w}|$.
Then the argument in the previous paragraph applies to $\widehat{w}$.
\qed\end{proof}
In the proof of Observation~\ref{obs:unstable}, we see that $c$ is the center of a nonempty suffix palindrome in $\widehat{wx'}$ for $x'=x_k \dots x_1$.
That is, depending on the letters following $x'$, the maximum radius at $c$ changes, even when the palindrome at $c$ has not been involved in Z-shape reductions.
In Example~\ref{ex:deletetail}, the positions $c_1,\dots,c_m$ are all unstable in (prefixes of) $v_m \mtt{aa}$.
To realize a linear time Z-reduction algorithm, we must avoid recalculation of the maximum radii at those unstable positions as much as possible.

Before presenting our algorithm for Z-reduction in the next subsection, we introduce some technical lemmas below.
\begin{lemma}\label{lem:pchain}
If $c < d \le c+\rad_w(c)$, then $\mscr{F}_w(c) \ge \mscr{F}_w(d)$.
\end{lemma}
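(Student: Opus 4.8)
The plan is to prove the inequality by strong induction on the nonnegative integer $\mu = c + \rad_w(c) - d$, which is well-defined since the hypothesis gives $d \le c + \rad_w(c)$. Here $w$ is pp-irreducible, as throughout this subsection. I will lean on two elementary facts: first, the singleton $(c)$ is itself a palindrome chain, so $\mscr{F}_w(c) \ge c + \rad_w(c)$; second, prepending one valid link to a palindrome chain leaves its frontier unchanged. The induction splits into two cases according to whether $c \plink_w d$. Recall that, because $w$ is pp-irreducible, $c \plink_w d$ is equivalent to $c < d \le c + \rad_w(c) \le d + \rad_w(d)$, so under our standing hypothesis the split is precisely whether $c + \rad_w(c) \le d + \rad_w(d)$ or $c + \rad_w(c) > d + \rad_w(d)$.

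In the first case, $c \plink_w d$ holds. I would take a maximal palindrome chain $(d, d_1, \dots, d_k)$ from $d$, whose frontier is $\mscr{F}_w(d)$, and prepend $c$ to obtain the palindrome chain $(c, d, d_1, \dots, d_k)$ from $c$ of the very same frontier. Hence $\mscr{F}_w(c) \ge \mscr{F}_w(d)$, and no appeal to the induction hypothesis is needed.

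In the second case, $c + \rad_w(c) > d + \rad_w(d)$, so $(c,d)$ is \emph{not} a link. I first dispose of the situation where a maximal chain from $d$ is the singleton $(d)$: then $\mscr{F}_w(d) = d + \rad_w(d) < c + \rad_w(c) \le \mscr{F}_w(c)$, and we are done. Otherwise let $(d, d_1, \dots, d_k)$ with $k \ge 1$ be maximal, so $\mscr{F}_w(d) = d_k + \rad_w(d_k)$. From $d \plink_w d_1$ I get $d < d_1 \le d + \rad_w(d)$, and combining this with the case assumption $d + \rad_w(d) < c + \rad_w(c)$ yields $c < d_1 \le c + \rad_w(c)$, so the pair $(c, d_1)$ again satisfies the hypothesis of the lemma. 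Moreover $(d_1, \dots, d_k)$ is a palindrome chain from $d_1$, whence $\mscr{F}_w(d_1) \ge d_k + \rad_w(d_k) = \mscr{F}_w(d)$. Since $d_1 > d$, the measure $c + \rad_w(c) - d_1$ is strictly below $\mu$, so the induction hypothesis applied to $(c, d_1)$ gives $\mscr{F}_w(c) \ge \mscr{F}_w(d_1) \ge \mscr{F}_w(d)$, closing the induction.

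The main obstacle is exactly this second case, where the right arm of the palindrome at $c$ overshoots that at $d$ and the naive gluing argument of the first case fails. The observation that unblocks it is that the next center $d_1$ of a chain from $d$ is still trapped inside the right arm of $c$ while already carrying the full frontier of $d$, so I can replace $d$ by the strictly larger $d_1$ and descend on $c + \rad_w(c) - d$. The points to handle with care are verifying that $d_1$ still meets $c < d_1 \le c + \rad_w(c)$ (where pp-irreducibility is what keeps $\plink_w$ well-behaved and the link $d \plink_w d_1$ forces $d_1 \le d + \rad_w(d)$) and that the tail $(d_1, \dots, d_k)$ genuinely is a chain from $d_1$, so that $\mscr{F}_w(d_1)$ dominates $\mscr{F}_w(d)$.
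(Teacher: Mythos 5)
Your proof is correct and in substance it is the paper's argument: the paper takes a maximal palindrome chain $(d_0,\dots,d_k)$ from $d$, assumes $\mscr{F}_w(c) < \mscr{F}_w(d)$ for contradiction, and directly picks the crossing index $i$ with $d_i \le c+\rad_w(c) < d_i + \rad_w(d_i)$, so that $c \plink_w d_i$ (via the same pp-irreducibility simplification of $\plink_w$ you invoke) and the spliced chain $(c,d_i,\dots,d_k)$ contradicts the maximality of $\mscr{F}_w(c)$. Your induction on $c+\rad_w(c)-d$ merely unfolds that one-shot selection of the crossing index into stepwise advances $d \mapsto d_1$ along the chain, with your Case 1 corresponding exactly to the moment the chain's arm first reaches past $c+\rad_w(c)$.
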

\begin{proof}
Let $(d_0,\dots,d_k)$ be a maximal palindrome chain from $d=d_0$, whose frontier is $\mscr{F}_w(d)$.
Suppose $\mscr{F}_w(c) < \mscr{F}(d)$.
There must be $i$ such that $d_i \le c+\rad_w(c) < d_{i}+\rad_w(d_i)$, for which $c \plink_w d_i$.
That is, $(c,d_i,\dots,d_k)$ is a palindrome chain from $c$, whose frontier is $\mscr{F}_w(d) > \mscr{F}_w(c) $.
A contradiction.
\qed\end{proof}
The following strengthens Lemma~\ref{lem:pchain}.
\begin{corollary}\label{cor:pchain}
If $c < d \le \mscr{F}_w(c)$, then $\mscr{F}_w(c) \ge \mscr{F}_w(d)$.
\end{corollary}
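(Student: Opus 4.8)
The plan is to reduce the corollary to Lemma~\ref{lem:pchain} by locating, inside a maximal palindrome chain from $c$, a chain vertex whose right arm already contains $d$, and then exploiting the fact that every vertex on such a chain has the same maximum frontier as $c$ itself. This is the natural way to upgrade the hypothesis $d \le c+\rad_w(c)$ of the lemma to the weaker hypothesis $d \le \mscr{F}_w(c)$ of the corollary: the frontier $\mscr{F}_w(c)$ is by definition reached by some chain, so $d$ must lie within the right arm of one of that chain's vertices.

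First I would fix a maximal palindrome chain $(c_0,\dots,c_m)$ from $c=c_0$, so that $\mscr{F}_w(c)=F_w(c_0,\dots,c_m)=c_m+\rad_w(c_m)$. The key auxiliary claim is that $\mscr{F}_w(c_i)=\mscr{F}_w(c)$ for every $i\in[0:m]$. The inequality $\mscr{F}_w(c_i)\ge\mscr{F}_w(c)$ is immediate, since the tail $(c_i,\dots,c_m)$ is itself a palindrome chain from $c_i$ whose frontier is $c_m+\rad_w(c_m)=\mscr{F}_w(c)$. For the reverse inequality I would take any maximal chain $(c_i,e_1,\dots,e_p)$ from $c_i$ and prepend the prefix $(c_0,\dots,c_{i-1})$: because $c_{i-1}\plink_w c_i$ already holds on the original chain and $c_i\plink_w e_1$ holds on the new one, the concatenation $(c_0,\dots,c_{i-1},c_i,e_1,\dots,e_p)$ is a legitimate palindrome chain from $c$ with frontier exactly $\mscr{F}_w(c_i)$, and maximality of $\mscr{F}_w(c)$ forces $\mscr{F}_w(c_i)\le\mscr{F}_w(c)$.

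With the auxiliary claim in hand I would finish with a short index argument. Since $c_0=c<d\le c_m+\rad_w(c_m)$, let $i$ be the largest index with $c_i<d$. If $i=m$, the hypothesis $d\le c_m+\rad_w(c_m)$ gives $c_i<d\le c_i+\rad_w(c_i)$ directly; if $i<m$, maximality of $i$ gives $d\le c_{i+1}$, and $c_i\plink_w c_{i+1}$ gives $c_{i+1}\le c_i+\rad_w(c_i)$, so again $c_i<d\le c_i+\rad_w(c_i)$. In either case Lemma~\ref{lem:pchain} applied at $c_i$ yields $\mscr{F}_w(c_i)\ge\mscr{F}_w(d)$, and combining this with $\mscr{F}_w(c_i)=\mscr{F}_w(c)$ completes the proof.

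I expect the only real content to be the concatenation step establishing $\mscr{F}_w(c_i)=\mscr{F}_w(c)$; the subtlety there is purely the bookkeeping at the junction $c_{i-1}\plink_w c_i\plink_w e_1$, which must be verified against the definition of a palindrome chain rather than assumed. Everything else is the routine case split on where $d$ falls relative to the chain vertices, so I do not anticipate any genuine difficulty beyond being careful not to overlook the endpoint case $d>c_m$.
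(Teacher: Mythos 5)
Your proof is correct and takes essentially the same route as the paper: both fix a maximal palindrome chain from $c$, locate a chain vertex $c_i$ with $c_i < d \le c_i+\rad_w(c_i)$, and apply Lemma~\ref{lem:pchain} together with the fact that $\mscr{F}_w(c_i)=\mscr{F}_w(c)$. The paper leaves the existence of such an index and the frontier equality implicit, whereas you spell them out (via the largest-index case split and the tail/concatenation argument), but the underlying argument is identical.
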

\begin{proof}
Let $(c_0,\dots,c_k)$ be a maximal palindrome chain from $c=c_0$, whose frontier is $\mscr{F}_w(c)$.
Then, there must be $i$ for which $c_i < d \le c_i+\rad_w(c_i)$ holds.
Lemma~\ref{lem:pchain} implies $\mscr{F}_w(c) = \mscr{F}_w(c_i) \ge \mscr{F}_w(d)$.
\qed\end{proof}
\begin{corollary}\label{cor:pchain2}
For two positions $b$ and $c$ in $w$ with $b \le c$, let $f = \max\{\, \mscr{F}_w(d) \mid b \le d \le c \,\}$.
Then, we have
$
	f = \max\{\, \mscr{F}_w(d) \mid b \le d \le f\,\} 
$.
\end{corollary}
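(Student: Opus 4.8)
The plan is to prove the two inequalities between $f$ and $g := \max\{\,\mscr{F}_w(d) \mid b \le d \le f\,\}$ separately. The inequality $f \le g$ should come essentially for free. First I would record the basic fact that $\mscr{F}_w(d) \ge d$ for every position $d$, since the singleton chain $(d)$ already has frontier $d + \rad_w(d) \ge d$. Applying this with $d = c$ gives $f \ge \mscr{F}_w(c) \ge c$, so $c \le f$ and hence $[b:c] \subseteq [b:f]$. Since enlarging the index set can only increase a maximum, $f = \max_{b \le d \le c}\mscr{F}_w(d) \le \max_{b \le d \le f}\mscr{F}_w(d) = g$.

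For the reverse inequality $g \le f$, I would fix an arbitrary position $d$ with $b \le d \le f$ and show $\mscr{F}_w(d) \le f$. Let $d^\ast \in [b:c]$ be a position attaining the maximum, so that $\mscr{F}_w(d^\ast) = f$. If $d \le c$, then $\mscr{F}_w(d) \le f$ holds by the very definition of $f$. Otherwise $c < d$, and then the chain of inequalities $d^\ast \le c < d \le f = \mscr{F}_w(d^\ast)$ puts the pair $(d^\ast, d)$ exactly in the hypothesis of Corollary~\ref{cor:pchain}, which yields $f = \mscr{F}_w(d^\ast) \ge \mscr{F}_w(d)$. In either case $\mscr{F}_w(d) \le f$, and taking the maximum over all such $d$ gives $g \le f$. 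Combining the two inequalities closes the argument.

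The proof is really a short bookkeeping argument rather than a deep one: Corollary~\ref{cor:pchain} does all the genuine work, and the only point requiring any care is the observation that the maximizer $d^\ast$ bridges $c$ and $f$, i.e. $d^\ast \le c < d \le \mscr{F}_w(d^\ast)$, so that the monotonicity of the maximum frontier supplied by the corollary applies to $(d^\ast, d)$. The mild subtlety I would watch for is the edge case $d \le c$, which is not covered by Corollary~\ref{cor:pchain} (whose hypothesis requires a strict inequality on the left) but is handled trivially by the definition of $f$; splitting on $d \le c$ versus $c < d$ cleanly disposes of this.
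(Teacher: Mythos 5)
Your proposal is correct and follows essentially the same route as the paper: both arguments take a maximizer $d^\ast \in [b:c]$ with $\mscr{F}_w(d^\ast) = f$ and apply Corollary~\ref{cor:pchain} to the pair $(d^\ast, e)$ for positions $e \in [c+1:f] \subseteq [d^\ast+1:\mscr{F}_w(d^\ast)]$. The only difference is that you spell out the trivial direction $f \le g$ (via $\mscr{F}_w(c) \ge c$) and the case $d \le c$, which the paper leaves implicit.
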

\begin{proof}
Let $d \in [b:c]$ be such that $\mscr{F}_w(d)=f$.
Then for an arbitrary $e \in [c+1:f] \subseteq [d+1 : \mscr{F}(d)]$, Corollary~\ref{cor:pchain} implies $\mscr{F}_w(e) \le \mscr{F}_w(d) = f$.
\qed\end{proof}
The following lemma has been observed by Manacher~\cite{Manacher75}.
\begin{lemma}\label{lem:prefast}
	For any position $c$ in any string $w$ and any $r \in [1:\rad_w(c)]$,
	$\rad_w(c) \ge r+\rad_w(c+r)$ if $\rad_w(c+r)=\rad_w(c-r)$.
	Otherwise, $\rad_w(c) =  r + \min\{ \rad_w(c + r), \rad_w(c - r)\}$.
\end{lemma}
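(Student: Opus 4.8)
The plan is to exploit the reflective symmetry of the maximal palindrome centered at $c$. Write $R = \rad_w(c)$ and set $m = R - r \ge 0$, and abbreviate $\rho^{+} = \rad_w(c+r)$ and $\rho^{-} = \rad_w(c-r)$. The reflection $\sigma(p) = 2c+1-p$ fixes the axis of the palindrome at $c$ and satisfies $w[p] = w[\sigma(p)]$ whenever $p \in [c-R+1 : c+R]$. Since $\sigma$ sends $c+r$ to $c-r$, it also carries the two arms of a candidate palindrome at $c+r$ onto those of a candidate palindrome at $c-r$: a direct check gives $\sigma(c+r+k) = c-r+1-k$ and $\sigma(c+r+1-k)=c-r+k$. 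First I would use this to prove the pivotal equivalence: for every $k \in [1:m]$ the equality $w[c+r+1-k]=w[c+r+k]$ holds if and only if $w[c-r+1-k]=w[c-r+k]$, because for such $k$ all four positions lie inside $[c-R+1:c+R]$, so $\sigma$ relates them pairwise. In short, the matching behaviour of the two mirror centres is identical up to comparison radius $m$.

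From this equivalence the ``easy'' conclusions follow by comparing $\rho^{+}$ and $\rho^{-}$ against the threshold $m$. If $\min(\rho^{+},\rho^{-}) < m$, then the first mismatch of whichever centre is shorter occurs at some step $\le m$, where the equivalence forces the other centre to mismatch at the same step; hence $\rho^{+} = \rho^{-}$ and $r + \rho^{+} < R$, which already yields the first clause with a strict inequality. The case $\rho^{+} = \rho^{-} = m$ is immediate and gives that clause with equality. Thus the first clause will be settled once I also rule out $\rho^{+} = \rho^{-}$ with both strictly larger than $m$, and the second clause amounts to showing that when $\rho^{+}\neq\rho^{-}$ the smaller of them is exactly $m=R-r$.

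The main obstacle is this boundary case, and it is where the maximality of the palindrome at $c$ must enter. I expect to argue by contradiction: if the smaller of $\rho^{+},\rho^{-}$ exceeds $m$, then step $m+1$ succeeds at both $c+r$ and $c-r$, i.e.\ $w[c+2r-R]=w[c+R+1]$ and $w[c-R]=w[c+R-2r+1]$; combining these with the symmetry relation $w[c+2r-R]=w[c+R-2r+1]$ chains into the single equality $w[c-R]=w[c+R+1]$, contradicting the maximality of the palindrome centred at $c$. Care is needed when $c-R$ or $c+R+1$ falls outside $[1:|w|]$: there the corresponding step cannot succeed for indexing reasons, so the same conclusion (that $\rho^{+}$ or $\rho^{-}$ equals $m$) holds by boundary truncation rather than by a letter mismatch. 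This simultaneously shows that the two radii cannot both exceed $m$, completing the first clause, and forces the strictly smaller radius to equal exactly $m = R-r$, giving $R = r + \min(\rho^{+},\rho^{-})$ and finishing the second.
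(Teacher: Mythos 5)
Your proof is correct and takes essentially the same route as the paper's: the classical Manacher symmetry argument, using reflection inside the maximal palindrome at $c$ to transfer matches/mismatches between $c+r$ and $c-r$, and maximality at $c$ to cap the radii (the paper attributes the lemma to Manacher, and the proof retained in its source chains exactly these interval-reversal identities). The only differences are organizational---you case-split against the threshold $\rad_w(c)-r$ and dispose of the ``both radii exceed it'' case by contradiction, where the paper splits directly on $\rad_w(c-r)$ versus $\rad_w(c+r)$---plus your explicit, and welcome, treatment of the string-boundary cases.
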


Using the notion of palindrome chains, one can observe the following property on the behavior of Algorithm~\ref{alg:zdet} for an irreducible string $T$.
Let $c_{0,0} = 1$ and $c_{i,j+1}$ be the leftmost position such that $c_{i,j} \plink_w c_{i,j+1}$. If $c_{i,j}$ has no $d$ such that $c_{i,j} \plink_w d$, where $\rad_w(c_{i,j})=0$, let $c_{i+1,0} = c_{i,j}+1$.
Lemma~\ref{lem:pchain} implies that $(c_{i,0},\dots,c_{i,j})$ forms a maximal palindrome chain from $c_{i,0}$ and $c_{i+1,0} = \mscr{F}_w(c_{i,0}) + 1$.
The function ${\ZDet}(T)$ calls $\ZDetF(c_{i,0})$ for each $i$ and then $\ZDetF(c_{i,j})$ recursively calls $\ZDetF(c_{i,j+1})$. 

\subsection{Outline of our algorithm}

\begin{algorithm2e}[!tp]
	\caption{Z-reducer}\label{alg:zred}
	\SetKwFor{Fn}{Function}{}{}
	\SetKwRepeat{Do}{do}{while}
	Let $\Stack$ be an empty stack, $\Pals$ an empty array and $w = \varepsilon$\;
	\Fn{$\ZRed(T)$}{  \label{alg:LAmanacher}
		$T := \texttt{\$} T \texttt{\#}$\tcp*{\texttt{\$} and \texttt{\#} are sentinel letters}
		$w.\Append(T)$\;
		\While{there remains to read in $T$}{
			$w.\Append(T)$\;
			$\Stack.\mit{clear}()$\;
			$\PCE(|w|-1)$\;
		}
		\Return $w[2:|w|-1]$\tcp*{strip the sentinel letters}
	}
	\Fn{$\PCE(c)$}{  \label{alg:lamanacher}
		$b \ot |w|$\;
		$\unstable \ot \msf{true}$\;
		\While{$\unstable$}{
			$\unstable \ot \msf{false}$\;
			\lIf{$\slowExt(c)$}{\Return{$\msf{true}$}} \label{al:true1}
			\For(\tcp*[f]{in decreasing order}){$d \ot c+\Pals[c]$ \textbf{downto} $b$}{\label{al:for}
				\If{$d+\Pals[d] \ge c+\Pals[c]$\label{al:if}}{
					\If{$\fastExt(d)$}{
						\If{$\PCE(d)$}{
							\lIf{$c = |w|$}{\Return $\msf{true}$} \label{al:true2}
							\lIf{$d = |w|$}{$\Pals[d] := \Pals[2c-d]$}
							$w.\Append(T)$\; \label{al:append}
							$\unstable \ot \msf{true}$\;
							\textbf{break}\;\label{al:break}
						}
					}
					$\Stack.\mit{push}(d)$\;
				}
			}
		}
		\Return $\msf{false}$\;
	}
\end{algorithm2e}
\begin{algorithm2e}[t]
	\caption{Slow and Fast Extension} \label{alg:zext}
	\SetKwFor{Fn}{Function}{}{}
	\SetKwRepeat{Do}{do}{while}
	\Fn{$\slowExt(c)$}{  \label{alg:slowextend}
		$r \ot |w| - c - 1$\;
		\While{$w[c + r + 1] = w[c - r]$}{
			$r \ot r+1$\; 
			\If(\tcp*[f]{detect a suffix Z-shape $\lrangle{c-r,c}$}){
				$\Pals[c - r] \ge r$}{\label{line:detectZ}
				$w \ot w[1:c-r]$\tcp*{contract the suffix Z-shape}
				$\Pals \ot \Pals[1:c-r]$\tcp*{\qquad same as above \qquad\quad }
				\Return $\msf{true}$\;
			}
			$\Pals[c + r] \ot \Pals[c - r]$\tcp*{transfer the value}
			$w.\Append(T)$\; \label{al:append2}
		}
		$\Pals[c] \ot r$\tcp*{$\Pals[c] = \rad_w(c)$}
		\Return $\msf{false}$\;
	}
	\Fn{$\fastExt(d)$}{  \label{alg:fastextend}
		\While{$\Stack$ is not empty}{
			$r \ot \Stack.top() - d$\;
			\lIf{$\Pals[d-r] \ge \Pals[d+r]$}{$\Stack.pop()$}
			\Else(\tcp*[f]{$\Pals[d] = \rad_w(d)$}){
				$\Pals[d] \ot r+\Pals[d-r]$\;
				\Return{$\msf{false}$}
			}
		}
		\Return $\msf{true}$\;
	}
\end{algorithm2e}

Our online algorithm for calculating the Z-normal form of an input string $T$ is shown as Algorithms~\ref{alg:zred} and~\ref{alg:zext}.
Throughout the algorithm, the string $w$ in the working space is kept pp-irreducible.
That is, $\prev{w} = w[1:|w|-1]$ is irreducible and we would like to know if $w$ itself is still irreducible.
Our algorithm consists of functions $\PCE$, $\slowExt$ and $\fastExt$ in addition to the main function $\ZRed$.
Among those, $\PCE$ plays the central role.
The data structures we use are very simple:
a working string $w$, an array $\Pals$ for the maximal radius at each position of $w$, and a stack $\Stack$ of positions.
Those are all global variables in Algorithms~\ref{alg:zred} and~\ref{alg:zext}.
At the beginning, we add extra fresh letters $\texttt{\$}$ and $\texttt{\#}$ to the left and right ends of the input, respectively.
Those work as sentinels so that we never try to access the working string beyond the ends when extending a suffix palindrome.

The working string is initialized to be the empty string and is expanded by appending letters from $T$ one by one by $\Append$.
Suppose that we have a pp-irreducible string $w$ in the working space.
When the function $\PCE(c)$ is called, we know that $c + \rad_{\prev{w}}(c) = |\prev{w}|$, i.e., $c$ is a suffix palindrome center in $\prev{w}$, but not yet sure if $c + \rad_w(c) = |{w}|$ holds, i.e., $c$ may not be a suffix palindrome center in ${w}$.
For explanatory convenience, let us first assume that $w$ will not become an ss-reducible string whose Z-shape includes the position $c$ in its tail.
We will explain later what happens when the position $c$ shall be deleted.
That is, the position $c$ will not be deleted.
Then $\PCE(c)$ processes the shortest prefix $v$ of the unprocessed suffix of $T$ such that $c$ is stable in the resultant string $w'$, i.e., $\mscr{F}_{w'}(c) = |w'|-1$, where  $w'$ is a pp-irreducible string obtained from $wv$ by contracting suffix Z-shapes whose right pivot is right to $c$.
In an extreme case, we have $w'=w$ and just confirm $\mscr{F}_{w}(c) = |{w}|-1$.
After the execution of $\PCE(c)$, unless it returns $\msf{true}$, it is guaranteed that all positions $d \in  [c : \mscr{F}_{w'}(c)]$ are stable in $w'$ and satisfy $\Pals[d] = \rad_{w'}(d)$ for all $d \in [c : \mscr{F}_{w'}(c)]$.
This is why we name the function $\PCE$. 
Moreover if the call of $\PCE(c)$ has been from the main function $\ZRed(T)$, $c = \mscr{A}_{w'}(c)$ and those positions $d$ are all strongly stable. 

To stabilize all the positions up to the (future) frontier of $c$, $\PCE(c)$ recursively calls $\PCE(d)$ for positions $d$ such that $c \plink_w d$.
This accords with the definition of the frontier.
To determine positions $d$ for which we should recursively call $\PCE(d)$, we need to know the value of $\rad_w(c)$ first of all.
The function $\PCE(c)$ calls $\slowExt(c)$ at first.
When the function $\slowExt(c)$ is called, we are sure $c+\rad_{\prev{w}}(c) = |\prev{w}|$.
By reading more letters from the input, it does three tasks.
One is to calculate the maximal radius at $c$ exactly, taking the unread part of the input into account.
One is to detect and contract a suffix Z-shape whose right pivot is $c$.
The last one is to transfer the values of $\Pals$ on the left arm to the right arm.
We extend the palindrome at $c$ by comparing the letters $w[c-r]$ and $w[c+r+1]$.
When it happens that $\Pals[c-r] \ge r$, this means that we find a Z-shape occurrence $\lrangle{c-r,c}$.
In this case, the suffix palindrome shall be deleted, and the function returns $\msf{true}$.
When the palindrome has become non-suffix, it returns $\msf{false}$.
During the extension of the palindrome at $c$, it copies the value of $\Pals[c-r]$ to $\Pals[c+r]$.
This transfer might appear nonsense, since it might be the case that $\rad_w(c-r) \neq \rad_w(c+r)$.
However, this ``sloppy calculation'' of radii is advantageous over the exactly correctly calculated values.
The copied value at $c+r$ is ``adaptive'' in extensions and deletions of succeeding part of the working string $w$ to some extent, in the sense that they can always be used to detect a Z-shape occurrence $\lrangle{c+r,d}$ as long as $d \le c+\rho_w(c)$.
The exactly correct values are too rigid to have this property.
If $\Pals[c+r]=\rad_w(c+r)$, of course $\lrangle{c+r,d}$ is a Z-shape if and only if $\Pals[c+r],\rad_w(d)  \ge d-(c+r)$.
It is possible that $\Pals[c+r] \neq \rad_w(c+r)$, but still $\lrangle{c+r,d}$ is a Z-shape if and only if $\Pals[c+r],\rad_w(d) \ge d-(c+r)$, as long as $d \le c+\rho_w(c)$.
Consider the situation illustrated in Figure~\ref{fig:adaptive}.
In the case where $\Pals[c-r] = \Pals[c+r] \neq \rad_w(c+r)$, it is certain that $\rad_w(c+r),\Pals[c+r] \ge \rad_w(c) - r$.
If $c+r < d \le c+\rad_w(c)$, it means $\rad_w(c+r) \ge d-(c-r)$.
Thus $\lrangle{c+r,d}$ is a Z-shape if and only if $\Pals[c+r],\rad_w(d) \ge d-(c+r)$.
\begin{figure}[t]
	\centering
	\begin{tikzpicture}[scale=0.7]\small
	\draw(8,0.4) node{$c$};
	\palin{8}{8}{0}
	\draw(13,-0.4) node{$c+r$};
	\palin{13}{3}{-0.8}
	\draw[densely dashed](9.3,-0.8)--(16.7,-0.8);
	\draw(3,-0.4) node{$c-r$};
	\palin{3}{3}{-0.8}
	\draw[densely dashed](-0.7,-0.8)--(6.7,-0.8);
	\draw(15,-1.2) node{$d$};
	\palin{15}{2}{-1.6}
	\end{tikzpicture}
	\caption{\label{fig:adaptive}%
	If $\Pals[c+r] \neq \rad_w(c+r)$, we know $\rad_w(c+r),\Pals[c+r] \ge \rad_w(c) - r$.
	This is informative enough for detecting a Z-shape whose left pivot is $c+r$ and right pivot is $d \le c+\rad_w(c)$.
	}
\end{figure}
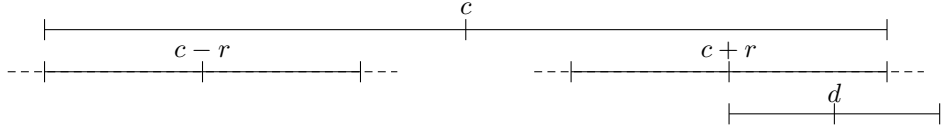
The following example shows how those copied values work well.
\begin{example}\label{ex:transfer}
	See Figure~\ref{fig:transfer}.
	Let us consider the ss-reducible string $w_1=\mtt{\texttt{\$}caabbaacbbcaabbaab}$ with suffix Z-shape $\mtt{abbaab}$.
	Here we have a big palindrome centered at $10$ whose radius is $8$, i.e., $\rad_{w_1}(10)=8$.
	On the symmetric positions $5$ and $15$ with respect to that palindrome,
	we have $\rad_{w_1}(5) = 4 \neq \rad_{w_1}(15)=3$.
	In $\slowExt(10)$, we transfer the value $\Pals[5]=4$ to $\Pals[15]$.
	The palindrome suffix in $w_1$ is centered at $17$.
	Although $\Pals[15] \neq \rad_{w_1}(15)$, still it is useful to detect the Z-shape occurrence $\lrangle{15,17}$, since $\Pals[15] \ge 17 - 15 = 2 = \rad_{w_1}(17)$.
	After the contraction of the tail $\mtt{baab}$, we obtain $w_2 = \hat{w}_1 = \mtt{\texttt{\$}caabbaacbbcaab}$.
	Suppose we further read $v = \mtt{baaccaab}$.
	Then $w_3 = w_2 v$ is ss-reducible, where the suffix palindrome is centered at $19$ and $\lrangle{15,19}$ is the suffix Z-shape occurrence.
	To detect it, we should know $\rad_{w_3}(15) \ge 19-15 = 4 = \rad_{w_3}(19)$.
	Since we set $\Pals[15]=4$, we can detect the Z-shape, without updating the value.
\begin{figure}[t]
	\centering
	\begin{tikzpicture}[scale=0.75]\small
	\palin{4}{4}{1.5}
	\palin{6.5}{1.5}{1}
	\palin{1.5}{2}{1}
	\zshape{6.5}{7.5}{0.5}
	\foreach \x in {1,...,22}
		\draw(\x*0.5-1.2, 2.5) node{\scriptsize$\x$};
	\draw(-1.5,2) node{$w_1$};
	\foreach \x/\s in {-2/\texttt{\$},-1/c,0/a,1/a,2/b,3/b,4/a,5/a,6/c,7/b,8/b,9/c,10/a,11/a,12/b,13/b,14/a,15/a,16/b}
		\draw(\x*0.5, 2) node[anchor=west]{\normalfont$\mathstrut\mathtt{\s}$};
	\draw(-1.5,0) node{\small$\rad_{w_1}$};
	\foreach \x/\s in {0/0,1/0,2/1,3/0,4/4,5/0,6/1,7/0,8/0,9/8,10/0,11/0,12/1,13/0,14/3,15/0,16/2,17/0,18/{}}
		\draw(\x*0.5-0.55, 0) node{$\s$};
	\draw(-1.5,-0.5) node{\small$\Pals_1$};
	\foreach \x/\s in {0/0,1/0,2/1,3/0,4/4,5/0,6/1,7/0,8/0,9/8,10/0,11/0,12/1,13/0,14/4,15/0,16/\mathbf{2},17/0,18/{}}
		\draw(\x*0.5-0.55, -0.5) node{$\s$};
	\draw(-1.5,-1) node{$w_2$};
	\foreach \x/\s in {-2/\texttt{\$},-1/c,0/a,1/a,2/b,3/b,4/a,5/a,6/c,7/b,8/b,9/c,10/a,11/a,12/b}
		\draw(\x*0.5, -1) node[anchor=west]{\normalfont$\mathstrut\mathtt{\s}$};
	\draw(-1.5,-1.5) node{\small$\Pals_3$};
	\foreach \x/\s in {0/0,1/0,2/1,3/0,4/4,5/0,6/1,7/0,8/0,9/8,10/0,11/0,12/1,13/0,14/4,15/0,16/1,17/0,18/\mathbf{4},19/0,20/1,21/0}
		\draw(\x*0.5-0.55, -1.5) node{$\s$};
	\draw(-1.5,-2) node{\small$\rad_{w_3}$};
	\foreach \x/\s in {0/0,1/0,2/1,3/0,4/4,5/0,6/1,7/0,8/0,9/8,10/0,11/0,12/1,13/0,14/4,15/0,16/1,17/0,18/{4},19/0,20/1,21/0}
		\draw(\x*0.5-0.55, -2) node{$\s$};
	\zshape{6.5}{8.5}{-2.5}
	\palin{4}{4.5}{-3}
	\draw(-1.5,-3.5) node{$w_3$};
	\foreach \x/\s in {-2/\texttt{\$},-1/c,0/a,1/a,2/b,3/b,4/a,5/a,6/c,7/b,8/b,9/c,10/a,11/a,12/b,13/b,14/a,15/a,16/c,17/c,18/a,19/a,20/b}
		\draw(\x*0.5, -3.5) node[anchor=west]{\normalfont$\mathstrut\mathtt{\s}$};
	\end{tikzpicture}
	\caption{\label{fig:transfer}%
	We copy the value $\Pals[5]=4$ to $\Pals[15]$ with regardless of whether or not $\rho(15)=4$, where $5$ and $15$ are symmetric positions with respect to the palindrome center $10$.
	Using the value, we can detect Z-shapes $\lrangle{15,17}$ in $w_1$ and $\lrangle{15,19}$ in $w_3$ with no update of $\Pals[15]$.
	$\Pals_1$ and $\Pals_3$ show the arrays when we have $w_1$ and $w_3$ as a working copy, respectively,
	except for the values with bold letters, which are computed but not written, due to the Z-shape contract.}
\end{figure}
\end{example}

Those values $\Pals[c+r]$ copied from $\Pals[c-r]$ work well to detect a Z-shape $\lrangle{c+r,d}$ only when $d \le c+\rad_w(c)$.
For other positions $d > c+\rad_w(c)$, we may overlook or erroneously report a Z-shape if we leave the values wrong.
In the \textbf{for} loop of Algorithm~\ref{alg:zred}, we try to fix the values $\Pals[d]$ to be $\rad_w(d)$, unless $d + \Pals[d] < c + \Pals[c]$ which witnesses $\Pals[d]=\rad_w(d)$, in decreasing order on the right arm of the palindrome at $c$ .
This ``reversed'' order might appear unnatural, but this is also related to the adaptability of values in $\Pals$.
If we fix $\Pals[d]$ to be $\rad_w(d)$ in increasing order, they are not adaptive any more.
In this case, once some suffix of the working string is deleted and then extended, those exact values would become useless.
Contrarily, we calculate $\rad_w(d)$ in the opposite order.
Then the previously copied values of $\Pals$ on the left are adaptive and remain useful, unless they are deleted.

Next we explain what the function $\fastExt$ does.
Suppose that we have two positions $d$ and $e$ such that $c < d < e \le c+ \rad_w(c)$ and $d + \Pals[d],e + \Pals[e] \ge c + \Pals[c]$.
They are now candidates for suffix palindrome centers in $w$.
We first focus on $e$ and then $d$.
By appending further letters one by one from $T$ to $w$, we extend the palindrome at $e$ until it becomes non-suffix.
Suppose that the palindrome at $e$ has been expanded using $\slowExt(e)$ and the value $\Pals[e]$ has been fixed, where $e + \Pals[e] = |w| - 1$.
We will then fix $\Pals[d]$.
Figure~\ref{fig:skip} illustrates such a situation.
Recall that we know $d + \rad_w(d) \ge c + \rad_w(c)$ from $d+\Pals_w[d] \ge c + \Pals[c]$.
One naive way to obtain the maximum palindrome at $d$ is just to compare letters $w[c+\Pals[c]+r]$ and $w[2d-c-\Pals[c]-r+1]$ for $r=1,2,\dots$ until it becomes non-suffix, just like we have done for $e$.
However, this means that we reread the same letters that have been read when extending the palindrome at $e$ (illustrated by bold lines in Figure~\ref{fig:skip}).
We must avoid this for realizing linear time computation.
Suppose that the palindrome at $d$ should be extended properly.
Then, since $e$ is in the right arm of the palindrome $d$, i.e., $d < e \le d + \rad_w(d)$, 
one can find a symmetric occurrence at $2d-e$ of (a part of) the palindrome at $e$.
We compare the values $\Pals[2d-e]$ and $\Pals[e]$ at the symmetric positions with respect to $d$.
Thanks to Lemma~\ref{lem:prefast}, either we can obtain the exact value of $\rad_w(d)$ or we learn that $d+\rad_w(d) \ge e + \Pals[e] = |w| - 1$.
In the former case, $\fastExt(d)$ lets $\Pals[d] = \rad_w(d)$ and returns $\msf{false}$.
In the latter case, we know that $d$ is a suffix palindrome center in $\prev{w}$.
$\fastExt(d)$ returns $\msf{true}$ and then we call $\PCE(d)$.
\begin{figure}[t]
	\centering
	\newcommand{\dist}[4]{
	\draw[<-] (#1,#3) -- (#2/2+#1/2-0.25,#3);
	\draw[<-] (#2,#3) -- (#2/2+#1/2+0.25,#3);
	\draw (#2/2+#1/2,#3) node{#4};
	}
	\begin{subfigure}[h]{\textwidth}
	\begin{tikzpicture}[scale=0.75]\small
	\palin{8}{6}{6}
	\palin{4}{2}{7}
	\palin{12}{3}{7}
	\draw (0,8) -- (15.5,8);
	\draw (13,7.8) -- (13,8.2);
	\draw (15.5,7.8) -- (15.5,8.2);
	\draw[very thick] (13,8) -- (14,8);
	\draw (10,6.8) -- (10,7.2);
	\draw (14,6.8) -- (14,7.2);
	\draw[dotted] (2,8.2) -- (2,5.8);
	\draw[dotted] (4,8.2) -- (4,5.8);
	\draw[dotted] (8,8.2) -- (8,5.8);
	\draw[dotted] (12,8.2) -- (12,5.8);
	\draw[densely dotted] (13,8.2) -- (13,5.8);
	\draw[dotted] (14,8.2) -- (14,5.8);
	\draw (15.5,8.5) node{$|w|$};
	\draw (13,8.5) node{$c+\rad_w(c)$};
	\draw (12,7.5) node{$e$};
	\draw (8,6.5) node{$d$};
	\draw (4,7.5) node{$e'$};
	\dist{4}{6}{6.85}{\scriptsize$s$}
	\dist{12}{14}{6.85}{\scriptsize$s$}
	\end{tikzpicture}
	\caption{When $\Pals[e'] \neq \Pals[e]$, we have $\Pals[d]=e-d+s$ for $s=\min\{\Pals[e'],\Pals[e]\}$.}
	\end{subfigure}
	\begin{subfigure}[h]{\textwidth}
	\begin{tikzpicture}[scale=0.75]\small
	\palin{8}{6.5}{1}
	\palin{4}{2.5}{2}
	\palin{12}{2.5}{2}
	\draw (0,3) -- (15,3);
	\draw (13,2.8) -- (13,3.2);
	\draw (15,2.8) -- (15,3.2);
	\draw[very thick] (13,3) -- (14.5,3);
	\draw[densely dashed] (14.5,1) -- (16,1);
	\draw[densely dashed] (1.5,1) -- (0,1);
	\draw[densely dotted] (11,2.2) -- (11,1.8);
	\draw[dotted] (1.5,3.2) -- (1.5,0.8);
	\draw[dotted] (4,3.2) -- (4,0.8);
	\draw[dotted] (8,3.2) -- (8,0.8);
	\draw[dotted] (12,3.2) -- (12,0.8);
	\draw[densely dotted] (13,3.2) -- (13,0.8);
	\draw[dotted] (14.5,3.2) -- (14.5,0.8);
	\draw (15,3.5) node{$|w|$};
	\draw (13,3.5) node{$c+\rad_w(c)$};
	\draw (12,2.5) node{$e$};
	\draw (8,1.5) node{$d$};
	\draw (4,2.5) node{$e'$};
	\end{tikzpicture}
	\caption{When $\Pals[e'] = \Pals[e]$, we have $\Pals[d] \ge e-d+\Pals[e]$. We try to extend the palindrome at $d$ by reading more letters shown by the dashed lines.}
	\end{subfigure}
	\caption{\label{fig:skip}%
	The string represented by the bold line is scanned when computing $\Pals[e]$. 
	For computing $\Pals[d]$, we compare $\Pals[e']$ and $\Pals[e]$ for $e'=2d-e$ rather than scanning the string shown by the bold line. 
	}
\end{figure}

Note that the situation described above is not yet general enough.
To fix $\Pals[e]$, we call $\PCE(e)$, which may call $\PCE(e_1)$ for some $e_1$ such that $e \plink_w e_1$.
The recursive calls from $\PCE(e)$ gives a palindrome chain $(e,e_1,\dots,e_k)$ such that $e_k + \Pals[e_k] = |w| - 1$.
We use $\Stack$ to remember the palindrome chain whose frontier is $|\prev{w}|$ so that 
the function $\fastExt(d)$ can decide whether the palindrome at $d$ is a suffix of $\prev{w}$ by repeatedly applying Lemma~\ref{lem:prefast}.
If the right arm of the palindrome centered at $d$ can reach $|\prev{w}|$, the left arm of it must have the structure that can be seen as the ``reversed palindrome chain'' symmetric to the one in $\Stack$.
By examining whether $\Pals[2d-e_i] = \Pals[e_i]$ for each $i$, one can tell whether the right arm of the maximal palindrome at $d$ can reach the position $|w|-1$.
If it is the case, $\fastExt(d)$ returns $\msf{true}$ and lets $\slowExt(d)$ extend the palindrome.
Otherwise, $\fastExt(d)$ lets $\Pals[d] = \rad_w(d)$ and returns $\msf{false}$.
When $\slowExt(d)$ finds that $c$ is the right pivot of a suffix Z-shape, it returns $\msf{true}$ after deleting the tail.
Then the length of the working string is smaller than $d$ and $\PCE(d)$ can do nothing other than returning $\msf{true}$.
Suppose that $\PCE(d)$ has been called from $\PCE(c)$ for some $c$.
That is, $d$ is in the right arm of the maximum palindrome at $c$ in the string before the Z-shape contraction.
One can see that $|w| \ge c$, since otherwise, $w$ must have had a Z-shape in a proper prefix.
Moreover we have $|w| \le d$, since otherwise $\PCE(|w|)$ was called before $\PCE(d)$ and the Z-shape had been detected before calling $\PCE(d)$.
If $|w|=c$, we lost the precondition for calling $\PCE(c)$, that $c$ is a suffix palindrome center in $\prev{w}$.
So $\PCE(c)$ can do nothing other than returning $\msf{true}$.
If $|w|> c$, it means that the right arm of the palindrome at $c$ is cut in the middle.
Now, $c$ is a suffix palindrome center in $w$.
Thus, we try to extend the palindrome at $c$.
This is how the \textbf{while} loop of $\PCE$ works.
Note that when $|w| = d$, it means that $\PCE(d)$ may have updated the value $\Pals[d]$ from $\Pals[2c-d]$ to the real value $\rad_{w'}(d)$ where $w'$ is the working string when $\PCE(d)$ was called.
In that case we recover the ``adaptive'' value by letting $\Pals[d] = \Pals[2c-d]$.

Here is a running example.
\begin{example}\label{ex:running}
We will explain the algorithm through a running example.
Let us say that a position $d$ is \emph{stabilized} if either
\begin{itemize}
	\item $\PCE(d)$ has returned $\msf{false}$,
	\item the \textbf{if} condition on Line~\ref{al:if} has been confirmed to be $\msf{false}$ for some $c$,
	\item $\fastExt(d)$ has returned $\msf{false}$.
\end{itemize}
In the following explanation, the value of $\Pals[d]$ for stabilized positions $d$ are shown in black.
If $\PCE(d)$ is running, $\Pals[d]$ is in red.
The others are in blue.

Suppose we are given 
\[
T = \mtt{abccbaabbc cbbaaa abccbaabbc}
\,.\]
For the first three letters including the added sentinel letter,  $\ZRed(T)$ computes $\Pals[c]=0$ and returns $\msf{false}$ quickly for $c=1,2,3$.
Then $\ZRed(T)$ calls $\PCE(4)$, which extends the palindrome at $4$ with $\slowExt(4)$ copying the values of $\Pals$ on the left arm to the right arm and lets $\Pals[4]=3$.
Then $\PCE[4]$ calls $\PCE[7]$, which lets $\Pals[7]=2$ and calls $\PCE[9]$, which lets $\Pals[9]=1$ and calls $\PCE[10]$, which lets $\Pals[10]=0$.
We now have
\[	\begin{tikzpicture}[scale=0.7]\small
	\foreach \x in {1,...,25}
		\draw(\x*0.5-0.2, 0.2) node{\scriptsize$\x$};
	\draw(-1,-0.5) node{$w$};
	\foreach \x/\s in {0/\texttt{\$},1/a,2/b,3/c,4/c,5/b,6/a,7/a,8/b,9/b,10/c}
		\draw(\x*0.5, -0.5) node[anchor=west]{\normalfont$\mathstrut\mathtt{\s}$};
	\draw(-1,-1) node{$\Pals$};
	\foreach \x/\s in {1/0,2/0,3/0,4/\snd{3},5/\fst{0},6/\fst{0},7/\snd{2},8/\fst{0},9/\snd{1},10/\snd{0}}
		\draw(\x*0.5-0.05, -1) node{$\s$};
\end{tikzpicture}\]
and	 positions $c=10,9,8,7,6,5,4$ will be stabilized in this order.
At that moment, $\Stack$ has a maximal palindrome chain $(7,9)$, but it will be discarded without playing any important role. 
Now $\ZRed(T)$ calls $\PCE(11)$ after appending one more letter to $w$.
The palindrome at $11$ is extended until it finds $\rad_w(11)=5$ while copying values of $\Pals$ from left to right. 
\[	\begin{tikzpicture}[scale=0.7]\small
	\foreach \x in {1,...,25}
		\draw(\x*0.5-0.2, 0.2) node{\scriptsize$\x$};
	\draw(-1,-0.5) node{$w$};
	\foreach \x/\s in {0/\texttt{\$},1/a,2/b,3/c,4/c,5/b,6/a,7/a,8/b,9/b,10/c,11/c,12/b,13/b,14/a,15/a,16/a}
		\draw(\x*0.5, -0.5) node[anchor=west]{\normalfont$\mathstrut\mathtt{\s}$};
	\draw(-1,-1) node{$\Pals$};
	\foreach \x/\s in {1/0,2/0,3/0,4/{3},5/{0},6/{0},7/{2},8/{0},9/{1},10/{0},11/\snd{5},12/\fst{0},13/\fst{1},14/\fst{0},15/\fst{2},16/\fst{0},17/}
		\draw(\x*0.5-0.05, -1) node{$\s$};
\end{tikzpicture}\]
Here $\Pals[15]=\Pals[7]=2$, though $\rad_w(15)=1$.
Then $\PCE(11)$ calls $\PCE(16)$, where $\slowExt(16)$ realizes $\lrangle{15,16}$ is a Z-shape due to $\Pals[15] \ge \rad_w(16)$ and deletes its tail.
Since $\PCE(16)$ returns $\msf{true}$, $\PCE(11)$ continues extending the palindrome at $11$, up to its maximum radius $10$.
\[	\begin{tikzpicture}[scale=0.7]\small
	\foreach \x in {1,...,25}
		\draw(\x*0.5-0.2, 0.2) node{\scriptsize$\x$};
	\draw(-1,-0.5) node{$w$};
	\foreach \x/\s in {0/\texttt{\$},1/a,2/b,3/c,4/c,5/b,6/a,7/a,8/b,9/b,10/c,11/c,12/b,13/b,14/a,15/a,16/b,17/c,18/c,19/b,20/a,21/a}
		\draw(\x*0.5, -0.5) node[anchor=west]{\normalfont$\mathstrut\mathtt{\s}$};
	\draw(-1,-1) node{$\Pals$};
	\foreach \x/\s in {1/0,2/0,3/0,4/{3},5/{0},6/{0},7/{2},8/{0},9/{1},10/{0},11/\snd{10},12/\fst{0},13/\fst{1},14/\fst{0},15/\fst{2},16/\fst{0},17/\fst{0},18/\fst{3},19/\fst{0},20/\fst{0},21/\fst{0}}
		\draw(\x*0.5-0.05, -1) node{$\s$};
\end{tikzpicture}\]
Then $\PCE(11)$ calls $\PCE(21)$, which calls $\PCE(23)$, which calls $\PCE(24)$.
\[	\begin{tikzpicture}[scale=0.7]\small
	\foreach \x in {1,...,25}
		\draw(\x*0.5-0.2, 0.2) node{\scriptsize$\x$};
	\draw(-1,-0.5) node{$w$};
	\foreach \x/\s in {0/\texttt{\$},1/a,2/b,3/c,4/c,5/b,6/a,7/a,8/b,9/b,10/c,11/c,12/b,13/b,14/a,15/a,16/b,17/c,18/c,19/b,20/a,21/a,22/b,23/b,24/c}
		\draw(\x*0.5, -0.5) node[anchor=west]{\normalfont$\mathstrut\mathtt{\s}$};
	\draw(-1,-1) node{$\Pals$};
	\foreach \x/\s in {1/0,2/0,3/0,4/{3},5/{0},6/{0},7/{2},8/{0},9/{1},10/{0},11/\snd{10},12/\fst{0},13/\fst{1},14/\fst{0},15/\fst{2},16/\fst{0},17/\fst{0},18/\fst{3},19/\fst{0},20/\fst{0},21/\snd{2},22/\fst{0},23/\snd{1},24/\snd{0},25/{}}
		\draw(\x*0.5-0.05, -1) node{$\s$};
\end{tikzpicture}\]
Positions $24,23,22,21,20,19$ will be stabilized and
 $\Stack$ has the maximal palindrome chain $(21,23)$.
 When fixing the value $\Pals[18]$, this plays an important role.
 We have known that $\rad_w(18) \ge 3$.
 Referring to the values of $\Pals$ of positions $(21,23)$ in $\Stack$ and their symmetric positions $(15,13)$ with respect to $18$, $\fastExt(18)$ tells us that $\rad_w(18) \ge 6$, without comparing $w[18+r+1]$ and $w[18-r]$ for $r =4,5,6$, as illustrated below. 
\[	\begin{tikzpicture}[scale=0.7]\small
	\foreach \x in {1,...,25}
		\draw(\x*0.5-0.2, 0.2) node{\scriptsize$\x$};
	\draw(-1,-0.5) node{$w$};
	\foreach \x/\s in {0/\texttt{\$},1/a,2/b,3/c,4/c,5/b,6/a,7/a,8/b,9/b,10/c,11/c,12/b,13/b,14/a,15/a,16/b,17/c,18/c,19/b,20/a,21/a,22/b,23/b,24/c}
		\draw(\x*0.5, -0.5) node[anchor=west]{\normalfont$\mathstrut\mathtt{\s}$};
	\draw(-1,-1) node{$\Pals$};
	\foreach \x/\s in {1/0,2/0,3/0,4/{3},5/{0},6/{0},7/{2},8/{0},9/{1},10/{0},11/\snd{10},12/\fst{0},13/\fst{1},14/\fst{0},15/\fst{2},16/\fst{0},17/\fst{0},18/\fst{3},19/{0},20/{0},21/{2},22/{0},23/{1},24/{0},25/{}}
		\draw(\x*0.5-0.05, -1) node{$\s$};
	\palin{5.5}{5}{-1.5}
	\palin{6.5}{0.5}{-2}
	\palin{7.5}{1}{-2.5}
	\palin{9}{1.5}{-3}
	\palin{10.5}{1}{-2.5}
	\palin{11.5}{0.5}{-2}
	\palin[thick]{9}{3}{-3.5}
\end{tikzpicture}\]
  Then $\PCE(18)$ is called, which properly extends the palindrome at $18$ using $\slowExt(18)$ and learns $\rad_w(18) \ge 7$,
 at which moment the Z-shape $\lrangle{11,18}$ is detected by $\Pals[18-7] \ge 7$.
After contracting the Z-shape, one more letter is appended, which is the last sentinel $\texttt{\#}$.
We get $\rad_w(11)=0$ and the algorithm terminates with $w[2:|w|-1] = \widehat{T} = \mtt{abccbaabbc}$.
\end{example}

\subsection{Correctness and complexity of the algorithm}
To prove the correctness of our algorithm, we first introduce some technical definitions, which characterize ``adaptive'' values.
\begin{defn}
	Let us write $i \sim_k j$ if $\min\{i,k\} = \min\{j,k\}$.
	We say that $\Pals$ on $w$ is \emph{accurate enough between $c$ and $d$} if for any $e \in [c:d]$, it holds that $\Pals[e] \sim_{d-e} \rad_w(e)$.
	This property is denoted by $\ACE_w(c,d)$ with implicit understanding of $\Pals$.

	Let $\nu_w(c) $ denote the largest $e$ such that $e \plink_w c$.
	If there is no such $e$, let $\nu_w(c) = 1$.
	We say that $c$ is \emph{left-good} in $w$ if $\ACE_w(\nu_w(c),c)$ holds.
	We say that $c$ is \emph{right-good} in $w$ if $\ACE_w(c,c+\rad_w(c))$ holds.
\end{defn}
Clearly if $\ACE_w(c_1,d_1)$ and $[c_2:d_2] \subseteq [c_1:d_1]$, then $\ACE_w(c_2,d_2)$ holds. 
\begin{lemma}\label{lem:preslow}
	Suppose that a suffix palindrome center $c$ is left-good in a pp-irreducible string $w$.
	Then $w$ has a Z-shape occurrence $\lrangle{c-\rad_w(c),c}$ if and only if $\Pals[c-\rad_w(c)] \ge \rad_w(c) \ge 1$.
\end{lemma}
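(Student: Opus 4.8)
The plan is to collapse the whole statement onto the single position $e = c - \rad_w(c)$ and read off the equivalence there from left-goodness. First I would unfold the definition of a Z-shape occurrence. The left and right pivots of $\lrangle{c-\rad_w(c),c}$ are $p_1 = c-\rad_w(c)$ and $p_2 = c$, so $p_2 - p_1 = \rad_w(c)$, and the criterion ``$\lrangle{p_1,p_2}$ is a Z-shape occurrence iff $\rad_w(p_1),\rad_w(p_2) \ge p_2-p_1 > 0$'' specializes to $\rad_w(c-\rad_w(c)) \ge \rad_w(c) \ge 1$, the clause $\rad_w(c) \ge \rad_w(c)$ being vacuous. Hence the lemma is equivalent to showing, under left-goodness and $\rad_w(c)\ge 1$, that $\Pals[c-\rad_w(c)] \ge \rad_w(c)$ holds if and only if $\rad_w(c-\rad_w(c)) \ge \rad_w(c)$.

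Writing $r = \rad_w(c) \ge 1$, the crucial step is to verify that $e = c - r$ lies in the interval $[\nu_w(c),c]$ on which $\ACE_w(\nu_w(c),c)$ guarantees accuracy. Clearly $e < c$, and since $w[1]$ is a unique sentinel the palindrome at $c$ cannot reach it, so $r \le c-1$ and $e \ge 1$ is a genuine position. For the lower bound I would appeal to the definition of $\plink_w$: for a pp-irreducible string, any $e'$ with $e' \plink_w c$ satisfies $e' \le c - \rad_w(c) = e$, so the largest such $e'$, namely $\nu_w(c)$, is at most $e$ (and if no such $e'$ exists then $\nu_w(c)=1 \le e$). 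Thus $e \in [\nu_w(c),c]$.

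With $e$ in range, $\ACE_w(\nu_w(c),c)$ applied at $e$ gives $\Pals[e] \sim_{c-e} \rad_w(e)$, that is $\min\{\Pals[c-r],r\} = \min\{\rad_w(c-r),r\}$. A short case analysis then closes the argument: if $\Pals[c-r] \ge r$ the common minimum equals $r$, forcing $\rad_w(c-r) \ge r$; conversely, if $\rad_w(c-r) \ge r$ then the common minimum is again $r$, forcing $\Pals[c-r] \ge r$. This is precisely the equivalence isolated in the first paragraph, and combining it with the Z-shape characterization completes the proof.

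I expect the only delicate point to be the inclusion $e \in [\nu_w(c),c]$, i.e.\ the inequality $\nu_w(c) \le c - \rad_w(c)$. This is exactly where pp-irreducibility is indispensable: it is what validates the otherwise ``redundant'' clause $e' \le c - \rad_w(c)$ in the definition of $\plink_w$, so that no chain-predecessor of $c$ can lie strictly inside the left arm of the palindrome at $c$. Everything past that inclusion is routine manipulation of the relation $\sim_k$.
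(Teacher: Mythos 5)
Your proof is correct, and its skeleton is the same as the paper's: both arguments reduce the lemma to applying the accuracy guarantee $\ACE_w(\nu_w(c),c)$ at the single position $e=c-\rad_w(c)$ and then unwinding the $\min$-equalities of $\sim_{\rad_w(c)}$, so the only substantive obligation is $e\in[\nu_w(c):c]$. Where you differ is in how that obligation is discharged. The paper pins down $\nu_w(c)$ exactly in each direction of the equivalence: since $w$ is pp-irreducible, no $\lrangle{c-r,c}$ with $r\in[1:\rad_w(c)-1]$ can be a Z-shape (its occurrence would end at $c+r<|w|$ and hence lie inside $\prev{w}$), so $\rad_w(c-r)<r$ for those $r$, whence $\nu_w(c)=c-\rad_w(c)$ when the Z-shape exists and $\nu_w(c)<c-\rad_w(c)$ when it does not. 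You instead note that the first clause in the definition of $\plink_w$ already forces $e'\le c-\rad_w(c)$ for every $e'\plink_w c$; this one-sided bound suffices and lets you treat both directions symmetrically through $\min\{\Pals[e],\rad_w(c)\}=\min\{\rad_w(e),\rad_w(c)\}$, a genuine (if small) simplification, and your explicit check that $e\ge 1$ via the sentinel is a point the paper leaves implicit. One correction to your closing remark, though: pp-irreducibility is not what ``validates'' the clause $e'\le c-\rad_w(c)$ --- that clause holds unconditionally, being part of the definition of $\plink_w$; pp-irreducibility is what renders it \emph{redundant} (derivable from the remaining inequalities), as the paper observes. Consequently your argument, unlike the paper's, never actually invokes pp-irreducibility beyond what is packaged into left-goodness; the role you assign it in your last paragraph is the one it plays in the paper's exact determination of $\nu_w(c)$, not in your definitional bound.
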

\begin{proof}
	Suppose that $\lrangle{c-\rad_w(c),c}$ is a Z-shape in $w$, where $\rad_w(c-\rad_w(c)) \ge \rad_w(c)$.
	For each $r \in [1:\rad_w(c)-1]$, since $\lrangle{c-r,c}$ is not a Z-shape in $w$, $\rad_w(c-r) < r$.
	We have $\nu_{w}(c) = c-\rad_w(c)$ and the left-goodness of $c$ implies $\Pals[c-\rad_w(c)] \ge \rad_w(c)$.
	
	Suppose that $\lrangle{c-\rad_w(c),c}$ is not a Z-shape in $w$,	i.e., $\rad_w(c-\rad_w(c)) < \rad_w(c)$.
	Since $w$ is pp-irreducible,  $\rad_w(c-r) < r$ for any $r \in [1:\rad_w(c)-1]$.
	This means that $\nu_{w}(c) < c-\rad_w(c)$ and the left-goodness of $c$ implies $\Pals[c-\rad_w(c)] = \rad_w(c-\rad_w(c)) < \rad_w(c)$.
\qed\end{proof}
During execution of $\slowExt(c)$, it does not necessarily hold that $\Pals[d]=\rad_w(d)$ for $d <c$,
but $c$ is guaranteed to be left-good as we will show later.
Hence, by Lemma~\ref{lem:preslow}, we certainly detect a Z-shape whose right pivot is $c$.

Since $\slowExt(c)$ copies values of $\Pals[c-r]$ to $\Pals[c+r]$ unless a Z-shape is detected, it may be the case that $\Pals[c+r]\neq\rad_w(c+r)$.
The function $\PCE(c)$ may call $\PCE(d)$ for $d$ such that $c \plink_w d$.
Lemmas~\ref{lem:leftgood} to~\ref{lem:right2left} imply that left-goodness will be inherited from $c$ to $d$.
\begin{lemma}\label{lem:leftgood}
	Suppose that $c$ is left-good in a pp-irreducible string $w$.
	If $c$ is left-good, then $\Pals[c-r]=\rad_w(c-r)$ for all $r \in [1:\rad_{w}(c)-1]$.
	Moreover $\Pals[c-\rad_w(c)]=\rad_w(c-\rad_w(c))$ if $w$ has no Z-shape whose right pivot is $c$.
\end{lemma}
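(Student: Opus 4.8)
The plan is to reduce both assertions to a single structural fact about the left arm of the palindrome at $c$, namely that $\rad_w(c-r) < r$ for every $r \in [1:\rad_w(c)-1]$, and then to read off the two equalities from the definition of left-goodness.

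First I would establish this structural claim from pp-irreducibility. Suppose for contradiction that $\rad_w(c-r) \ge r$ for some $r \in [1:\rad_w(c)-1]$. Since also $\rad_w(c) \ge r > 0$, the pair $\lrangle{c-r,c}$ is a Z-shape occurrence in $w$, occupying the substring $w[c-2r+1:c+r]$. Its right end obeys $c+r \le c+\rad_w(c)-1 \le |w|-1 = |\prev{w}|$, because $\rad_w(c)$ is a maximal radius and hence $c+\rad_w(c) \le |w|$. Thus the whole Z-shape lies inside $\prev{w}$, so $\prev{w}$ is reducible, contradicting the pp-irreducibility of $w$. This proves the claim. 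The one delicate point here is precisely this boundary bookkeeping: the inequality $r \le \rad_w(c)-1$ is exactly what forces the offending Z-shape to terminate at or before position $|\prev{w}|$.

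Next I would locate $\nu_w(c)$. The claim says that for every $e \in [c-\rad_w(c)+1:c-1]$ we have $\rad_w(e) < c-e$, i.e. $e+\rad_w(e) < c$, so $e \not\plink_w c$. Consequently no position in that interval can be $\nu_w(c)$, and since any witness of $\plink_w c$ is strictly below $c$ (or else $\nu_w(c)=1$), we obtain $\nu_w(c) \le c-\rad_w(c)$. In particular every $e \in [c-\rad_w(c):c-1]$ lies in $[\nu_w(c):c]$, so the accuracy guaranteed by left-goodness, $\ACE_w(\nu_w(c),c)$, applies at each such $e$.

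It then remains to convert $\Pals[e] \sim_{c-e} \rad_w(e)$ into genuine equality. For $e=c-r$ with $r \in [1:\rad_w(c)-1]$ the claim gives $\rad_w(e) < c-e$, hence $\min\{\rad_w(e),c-e\}=\rad_w(e)$; together with $\min\{\Pals[e],c-e\}=\min\{\rad_w(e),c-e\}$ this forces $\Pals[e]=\rad_w(e)$, which is the first assertion. For the moreover part, set $e=c-\rad_w(c)$. The only possible Z-shape with right pivot $c$ is $\lrangle{c-\rad_w(c),c}$, since the claim already excludes every candidate $\lrangle{c-r,c}$ with $r<\rad_w(c)$ and no larger radius is available; hence the hypothesis that $w$ has no Z-shape with right pivot $c$ yields $\rad_w(e) < \rad_w(c)=c-e$. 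Applying $\ACE_w(\nu_w(c),c)$ at this $e$ and repeating the same min-argument gives $\Pals[e]=\rad_w(e)$, completing the proof. Apart from the boundary check in the key claim, everything else is routine manipulation of the $\sim_{c-e}$ relation once the bound $\nu_w(c)\le c-\rad_w(c)$ is in hand.
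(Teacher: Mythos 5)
Your proof is correct and takes essentially the same route as the paper's: pp-irreducibility forces $\rad_w(c-r) < r$ for all $r \in [1:\rad_w(c)-1]$ (your explicit boundary check $c+r \le c+\rad_w(c)-1 \le |w|-1$ is exactly the step the paper leaves implicit), which gives $\nu_w(c) \le c-\rad_w(c)$, and then left-goodness read through the $\sim$ relation yields both equalities via the min-argument. The only cosmetic difference is that in the moreover part the paper records the strict bound $\nu_w(c) < c-\rad_w(c)$, whereas you correctly observe that $\nu_w(c) \le c-\rad_w(c)$ already suffices to apply $\ACE_w(\nu_w(c),c)$ at the position $c-\rad_w(c)$.
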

\begin{proof}
	Since $\lrangle{c-r,c}$ is not a Z-shape, we have $\rad_w(c-r) < r$ for all $r \in [1:\rad_w(c)-1]$.
	Thus, $\nu_{w}(c) \le c-\rad_w(c)$.
	The left-goodness of $c$ implies $\rad_w(c-r) = \Pals[c-r] $ for all $r \in [1:\rad_w(c)-1]$.
	Moreover if $\lrangle{c-\rad_w(c),c}$ is not a Z-shape, $\nu_{w}(c) < c-\rad_w(c)$ and thus  $\rad_w(c-\rad_w(c)) = \Pals[c-\rad_w(c)] $.
\qed\end{proof}

\begin{lemma}\label{lem:left2right}
	Suppose $\Pals[c+r]=\rad_w(c-r)$ for all $r \in [1:\rad_{w}(c)]$.
	Then, $c$ is right-good.
\end{lemma}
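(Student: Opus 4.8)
The plan is to unfold the definition of right-goodness and verify its defining condition position by position, using Manacher's symmetry relation (Lemma~\ref{lem:prefast}) as the only nontrivial ingredient. To establish that $c$ is right-good I must show $\ACE_w(c,c+\rad_w(c))$, i.e. that for every $e = c+r$ with $r \in [0:\rad_w(c)]$ we have $\Pals[c+r] \sim_{\rad_w(c)-r} \rad_w(c+r)$. Writing $k = \rad_w(c) - r$ for the relevant threshold, the claim to check is $\min\{\Pals[c+r],k\} = \min\{\rad_w(c+r),k\}$.

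First I would dispose of the easy endpoints. For $r = \rad_w(c)$ we have $k = 0$, and since all radii are nonnegative both sides of $\sim_0$ equal $0$, so the condition holds trivially. For $r = 0$ the threshold is $k = \rad_w(c)$ and the condition reduces to $\Pals[c] \ge \rad_w(c)$; this holds because by the time the lemma is invoked $\slowExt(c)$ has already executed $\Pals[c] \ot \rad_w(c)$ (equivalently, one keeps $\Pals[c]=\rad_w(c)$ as part of the standing state). Thus the substance lies in the range $r \in [1:\rad_w(c)-1]$, where $k \ge 1$.

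For such $r$, I would case-split exactly as in Lemma~\ref{lem:prefast}. If $\rad_w(c+r)=\rad_w(c-r)$, then the hypothesis $\Pals[c+r]=\rad_w(c-r)$ gives $\Pals[c+r]=\rad_w(c+r)$ outright, so the two sides of $\sim_k$ are literally equal and the condition holds for every $k$. If instead $\rad_w(c+r)\ne\rad_w(c-r)$, Lemma~\ref{lem:prefast} yields $\rad_w(c) = r + \min\{\rad_w(c+r),\rad_w(c-r)\}$, i.e. $\min\{\rad_w(c+r),\rad_w(c-r)\} = k$ while the larger of the two strictly exceeds $k$. Since $\Pals[c+r]=\rad_w(c-r)$, the pair $(\Pals[c+r],\rad_w(c+r)) = (\rad_w(c-r),\rad_w(c+r))$ consists of one value equal to $k$ and one value strictly above $k$; in either arrangement the operation $\min\{\cdot,k\}$ collapses both to $k$, so $\sim_k$ holds. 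This covers all admissible $e$ and completes the verification.

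The argument is essentially bookkeeping, so I do not expect a genuine obstacle; the one place demanding care is the interplay between the clamping relation $\sim_k$ and the three regimes of Lemma~\ref{lem:prefast}. In particular one must keep straight that the hypothesis transfers the left-arm radius $\rad_w(c-r)$ (not the true right-arm radius $\rad_w(c+r)$) to $\Pals[c+r]$, so that exact agreement is available only in the symmetric case and one must lean on the clamping at $k=\rad_w(c)-r$ in the asymmetric case. The boundary $r=0$ is the sole spot where the stated hypothesis is insufficient by itself and the external invariant $\Pals[c]=\rad_w(c)$ is needed.
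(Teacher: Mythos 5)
Your proof is correct and takes essentially the same route as the paper's: the paper's two geometric cases (the palindrome at $c-r$ lying strictly inside the one at $c$, giving exact transfer $\rad_w(c+r)=\rad_w(c-r)=\Pals[c+r]$, versus its left arm reaching the left end of the palindrome at $c$, giving $\rad_w(c-r),\rad_w(c+r)\ge\rad_w(c)-r$ so that both sides clamp under $\sim_{\rad_w(c)-r}$) are exactly the content you import by citing Lemma~\ref{lem:prefast}. Your explicit handling of the endpoint $e=c$ --- where the stated hypothesis is silent and one needs the standing invariant $\Pals[c]=\rad_w(c)$ --- is a boundary case the paper's proof passes over (it only treats $r\in[1:\rad_w(c)]$), so your version is, if anything, slightly more careful.
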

\begin{proof}
	For $r \in [1:\rad_w(c)]$, suppose that $c-r-\rad_w(c-r) > c - \rad_w(c)$.
	That is, the end of the left arm of the maximum palindrome at $c-r$ is bigger than the left end of that at $c$.
	Due to the symmetry w.r.t.\ the center $c$, $\rad_w(c+r) = \rad_w(c-r) = \Pals[c+r]$ in this case.

	Suppose otherwise, $c-r-\rad_w(c-r) \le c - \rad_w(c)$, i.e., $\rad_w(c-r) \ge \rad_w(c) - r$.
	That is, the left arm of the maximum palindrome at $c-r$ reaches the left end of that at $c$.
	Due to the symmetry w.r.t\ $c$, $c+r+\rad_w(c+r) \ge c + \rad_w(c)$ in this case.
	Here we have $\rad_w(c-r),\rad_w(c+r) \ge \rad_w(c) - r$.
	Hence $\Pals[c+r] \sim_{\rad_w(c)-r} \rad_w(c+r)$ by $\Pals[c+r]=\Pals[c-r]=\rad_w(c-r)$, 

	Therefore, $c$ is right-good.
\qed\end{proof}
\begin{lemma}\label{lem:right2left}
	Suppose that $c$ is right-good and $c \plink_w d$. 
	Then $d$ is left-good. 
\end{lemma}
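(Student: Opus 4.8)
The plan is to obtain the left-goodness of $d$ directly from the right-goodness of $c$ by a pure interval-inclusion argument. Unfolding the definitions, the hypothesis that $c$ is right-good reads $\ACE_w(c,c+\rad_w(c))$, whereas the goal that $d$ is left-good reads $\ACE_w(\nu_w(d),d)$. Hence it suffices to establish the inclusion $[\nu_w(d):d] \subseteq [c:c+\rad_w(c)]$ and then apply the monotonicity of $\ACE$ under shrinking intervals already recorded right after its definition (if $\ACE_w(c_1,d_1)$ and $[c_2:d_2]\subseteq[c_1:d_1]$, then $\ACE_w(c_2,d_2)$).

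For the inclusion I would check the two endpoints separately. For the right endpoint, the hypothesis $c \plink_w d$ unfolds, for the pp-irreducible $w$ at hand, to $c < d \le c+\rad_w(c) \le d+\rad_w(d)$, which immediately gives $d \le c+\rad_w(c)$. For the left endpoint I would use the maximality built into the definition of $\nu_w(d)$: since $c \plink_w d$ exhibits $c$ as one of the positions $e$ satisfying $e \plink_w d$, the largest such $e$ obeys $\nu_w(d) \ge c$. Combining the two bounds yields $[\nu_w(d):d] \subseteq [c:c+\rad_w(c)]$, after which the subinterval property delivers $\ACE_w(\nu_w(d),d)$ and hence the left-goodness of $d$.

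I do not expect a serious obstacle here; the argument is essentially bookkeeping of interval endpoints. The only genuine content hiding inside the invoked subinterval property — and the one step I would verify if pressed — is the monotonicity of the relation $\sim_k$ in its parameter, namely that $i \sim_K j$ implies $i \sim_k j$ whenever $k \le K$. This in turn follows from the elementary identity $\min\{i,k\}=\min\{\min\{i,K\},k\}$ valid for $k\le K$ (and the symmetric identity for $j$), which lets the common value $\min\{i,K\}=\min\{j,K\}$ be further truncated at $k$ on both sides. Since $\ACE$ is built from $\sim_{\,\cdot\,}$ in exactly this way, shrinking the right endpoint of the interval can only strengthen each pointwise assertion, so no delicate case analysis should be required beyond unfolding $\plink_w$.
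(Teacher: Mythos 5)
Your proof is correct and takes essentially the same route as the paper: both reduce the left-goodness of $d$ to the interval inclusion $[\nu_w(d):d] \subseteq [c:c+\rad_w(c)]$, obtained from $\nu_w(d) \ge c$ (maximality of $\nu_w(d)$, witnessed by $c \plink_w d$) and $d \le c+\rad_w(c)$, and then invoke the subinterval monotonicity of $\ACE$. Your explicit verification that $i \sim_K j$ implies $i \sim_k j$ for $k \le K$ merely fills in the detail the paper dismisses as ``clearly,'' and you silently repair the paper's typo $[c:\rad_w(c)]$ for the intended $[c:c+\rad_w(c)]$.
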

\begin{proof}
	Since $[\nu_w(d):d] \subseteq [c:d] \subseteq [c:\rad_w(c)]$,
	 $\ACE_w(c,\rad_w(c))$ implies $\ACE_w(\nu_w(d),d)$.
\qed\end{proof}

We will show that the function $\PCE$ satisfies the following precondition and postcondition,
where $w$ and $w'$ are the working strings before and after a call, respectively.
\begin{condition}[Precondition of $\PCE(c)$]\label{con:precon}{\ }
	\begin{itemize}
		\item $w$ is pp-irreducible,
		\item $\Stack$ is empty,
		\item $c + \rad_w(c) \ge |w|-1$,
		\item $c$ is left-good,
		\item For all positions $d \in [1: \mscr{A}_w(c)-1] \cup [c+1 : |w|-1]$, $d$ is stable in $w$ and $\Pals[d]=\rad_w(d)$.
	\end{itemize}
\end{condition}

\begin{condition}[Postcondition of $\PCE(c)$]\label{con:postcon}
 If it returns $\msf{true}$, then
 		\begin{itemize}
			\item $w' = \widehat{wu}$ for a string $u$ appended from the input such that $|w'| \le c$,
			\item $\Stack$ is empty.
		\end{itemize}
	If it returns $\msf{false}$, then
		\begin{itemize}
		 	\item $w'$ is a pp-irreducible string 
				 such that $\mscr{F}_{w'}(c) = |w'|-1$ and $wu \ZR^* w'$ for a string $u$ appended from the input,
			\item $(c;\Stack)$ is a palindrome chain such that $\mscr{F}_{w'}(c) = F_{w'}(c;\Stack)$,
			\item For all positions $d \in [1: \mscr{A}_{w'}(c)-1] \cup [c : |w'|-1]$, $d$ is stable in $w'$ and $\Pals[d]=\rad_{w'}(d)$.
	\end{itemize}
\end{condition}
\begin{lemma}[$\PCE$]\label{lem:stabilize}
Suppose that $c$ satisfies Condition~\ref{con:precon}.
Then after executing $\PCE(c)$, Condition~\ref{con:postcon} is satisfied.
\end{lemma}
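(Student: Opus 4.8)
The plan is to prove Lemma~\ref{lem:stabilize} by well-founded induction, taking as measure the lexicographic pair $(m,|w|-c)$, where $m$ is the number of symbols of $T$ still unread when $\PCE(c)$ is entered. First I would check that this measure strictly decreases at every recursive call: a call $\PCE(d)$ is reached only after $\slowExt(c)$ and $\fastExt(d)$, and $\fastExt$ never appends, so either $\slowExt(c)$ has consumed at least one input symbol (whence $m$ drops) or the working string is unchanged and $d>c$ forces $|w|-d<|w|-c$. I would also note that the \textbf{while} loop re-enters its body only after executing the $\Append$ on Line~\ref{al:append}, so each extra iteration consumes a symbol. Granting Condition~\ref{con:precon}, the engine of the argument is a loop invariant asserting that throughout the execution (i) the working string stays pp-irreducible, (ii) $c$ stays left-good, and (iii) reading $\Stack$ from top to bottom gives a palindrome chain $(c_1,\dots,c_k)$ with $c<c_1$, $c_{i-1}\plink_w c_i$, and $c_k+\rad_w(c_k)=|w|-1$, i.e.\ reaching the current suffix.

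The base of each invocation is $\slowExt(c)$. Since $c$ is a suffix palindrome centre of $\prev{w}$ and is left-good, Lemma~\ref{lem:preslow} ensures that the test $\Pals[c-r]\ge r$ on Line~\ref{line:detectZ} reports a suffix Z-shape $\lrangle{c-\rad_w(c),c}$ exactly when one is present; left-goodness is precisely what makes this detection sound despite the possibly inexact entries of $\Pals$. If the test fires, the tail is contracted, producing $w'=\widehat{wu}$ with $|w'|=c-\rad_w(c)<c$ and leaving $\Stack$ empty, which is exactly the \textbf{true} branch of Condition~\ref{con:postcon}. If it does not fire, $\slowExt(c)$ returns with $\Pals[c]=\rad_w(c)$, having copied each $\Pals[c-r]$ into $\Pals[c+r]$; by Lemma~\ref{lem:leftgood} the source values equal $\rad_w(c-r)$, so Lemma~\ref{lem:left2right} makes $c$ right-good, and this right-goodness is what propagates correctness into the recursion.

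I would then analyse the \textbf{for} loop, which scans $d$ downward through the right arm of $c$. For each $d$ meeting the guard on Line~\ref{al:if}, right-goodness of $c$ together with pp-irreducibility gives $c\plink_w d$ (by the remark following the definition of $\plink$, it suffices to check $c<d\le c+\rad_w(c)\le d+\rad_w(d)$, and the last inequality is the adaptiveness property). The function $\fastExt(d)$ then consults the chain in $\Stack$ and applies Lemma~\ref{lem:prefast} at the mirror positions $2d-c_i$: if it returns $\msf{false}$ it has installed the exact $\Pals[d]=\rad_w(d)$ and $d$ is stable, so pushing $d$ preserves invariant~(iii); if it returns $\msf{true}$ it has emptied $\Stack$ and certified $d+\rad_w(d)\ge|w|-1$, so Lemma~\ref{lem:right2left} gives that $d$ is left-good and Condition~\ref{con:precon} holds for $\PCE(d)$. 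When the recursive $\PCE(d)$ returns $\msf{false}$, the induction hypothesis stabilises $[d:\mscr{F}_{w'}(d)]$ and leaves $(d;\Stack)$ a chain with frontier $|w'|-1$; pushing $d$ extends it to a chain from $c$. Once the loop finishes without a break, Corollary~\ref{cor:pchain} and Corollary~\ref{cor:pchain2} (the scan running from large to small $d$) show the accumulated chain is maximal, so $\mscr{F}_{w'}(c)=F_{w'}(c;\Stack)=|w'|-1$; the stability and exactness of $\Pals$ on $[1:\mscr{A}_{w'}(c)-1]\cup[c:|w'|-1]$ then follow from the precondition, the recursive postconditions, and Observation~\ref{obs:frontier}, yielding the \textbf{false} branch.

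The part needing the most care, and the main obstacle, is the case where a recursive $\PCE(d)$ returns $\msf{true}$, i.e.\ a Z-shape was contracted deep inside the recursion, leaving $|w'|\le d$. I would first pin $c\le|w'|\le d$ (were $|w'|<c$, a proper prefix would already contain a Z-shape; were $|w'|>d$, the call $\PCE(|w'|)$ and its detection would have preceded $\PCE(d)$). If $|w'|=c$ the precondition of $\PCE(c)$ is destroyed, since the right arm of $c$ has been cut away, and returning $\msf{true}$ on Line~\ref{al:true2} is the only sound action. Otherwise $c<|w'|\le d$, and after the single $\Append$ on Line~\ref{al:append} the truncated palindrome at $c$ again reaches the suffix of $\prev{w}$, so the precondition is re-established for the next \textbf{while} iteration (here $\Stack$ is empty by the \textbf{true}-postcondition of $\PCE(d)$). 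The delicate point is the assignment $\Pals[d]:=\Pals[2c-d]$ performed when $d=|w|$: the recursion may have overwritten $\Pals[d]$ by the exact radius of the old working string, and this reinstalls the ``adaptive'' mirror value, which the argument around Figure~\ref{fig:adaptive} certifies remains usable for detecting any future Z-shape with left pivot $d$ and right pivot at most $c+\rad_w(c)$. I expect the bulk of the technical labour to lie in verifying that each contraction deletes only a suffix lying strictly to the right of the left arm and the pushed chain of $c$, so that left-goodness of $c$ and the exact values on $[1:\mscr{A}_{w'}(c)-1]$ survive the truncation, and that each such deletion is a genuine $\ZR$ step by Lemma~\ref{lem:preslow}, giving $wu\ZR^* w'$ overall; granting this, finitely many \textbf{while} iterations (bounded through $m$) terminate the loop and the postcondition follows as above.
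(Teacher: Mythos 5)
Your proposal is correct and takes essentially the same approach as the paper: the same loop invariants (your conditions (i)--(iii) together with the deferred maximality claim correspond to Conditions~\ref{con:precon}--\ref{con:whilecon} and the \textbf{for}-loop invariant of Lemma~\ref{lem:for}), the same supporting lemmas (Lemmas~\ref{lem:preslow}, \ref{lem:leftgood}, \ref{lem:left2right}, \ref{lem:right2left}, \ref{lem:prefast} and Corollaries~\ref{cor:pchain}, \ref{cor:pchain2}), and the same delicate case analysis after a recursive $\msf{true}$ return, including pinning $c \le |w'| \le d$ and the restoration $\Pals[d] \ot \Pals[2c-d]$. Your one addition, the explicit lexicographic measure $(m,|w|-c)$ grounding the induction on recursive calls, is sound and merely makes precise what the paper leaves implicit, since there well-foundedness is only justified a posteriori by the call-counting argument in Theorem~\ref{thm:main}.
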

A significant amount of the rest of this subsection is dedicated to proving Lemma~\ref{lem:stabilize}.
Assuming Lemma~\ref{lem:stabilize} being true, we establish the following proposition.
\begin{proposition}\label{prop:correct}
	Algorithm~\ref{alg:zred} calculates the Z-normal form of the input.
\end{proposition}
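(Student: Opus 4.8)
The plan is to establish Proposition~\ref{prop:correct} by showing that the top-level calls of $\PCE$ issued from $\ZRed(T)$ correctly compute the Z-normal form, using Lemma~\ref{lem:stabilize} as a black box. The key observation is that each iteration of the \textbf{while} loop in $\ZRed$ appends one letter to $w$ and then invokes $\PCE(|w|-1)$ on the newly-extended working string. So I would argue by tracking an invariant across these top-level calls: after processing, $\prev{w}$ is irreducible, and the positions $d \in [1:\mscr{A}_w(c)-1]$ are strongly stable with $\Pals[d] = \rad_w(d)$. The heart of the matter is to verify that the precondition (Condition~\ref{con:precon}) holds at the moment of each top-level call, so that Lemma~\ref{lem:stabilize} applies.

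\textbf{First} I would set up the loop invariant for $\ZRed$. At the start of each \textbf{while} iteration, $w$ is irreducible (this is the maintained pp-irreducibility of $\prev{w}$ together with the fact that the previous iteration left $w$ irreducible), $\Stack$ has just been cleared, and every position $d \le |w|-1$ is strongly stable with $\Pals[d]=\rad_w(d)$; this last fact comes from the postcondition of the previous call together with Corollary~\ref{cor:stronglystable}, which guarantees that strong stability, once achieved, persists under appending. After appending one letter, I claim $c=|w|-1$ satisfies Condition~\ref{con:precon}: pp-irreducibility holds because $\prev{w}$ was irreducible; $\Stack$ is empty by the clear; $c+\rad_w(c) \ge |w|-1$ holds trivially since $c = |w|-1$ forces $\rad_w(c) \ge 0$; left-goodness of $c$ follows because all positions to the left have exact $\Pals$ values (so $\ACE_w(\nu_w(c),c)$ holds vacuously from exactness); and the stability condition on $[1:\mscr{A}_w(c)-1]\cup[c+1:|w|-1]$ holds since the right interval is empty and the left positions are strongly stable by the invariant.

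\textbf{Then} I would invoke Lemma~\ref{lem:stabilize}. If $\PCE(|w|-1)$ returns $\msf{false}$, the postcondition yields a pp-irreducible $w'$ with $\mscr{F}_{w'}(c)=|w'|-1$, meaning $c=\mscr{A}_{w'}(c)=|w'|$ is strongly stable (all of $[1:|w'|-1]$ stable by Observation~\ref{obs:frontier} since the frontier factorization closes off at $|w'|-1$), restoring the invariant for the next iteration. The crucial correctness point is that $wu \ZR^* w'$, i.e., the reductions $\PCE$ performs via $\slowExt$'s tail-contraction are genuine $\ZR$ steps, so $w'$ remains on the $\ZR^*$ path from $T$. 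If it returns $\msf{true}$, some Z-shape tail was contracted with $|w'|\le c$, and the loop simply continues. Since every letter of $T$ (including sentinels) is eventually consumed and each contraction is a valid reduction, when the input is exhausted $w$ equals the unique Z-normal form by Theorem~\ref{theo:end} and the confluence result of Aslam--Rivest; stripping the sentinels via $w[2:|w|-1]$ then returns $\widehat{T}$.

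\textbf{The hard part will be} confirming that the sequence of contractions performed across all the nested $\PCE$ calls composes into a valid $\ZR^*$ derivation ending at the true Z-normal form, rather than merely showing each individual contraction is a $\ZR$ step. Because $\PCE$ reads ahead and contracts Z-shapes at interior right pivots during recursion (not just at the top level), I must argue that contracting a Z-shape whose right pivot lies strictly inside the current frontier does not preclude reaching $\widehat{T}$ --- this is exactly where the uniqueness of the normal form (confluence, guaranteed by Aslam--Rivest) is indispensable, letting me conclude that whatever order of valid contractions the algorithm follows, the terminal irreducible string is forced to be $\widehat{T}$. The remaining bookkeeping --- that strong stability propagates correctly and that the precondition is reestablished after each $\msf{true}$-return resumption --- is routine given Corollary~\ref{cor:stronglystable} and Observation~\ref{obs:frontier}.
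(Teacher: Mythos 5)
Your overall strategy coincides with the paper's own proof: induction over the top-level calls of $\PCE$ from $\ZRed$, maintaining strong stability (with exact $\Pals$ values) of all positions to the left, invoking Lemma~\ref{lem:stabilize} as a black box, and appealing to Aslam--Rivest uniqueness to identify the final irreducible string with $\widehat{T}$. Your handling of the $\msf{false}$-return case matches the paper's. The genuine gap is in the $\msf{true}$-return case, which you dispatch with ``the loop simply continues'' and later file under routine bookkeeping. It is not routine: Condition~\ref{con:postcon} in the $\msf{true}$ case yields only $w'=\widehat{wu}$ with $|w'|\le c$ and an empty stack --- it asserts nothing about stability, exactness of $\Pals$, or left-goodness --- so the precondition for the next top-level call does not follow from the postcondition alone. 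The paper closes this with an argument you are missing: a top-level $\msf{true}$ cannot come from Line~\ref{al:true1}, because that would mean $c$ became the right pivot of a suffix Z-shape whose left pivot $c-r$ lies in $[1:c-1]$ and is therefore strongly stable, while Corollary~\ref{cor:stronglystable} gives $|\widehat{wx}| > \mscr{F}_w(c-r) \ge c-r$ for every $x$, contradicting that the contraction leaves a string of length exactly $c-r$. Hence only Line~\ref{al:true2} can fire, forcing $|w'|=c$ exactly, and after the next append the working string satisfies $\prev{w'}=\prev{w}$, so the induction hypothesis re-applies verbatim. Your version leaves $|w'|<c$ open, and in that case restoring your invariant (in particular exactness of $\Pals$ and left-goodness at the new center) would need an additional argument you have not supplied.

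Conversely, the step you single out as ``the hard part'' --- that the interior contractions compose into a valid $\ZR^*$ derivation ending at $\widehat{T}$ --- is immediate and the paper spends no effort on it: the postcondition already asserts $wu \ZR^* w'$ (respectively $w'=\widehat{wu}$), and Aslam--Rivest confluence, baked into the definition of the Z-normal form, settles the order-independence at once. Two smaller glosses are worth fixing. First, your claim that exact $\Pals$ values to the left of the new center survive the append tacitly uses stability: a stable position's maximal palindrome does not touch the end of the string, so appending cannot change its radius --- this persistence is where Lemma~\ref{lem:stable} does the work, not Observation~\ref{obs:frontier}, and your parenthetical ``$c=\mscr{A}_{w'}(c)=|w'|$'' conflates the center with the string length. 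Second, to conclude at termination you should state explicitly that all positions in $[1:|w|-1]$ being stable means $w$ has no nonempty suffix palindrome, hence, being pp-irreducible, $w$ is irreducible; only then does uniqueness of the normal form (Theorem~\ref{theo:end}) identify the output with $\widehat{T}$.
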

\begin{proof}
	Suppose that $\PCE(|w|-1)$ is called from the main function $\ZRed(T)$ and the working string becomes $w'$.
	We show by induction that positions in $[1:|w'|-1]$ are all strongly stable in $w'$ and $w'$ is the Z-normal form of the prefix of the input we have read so far.
	
	Suppose that positions in $[1:c-1]$ are all strongly stable when $\PCE(c)$ is called from $\ZRed(T)$, where $c = |w|-1$.
	This is vacuously true for the first call, where $c=1$.
	Suppose that $\PCE(c)$ returns $\msf{false}$.
	All positions in $[1:c-1]$ are strongly stable in $w$ and thus so are in $w'$ by Corollary~\ref{cor:stronglystable}
	and all positions in $[c:|w'|-1]$ are stable in $w'$ by the postcondition (Condition~\ref{con:postcon}).
	Therefore, positions in $[1:|w'|-1]$ are all strongly stable in $w'$ and $w'$ has no nonempty suffix palindrome, i.e., $w'$ is Z-normal.
	
	Suppose $\PCE(c)$ returns $\msf{true}$.
	This may happen on Lines~\ref{al:true1} and~\ref{al:true2}.
	Actually the former cannot happen, since this means that $c$ becomes a Z-shape's right pivot and the left pivot is less than $c$, which contradicts that all positions $d \in [1: c-1]$ are strongly stable in $w$ (Corollary~\ref{cor:stronglystable}).
	When the latter takes place, we have $\prev{w'}=\prev{w}$, so the induction hypothesis applies.
\qed\end{proof}

When $\msf{Stabilize}(c)$ tries to fix the value $\Pals[c]$ to be $\rad_w(c)$, the right arm of the palindrome at $c$ may be cut in the middle after finding the end of the right arm in a string. 
Then we need to extend it again. The \textbf{while} loop is repeated until $c$ becomes stable.
\begin{condition}[Precondition of the \textbf{while} loop]\label{con:whilecon}
	In addition to Condition~\ref{con:precon}, 
	\begin{itemize}
		\item for all positions $d \in [c+1 : |w|-1]$, $\Pals[d] = \Pals[2c-d]$.
	\end{itemize}
\end{condition}

In what follows we give some lemmas that explain the behavior of our algorithm in a more formal way.

\begin{lemma}[$\slowExt$]\label{lem:slow}
	Suppose that at the beginning of an iteration of the \textbf{while} loop of\/ $\PCE(c)$, Condition~\ref{con:whilecon} holds.
	Let $w$ and $w'$ be the working strings before and after execution of\/ $\slowExt(c)$, respectively.
	If\/ $\slowExt(c)$ returns $\msf{true}$, then
	\begin{itemize}
		\item $w' = \widehat{wu}$ for $u$ appended from the input such that $wu$ is ss-reducible and the right pivot of the Z-shape is $c$,
	\end{itemize}
	If\/ $\slowExt(c)$ returns $\msf{false}$,
	\begin{itemize}
		\item $w' = wu$ for $u$ appended from the input such that $c+\rad_{w'}(c) = |w'|-1$ and ${w'}$ is pp-irreducible,
		\item $\Pals[c] = \rad_{w'}(c)$,
		\item $\Pals[c+r] = \Pals[c-r]$ for all $r \in [1:\Pals[c]]$, 
		\item $c$ is left-good and right-good.
	\end{itemize}
\end{lemma}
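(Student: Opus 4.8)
The plan is to analyse the \textbf{while} loop of $\slowExt(c)$ through a single invariant, maintained at the top of each iteration, from which both exit clauses follow. Write $r$ for the loop variable and $w$ for the current working string. I would carry along: (i)~$c+r+1=|w|$; (ii)~$\rad_w(c)\ge r$; (iii)~$w$ is pp-irreducible; (iv)~$c$ is left-good in $w$; and (v)~$\Pals[c+j]=\Pals[c-j]$ for every $j\in[1:r]$. On entry all five hold: (i) and (ii) follow from the initialisation $r:=|w|-c-1$ and the bound $c+\rad_w(c)\ge|w|-1$ of Condition~\ref{con:precon}; (iii) and (iv) are inherited from Condition~\ref{con:precon}; and (v) is the extra clause of Condition~\ref{con:whilecon}, rewritten with $d=c+j$ so that $2c-d=c-j$.

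Maintenance of (i), (ii) and (v) is pure bookkeeping: after the guard $w[c+r+1]=w[c-r]$ succeeds, incrementing $r$ restores (ii), the assignment $\Pals[c+r]:=\Pals[c-r]$ restores (v), and appending a letter — which never touches positions $\le|w|$ — restores (i). For (iii), observe that just before each $\Append$ we have $c+r=|w|$ with $\rad_w(c)\ge r$, hence $\rad_w(c)=r$, so $c$ is a suffix palindrome centre of radius exactly $r$, and the failed test leaves $\Pals[c-r]<r$. By (iv) and Lemma~\ref{lem:preslow}, $\lrangle{c-r,c}$ is then \emph{not} a Z-shape. Were this string ss-reducible, Lemma~\ref{lem:uniqueZ} would give it a unique suffix palindrome equal to the tail of its suffix Z-shape; since a suffix palindrome of a given radius is centred at a unique position, either that palindrome differs from the one at $c$ — forcing, as in the proof of Lemma~\ref{lem:uniqueZ}, a non-suffix Z-shape and contradicting irreducibility of the proper prefix — or it coincides with $c$, making $\lrangle{c-r,c}$ the suffix Z-shape and contradicting what we just showed. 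Hence the pre-$\Append$ string is irreducible and its extension pp-irreducible.

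The step I expect to be the main obstacle is the persistence of left-goodness (iv) under an $\Append$. Since $\slowExt(c)$ writes only to $\Pals[c+\cdot]$ (and finally to $\Pals[c]$), every $\Pals[e]$ with $e<c$ is frozen, so (iv) can break only if some $\rad_w(e)$ with $e<c$ changes. A radius can grow under an append only at a suffix palindrome centre, and for $e<c$ this means $\rad_w(e)=|w|-e>c-e$; thus $\min\{\rad_w(e),c-e\}=c-e$ before and after the append. Because the relation $\sim_{c-e}$ in the definition of $\ACE_w$ compares values only up to the threshold $c-e$, the frozen value $\Pals[e]$ — which already satisfied $\Pals[e]\sim_{c-e}\rad_w(e)$ — continues to do so. The genuinely delicate part is that an append also moves $\nu_w(c)$, so one must check that every position newly falling into the interval $[\nu_w(c),c]$ is of this growing-radius type and hence still governed by the saturated comparison; this is exactly the robustness that the ``adaptive'' transferred values were introduced to guarantee.

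Given the invariant, the two exit cases are short. If the loop halts by detecting a Z-shape, the working string is $wu$ with $c+r=|wu|$, so (ii) forces $\rad_{wu}(c)=r$; as $wu$ is pp-irreducible and $c$ is left-good, Lemma~\ref{lem:preslow} converts the test $\Pals[c-r]\ge r$ into the assertion that $\lrangle{c-r,c}$ is a suffix Z-shape with right pivot $c$, so $wu$ is ss-reducible and contracting its tail of length $2r$ yields $w'=wu[1:c-r]=\widehat{wu}$. If instead the guard fails, then $c+r+1=|w'|$ together with (ii) give $\rad_{w'}(c)=r$ and $c+\rad_{w'}(c)=|w'|-1$, the assignment $\Pals[c]:=r$ gives $\Pals[c]=\rad_{w'}(c)$, and (v) gives $\Pals[c+j]=\Pals[c-j]$ for $j\in[1:\Pals[c]]$; left-goodness is (iv). Finally, pp-irreducibility of $w'$ means $\prev{w'}$ has no Z-shape, so $\lrangle{c-r,c}$ — a would-be suffix Z-shape of $\prev{w'}$ — is none, i.e.\ $c$ is not a Z-shape right pivot; Lemma~\ref{lem:leftgood} then upgrades (v) to $\Pals[c+j]=\rad_{w'}(c-j)$ for all $j\in[1:r]$, and Lemma~\ref{lem:left2right} delivers right-goodness.
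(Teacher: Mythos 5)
Your proposal retraces the paper's own proof, repackaged as an explicit loop invariant: the \textsf{true} case via Lemma~\ref{lem:preslow}, pp-irreducibility maintained across each append via Lemma~\ref{lem:uniqueZ} (your case split ``differs from $c$ or coincides with $c$'' is roundabout --- uniqueness of the suffix palindrome forces coincidence directly --- but sound), the transferred values from Condition~\ref{con:whilecon}, and right-goodness via Lemmas~\ref{lem:leftgood} and~\ref{lem:left2right}. Your saturation observation --- that for $e<c$ a radius can grow only if $e$ is a suffix palindrome center, whence $\rad_w(e)=|w|-e>c-e$ and the comparison $\Pals[e]\sim_{c-e}\rad_w(e)$ is insensitive to the growth --- is correct and is in fact left implicit in the paper, so on that point you are more complete than the published argument.

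The genuine gap sits exactly where you predicted it: the persistence of left-goodness (iv) across an append. You reduce it to checking ``every position newly falling into the interval $[\nu_w(c),c]$'' and then discharge that obligation by appeal to ``the robustness that the adaptive transferred values were introduced to guarantee.'' That is an assertion, not an argument, and it cannot work as stated: the adaptive values are those written at positions $c+r$, to the \emph{right} of $c$, and they say nothing about how $\nu_w(c)$ moves, while for positions $e<\nu_w(c)$ that would newly enter the interval you have no prior $\ACE$ guarantee at threshold $c-e$ to fall back on. What is needed --- and what the paper supplies --- is that no position ever newly falls in, because the interval can only shrink: the old witness persists, i.e.\ $\nu_w(c)\plink_w c$ implies $\nu_w(c)\plink_{w'}c$ (the radius at $\nu_w(c)$ only grows, so $c\le\nu_w(c)+\rad_{w'}(\nu_w(c))$ is preserved, while the bound $\nu_w(c)+\rad_{w'}(\nu_w(c))\le c+\rad_{w'}(c)$ is kept by the extension of the tracked palindrome at $c$), hence $\nu_{w'}(c)\ge\nu_w(c)$ and $[\nu_{w'}(c),c]\subseteq[\nu_w(c),c]$. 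Once this monotonicity of $\nu$ is established, your own saturation argument finishes (iv), and the rest of your invariant closes the proof.
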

\begin{proof}
By Lemma~\ref{lem:preslow} and the fact that $\prev{w}$ is irreducible, $\slowExt(c)$ returns $\msf{true}$ iff $w$ has a Z-shape with right pivot $c$, in which case $\slowExt(c)$ deletes the tail.
Suppose $\slowExt(c)$ returns $\msf{false}$.
By $c+\rad_w(c) \ge |w|-1$ and the behavior of $\slowExt(c)$, for all $q \in [c:|w'|-1]$, $c$ is the center of a suffix palindrome in $w'[1:q]$,
but no Z-shape had $c$ as its right pivot in $w'[1:q]$. Hence, $\prev{w'}$ has no Z-shape by Lemma~\ref{lem:uniqueZ}, i.e., $w'$ is pp-irreducible.

The algorithm lets $\Pals[c] = \rad_{w'}(c)$ and $\Pals[c+r] = \Pals[c-r]$ for all $r \in [|w|-c:\Pals[c]]$,
while  $\Pals[c+r] = \Pals[c-r]$ for $r \in [1:|w|-c-1]$ is guaranteed by Condition~\ref{con:whilecon}.
The right-goodness follows Lemmas~\ref{lem:leftgood} and~\ref{lem:left2right}.
Since $\nu_w(c) \plink_w c$ implies $\nu_w(c) \plink_{w'} c$, we have $\nu_{w'}(c) \ge \nu_w(c)$.
Hence, $c$ is left-good in $w'$.
\qed\end{proof}

\begin{lemma}[$\fastExt$]\label{lem:fast}
	Suppose that\/ $\fastExt(d)$ is called from\/ $\PCE(c)$ satisfying that
	\begin{itemize}
		\item $c$ is left-good and right-good,
		\item $\Pals[c]=\rad_w(c)$ and $\Pals[e]=\rad_w(e)$ for all $e \in [d+1 : |w|-1 ]$,
		\item either $\Stack$ is empty or $\Stack$ is a maximal palindrome chain from some $e > d$ such that $c \plink_w e$,
		\item $F_{w}(c;\Stack) = \max(\{\mscr{F}_w(e) \mid d < e \le c+\rad_w(c)\} \cup \{c+\rad_w(c)\}) = |w|-1$.
	\end{itemize}
	Then after the execution, 
	\begin{itemize}
		\item if it returns $\msf{true}$, then $d+\rad_{w}(d) \ge |w|-1$ and $\Stack$ is empty,
		\item if it returns $\msf{false}$, then
		\begin{itemize}
			\item $\Stack$ is not empty and $(d;\Stack)$ is a maximal palindrome chain such that $F_{w}(d;\Stack) = \mscr{F}_w(d) = |w|-1$, 
		 	\item $\Pals[e]=\rad_w(e)$ and $e$ is stable in $w$ for all $e \in [d: |w|-1]$.
		 \end{itemize}
	\end{itemize}
\end{lemma}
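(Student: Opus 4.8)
The plan is to analyze the \textbf{while} loop of $\fastExt(d)$ by reading $\Stack$, from top to bottom, as the maximal palindrome chain $(e_1,\dots,e_k)$ that the hypotheses provide, with $c\plink_w e_1$, $d<e_1$, and $e_k+\rad_w(e_k)=F_w(c;\Stack)=|w|-1$; in particular every $e_i$ lies in $[d+1:|w|-1]$, so $\Pals[e_i]=\rad_w(e_i)$ by hypothesis. The iteration with $r=e_i-d$ compares $\Pals[d-r]$ against $\Pals[d+r]=\Pals[e_i]$, and I would run a single induction on $i$ carrying the invariant that, on entry to the iteration handling $e_i$, one has $\rad_w(d)\ge r$ (so $e_i$ lies in the right arm of the palindrome at $d$). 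The base case $i=1$ follows from the \textbf{if} condition $d+\Pals[d]\ge c+\Pals[c]$ that gated the call together with the right-goodness of $c$ (the adaptivity argument of Figure~\ref{fig:adaptive}), which yields $d+\rad_w(d)\ge c+\rad_w(c)\ge e_1$, hence $\rad_w(d)\ge e_1-d$.

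The heart of the argument turns each comparison into an exact statement about $\rad_w(d)$ through Lemma~\ref{lem:prefast} at center $d$ with offset $r$. First I would record the structural fact that, since $w$ is pp-irreducible and $c<d\le c+\rad_w(c)\le d+\rad_w(d)$, the pair $\lrangle{c,d}$ cannot be a non-suffix Z-shape, which forces $d-\rad_w(d)\ge c$, i.e. $\rad_w(d)\le d-c$. Combined with the invariant $r\le\rad_w(d)$ this places every examined mirror $d-r$ in $[c:c+\rad_w(c)]$, exactly the range governed by the right-goodness of $c$, where $\Pals[d-r]\sim_{\,c+\rad_w(c)-(d-r)}\rad_w(d-r)$. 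Together with $\Pals[e_i]=\rad_w(e_i)$, Lemma~\ref{lem:prefast} then yields the intended dichotomy: if the loop pops ($\Pals[d-r]\ge\Pals[e_i]$) then $\rad_w(d)\ge r+\rad_w(e_i)$, so $d+\rad_w(d)\ge e_i+\rad_w(e_i)\ge e_{i+1}$ and the invariant survives for $e_{i+1}$; if the loop stops and returns $\msf{false}$ ($\Pals[d-r]<\Pals[e_i]$) then $\rad_w(d)=r+\rad_w(d-r)$, which, provided $\Pals[d-r]=\rad_w(d-r)$, matches the assignment $\Pals[d]\ot r+\Pals[d-r]$---the proviso being the delicate point treated last.

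With the dichotomy in hand, the two output cases are short. If every $e_i$ is popped, the invariant at $i=k$ gives $d+\rad_w(d)\ge e_k+\rad_w(e_k)=|w|-1$ and $\Stack$ is empty, which is the $\msf{true}$ postcondition. If the loop stops at $e_i$, then $\Pals[d]=\rad_w(d)$; the remaining stack is $(e_i,\dots,e_k)$, and I would verify $d\plink_w e_i$ from $d+\rad_w(d)=e_i+\rad_w(d-r)$ with $\rad_w(d-r)\le\rad_w(e_i)$, so that $(d;\Stack)$ is a palindrome chain of frontier $|w|-1$. Its maximality, $F_w(d;\Stack)=\mscr{F}_w(d)=|w|-1$, I would obtain from Corollary~\ref{cor:pchain}: since $c<d\le\mscr{F}_w(c)=|w|-1$ we get $\mscr{F}_w(d)\le\mscr{F}_w(c)=|w|-1$, while the chain already witnesses the reverse inequality. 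Finally $\Pals[e]=\rad_w(e)$ and the stability of each $e\in[d:|w|-1]$ follow from the hypothesis on $[d+1:|w|-1]$, the freshly computed $\Pals[d]$, and Corollary~\ref{cor:pchain} used to bound the frontiers of positions in $[d:|w|-1]$ strictly below $|w|$.

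The step I expect to fight hardest is the stop-case proviso $\Pals[d-r]=\rad_w(d-r)$, since the copied value $\Pals[d-r]$ need not equal the true radius: right-goodness only guarantees agreement below the threshold $c+\rad_w(c)-(d-r)$, so when the palindrome at $d-r$ reaches \emph{exactly} the right end $c+\rad_w(c)$ of the palindrome at $c$, the value $\Pals[d-r]$---inherited, via the transfer $\Pals[c+r]\ot\Pals[c-r]$ performed in $\slowExt(c)$, from the mirror position $2c-(d-r)$ on the left arm of $c$---can strictly exceed $\rad_w(d-r)$ and make the assignment over-report. I would neutralise this by proving that such an over-reporting configuration can never be the one that terminates the loop: under $\rad_w(d)\le d-c$ and the symmetry of the palindrome at $c$, having $\rad_w(d-r)$ meet the boundary while its mirror is strictly larger forces the letters around $c$ to be symmetric enough that either $\rad_w(2c-(d-r))=\rad_w(d-r)$ (restoring exactness) or $\Pals[d-r]\ge\Pals[e_i]$ (so the iteration pops instead of stopping). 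Making this boundary case analysis fully rigorous---essentially arguing that an over-reported mirror value never triggers the $\msf{false}$ branch---is, I expect, the single place where a delicate, case-splitting argument is unavoidable.
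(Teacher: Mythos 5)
Your overall skeleton matches the paper's proof (induction down the stack with the carried invariant $r\le\rad_w(d)$, the dichotomy via Lemma~\ref{lem:prefast}, the chain $(d;\Stack)$ in the $\msf{false}$ case), but the point you flagged as delicate is a genuine gap, and your proposed neutralisation is the wrong move. The paper proves a stronger invariant: at every examined mirror position, $\Pals[d-r_i]=\rad_w(d-r_i)$ \emph{exactly}, and the argument is a one-liner from ingredients you already have on the table. If $\Pals[d-r]\neq\rad_w(d-r)$, then right-goodness of $c$ means both values are at least the threshold $c+\rad_w(c)-(d-r)$, which is at least $r$ because $d\le c+\rad_w(c)$; combined with your invariant $r\le\rad_w(d)$, the pair $\lrangle{d-r,d}$ is then a Z-shape ending at $d+r\le|w|-1$, i.e., occurring inside $\prev{w}$ --- contradicting pp-irreducibility. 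So over-reporting at an examined mirror is simply impossible; no boundary case analysis about ``letters symmetric around $c$'' is needed, and your proposed disjunction would not close the argument even if proven: in its second branch (over-reported value triggers a pop), Lemma~\ref{lem:prefast} could give $\rad_w(d)=r+\rad_w(d-r)<r+\rad_w(e_i)$, so the invariant $\rad_w(d)\ge e_{i+1}-d$ and the $\msf{true}$-case conclusion $d+\rad_w(d)\ge|w|-1$ would both fail --- popping would then be incorrect algorithm behavior, not a harmless alternative. Note also that your pop-case derivation already silently uses exactness ($\Pals[d-r]\ge\Pals[e_i]$ must be converted into $\rad_w(d-r)\ge\rad_w(e_i)$ before Lemma~\ref{lem:prefast} applies), so the exactness invariant is needed in \emph{both} branches, not only the stopping one.

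There is a second flaw in your $\msf{false}$-case wrap-up: you derive maximality and stability from ``$\mscr{F}_w(c)=|w|-1$'' via Corollary~\ref{cor:pchain}, but the hypotheses do not give this. The fourth bullet only controls $\mscr{F}_w(e)$ for $e>d$; a palindrome chain from $c$ may step through a not-yet-examined position $f\in(c,d]$ whose true radius (only a copied value $\Pals[f]$ is known there) could reach $|w|$ --- this is exactly what the remaining iterations of the \textbf{for} loop still have to check --- so $\mscr{F}_w(c)$ may equal $|w|$ at the moment $\fastExt(d)$ is called. The paper's route avoids $c$ entirely: apply Corollary~\ref{cor:pchain2} to the fourth hypothesis to get $\mscr{F}_w(e)\le|w|-1$ for every $e\in[d+1:|w|-1]$, then note that in the stopping case $d+\rad_w(d)=e_i+\rad_w(d-r)<e_i+\rad_w(e_i)\le|w|-1$, so any maximal palindrome chain from $d$ of positive length has its first link $d_1\in[d+1:|w|-1]$, whence $\mscr{F}_w(d)=\mscr{F}_w(d_1)\le|w|-1$; the same bound yields the stability of all of $[d:|w|-1]$. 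This repair is local, but as written your derivation relies on a claim that is false in general under the stated hypotheses.
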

\begin{proof}
If $\Stack$ is empty when $\fastExt(d)$ is called, it immediately returns $\msf{true}$. Since $c$ is right-good, $d+\Pals[d] \ge c+\Pals[c]=c+\rad_w(c)$ implies $d+ \rad_w(d) \ge c+\rad_w(c)=|w|-1$.

Suppose that $\Stack = (c_1,\dots,c_k)$ for some $k \ge 1$. Let $r_i = c_i-d$ for $i \in [1:k]$.
By induction, we show that at the beginning of the $i$th iteration of the \textbf{while} loop with $\Stack = (c_i,\dots,c_k)$,
\begin{itemize}
	\item[(i)] $r_i \le \rad_{w}(d)$, 
	\item[(ii)] $r_i < d-c$, 
	\item[(iii)] $\rad_w(d-r_i) = \Pals[d-r_i]$,
\end{itemize}
unless $\fastExt(d)$ returns $\msf{false}$ earlier.

First we show that the above claims (i)--(iii) hold for $i=1$.
The assumption $c \plink_w c_1 = d + r_1$ implies that $d + r_1 \le c+\rad_w(c) \le d + r_1 +\rad_w(d + r_1)$.
Since $c$ is right-good, $d+ \Pals[d] \ge c+ \Pals[c]$ implies $d+ \rad_w(d) \ge c+ \rad_w(c)$.
Then we have $d + r_1 \le c+\rad_w(c) \le d+ \rad_w(d)$, i.e., $r_1 \le \rad_w(d)$.
This proves (i).
If (ii) did not hold, together with (i), we have $d-c \le r_1 \le \rad_{w}(d)$, in which case, $\lrangle{c,d}$ is a Z-shape in $w$.
Since $d+r_1 \le |w|-1$, this Z-shape occurs in $\prev{w}$, which contradicts that $w$ is pp-irreducible.
To show (iii) by contradiction, suppose $\Pals[d-r_1] \neq \rad_w(d-r_1)$.
By (ii), $c < d-r_1$.
Since $c$ is right-good, $\rad_w(d-r_1) \ge c+\rad_w(c)-(d-r_1) \ge r_1$.
This means $\lrangle{d-r_1,d}$ is a Z-shape in $w$. Since $d+r_1 \le |w|-1$, this Z-shape occurs in $\prev{w}$. Contradiction.

We assume the claims (i)--(iii) hold at the beginning of the $i$th iteration of the \textbf{while} loop.

Suppose that $\Pals[d-r_i] < \Pals[d+r_i]$, which means $\rad_w(d-r_i) < \rad_w(d+r_i)$ by (iii) and the assumption of the lemma.
In this case, the function returns $\msf{false}$ after letting $\Pals[d] = r_i+\rad_w(d-r_i)$.
By Lemma~\ref{lem:prefast}, $\rad_w(d) = r_i + \rad_w(d-r_i) < r_i + \rad_w(d+r_i)$,
which means $\Pals[d] = \rad_w(d)$.
Moreover, $\rad_w(d) < r_i + \rad_w(d+r_i)$ and (i) implies $d \plink_w d+r_i$.
This means that $\mscr{F}_w(d) \ge F_{w}(d;c_i,\dots,c_k)$.
It suffices to show $\mscr{F}_w(d) \le F_{w}(c_k)$.
Let $(d,d_1,\dots,d_m)$ be a maximal palindrome chain from $d$.
Note that $m \ge 1$, since $d+\rad_w(d) < c_i+\rad_w(c_i)$.
By $d_1 \le d+\rad_w(d) < |w|-1$, we have $d_1 \in [d+1:|w|-1]$.
Applying Corollary~\ref{cor:pchain2} to the assumption, we have $\mscr{F}_w(d) = \mscr{F}_w(d_1) \le \mscr{F}_w(c_1) = \mscr{F}_w(c_k) = |w|-1$.
%

Suppose that $\Pals[d-r_i] \ge \Pals[d+r_i]$, which means $\rad_w(d-r_i) \ge \rad_w(d+r_i)$ by (iii) and the assumption of the lemma.
Lemma~\ref{lem:prefast} implies 
\begin{equation}\label{eq:dri}
\rad_w(d) \ge r_i + \rad_w(d+r_i)
\,.
\end{equation}
If $i=k$, then the last element $c_k$ is popped from $\Stack$ and $\fastExt(d)$ returns $\msf{true}$.
By Eq.~\ref{eq:dri}, $d+\rad_w(d) \ge d+r_k + \rad_w(d+r_k) = F_w(c_k) = |w|-1$ holds.

For $i < k$, we must show that claims (i)--(iii) hold for $i+1$.
Recall that $c_i \plink_w c_{i+1}$ means $d + r_i + \rad_w(d+r_i) \ge d + r_{i+1}$,
 with which Eq.~\ref{eq:dri}, we have $ \rad_w(d) \ge  r_i + \rad_w(d+r_i) \ge r_{i+1}$. So (i) holds for $i+1$.
Then (ii) and (iii) follow (i) by the same argument for the case $i=1$ replacing $r_1$ with $r_i$.
%
\qed
\end{proof}

We have observed by Lemma~\ref{lem:slow} ($c$ is right-good) that when $\PCE(c)$ calls $\PCE(d)$, we have $c \plink_w d$,
and that for every position $e \in [c:|w|-1]$, there is $d$ such that $d \le e \le d+\rad_w(d)$ and there is a palindrome chain from $c$ to $d$.
This implies that if $\PCE(c)$ is called from $\ZRed(T)$, then $\mscr{A}_w(d)=c$ for every position $d \in [c:|w|-1]$ at any moment before $\PCE(c)$ terminates.
By Lemmas~\ref{lem:fast} and~\ref{lem:right2left}, when $\fastExt(d)$ returns $\msf{true}$, Condition~\ref{con:precon} for $d$ is satisfied, provided that the precondition of Lemma~\ref{lem:fast} is satisfied.

Now we have prepared enough for analyzing the function $\PCE(c)$.
Our goal is to show that Condition~\ref{con:postcon} holds for $\PCE(c)$ provided that Condition~\ref{con:precon} holds.
The function $\PCE(c)$ calls $\PCE(d)$ recursively.
For now we assume that Condition~\ref{con:precon} implies Condition~\ref{con:postcon} for those $d$.
Then this inductive argument completes a proof of Lemma~\ref{lem:stabilize}.

Suppose that Condition~\ref{con:precon} holds for $\PCE(c)$.
If $\slowExt(c)$ returns $\msf{true}$, clearly Condition~\ref{con:postcon} holds by Lemma~\ref{lem:slow}.
Hereafter we suppose that $\slowExt(c)$ returns $\msf{false}$.

\begin{lemma}[\textbf{for} loop]\label{lem:for}
	Suppose that Condition~\ref{con:whilecon} is satisfied at the beginning of every iteration of the \textbf{while} loop.
	Then, at the beginning of each iteration of the \textbf{for} loop of\/ $\PCE(c)$, the following holds.
	\begin{itemize}
		\item[(i)] $(c;\Stack)$ is a palindrome chain such that  
		\[
		F_w(c;\Stack) = \max(\{\,\mscr{F}_w(e) \mid d < e \le c+\rad_w(c)\,\} \cup \{c+\rad_w(c)\}) = |w|-1,
		\]
		\item[(ii)] $c$ is left-good and right-good,
		\item[(iii)] for all $e \in [c+1: d]$, $\Pals[d] = \Pals[2c-d]$,
		\item[(iv)] $\Pals[c]=\rad_w(c)$ and $\Pals[e]=\rad_w(e)$ for all $e \in [d+1: |w|-1]$.
	\end{itemize}
	Moreover if we \textbf{break} the loop on Line~\ref{al:break}, still Condition~\ref{con:whilecon} holds.
	If we return $\msf{true}$ on Line~\ref{al:true2}, Condition~\ref{con:postcon} holds for $c$.
\end{lemma}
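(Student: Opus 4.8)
The plan is to prove the four-part invariant (i)--(iv) by induction on the iterations of the \textbf{for} loop of $\PCE(c)$, and then to read off the two ``moreover'' claims from the state of the variables at the two exit points (the \textbf{break} on Line~\ref{al:break} and the \textbf{return} on Line~\ref{al:true2}). For the base case the loop variable is $d=c+\Pals[c]$, and since we are in the regime where $\slowExt(c)$ has just returned $\msf{false}$, I would feed Condition~\ref{con:whilecon} into Lemma~\ref{lem:slow} to obtain the starting state: $\Pals[c]=\rad_w(c)$, $c+\rad_w(c)=|w|-1$, $c$ is both left- and right-good, and the transferred equalities $\Pals[c+r]=\Pals[c-r]$ hold for $r\in[1:\Pals[c]]$; the stack is empty at loop entry (by Condition~\ref{con:precon} on the first \textbf{while} pass, and by the postcondition of the recursive call that caused the \textbf{break} on later passes). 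Then (ii) is immediate, (iii) is exactly the transferred-value statement, (iv) is vacuous because $[d+1:|w|-1]=\emptyset$ when $d=c+\rad_w(c)=|w|-1$, and (i) holds for the trivial chain $(c)$ whose index set $\{e\mid d<e\le c+\rad_w(c)\}$ is empty.

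For the inductive step I would split on the test of Line~\ref{al:if}. When $d+\Pals[d]<c+\Pals[c]$, I would argue through the symmetry reasoning of Lemma~\ref{lem:left2right} (equivalently Lemma~\ref{lem:prefast} with $r=d-c$) that the transferred value is in fact exact here, so $\Pals[d]=\rad_w(d)$ with $d+\rad_w(d)<c+\rad_w(c)=|w|-1$; Corollary~\ref{cor:pchain} then gives $\mscr{F}_w(d)\le|w|-1$, so shifting $d$ to $d-1$ moves $d$ into the exact-radius block of (iv), lets the frontier-max of (i) harmlessly absorb $\mscr{F}_w(d)$, and leaves the stack and the remaining transferred values untouched. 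When $d+\Pals[d]\ge c+\Pals[c]$, I would first verify from (i)--(iv) that the precondition of Lemma~\ref{lem:fast} holds and that $c\plink_w d$, then dispatch on $\fastExt(d)$. If $\fastExt(d)$ returns $\msf{false}$, its postcondition supplies $\Pals[d]=\rad_w(d)$, stability of $d$, and a maximal chain $(d;\Stack)$ with frontier $|w|-1$; the subsequent push turns $(c;\Stack)$ into a palindrome chain and the invariant shifts. If $\fastExt(d)$ returns $\msf{true}$, then $c\plink_w d$, $d$ is left-good by Lemma~\ref{lem:right2left}, and the stack is empty, so Condition~\ref{con:precon} holds for $d$ and the inductive hypothesis for Lemma~\ref{lem:stabilize} applies to $\PCE(d)$. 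When $\PCE(d)$ returns $\msf{false}$ I would use the fact that its internal contractions and extensions touch only positions $\ge d$, so $c$, its left arm, and the slots $[c+1:d-1]$ are untouched, $c\plink_{w'}d$ persists, (ii) and (iii) survive, and the push again extends the chain while (i) and (iv) shift.

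The main obstacle is the $\PCE(d)$-returns-$\msf{true}$ branch, where a suffix Z-shape has been contracted down to length $|w'|\le d$ and I must re-establish Condition~\ref{con:whilecon} after the single $\Append$ on Line~\ref{al:append} so that the next \textbf{while} pass is legitimate (or, when $c=|w|$, establish Condition~\ref{con:postcon} at the \textbf{return} of Line~\ref{al:true2}, which is easy: the working string has length $c$, equals a Z-normal contraction $\widehat{w u}$ of the prefix read so far, and the stack is empty). The crux is the bookkeeping of which positions survive the deletion: I would argue that because $\slowExt$ deliberately \emph{transferred} values rather than recomputing them, the surviving prefix---in particular $c$'s left arm and the shortened right arm $[c+1:|w|-1]$---still carries the adaptive equalities $\Pals[e]=\Pals[2c-e]$, that $c$ stays left-good and remains a suffix-palindrome-center candidate with $c+\rad_w(c)\ge|w|-1$, and that the conditional assignment $\Pals[d]:=\Pals[2c-d]$ (taken exactly when $d=|w|$) repairs the single slot that $\PCE(d)$ may have overwritten with an exact value; pp-irreducibility after the append follows because the contracted string is irreducible. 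Verifying that left-goodness and every transferred equality genuinely survive the contraction is the delicate part and is where I expect to spend most of the effort.
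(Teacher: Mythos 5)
Your overall architecture matches the paper's proof closely: downward induction on $d$ starting from $d=c+\Pals[c]$, base case read off from Lemma~\ref{lem:slow} plus Condition~\ref{con:whilecon}, the dispatch on Line~\ref{al:if} and on the return values of $\fastExt(d)$ and $\PCE(d)$, and the repair $\Pals[d] \ot \Pals[2c-d]$ when the contraction leaves $|w_1|=d$. But there is a genuine gap exactly at the spot you call ``the crux.'' In the $\PCE(d)$-returns-$\msf{true}$ branch you assert that $c$ ``stays left-good and remains a suffix-palindrome-center candidate with $c+\rad_w(c)\ge|w|-1$,'' i.e., you tacitly assume the contraction cuts the working string somewhere in $[c:d]$. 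But Condition~\ref{con:postcon} for $d$ only gives $|w_1|\le d$; nothing in your argument excludes $|w_1|<c$, and if that happened, neither the return on Line~\ref{al:true2} (which needs $c=|w|$) nor a legitimate re-establishment of Condition~\ref{con:whilecon} after the \textbf{break} would be available, since position $c$ itself would have been deleted. The paper closes this with a specific argument you never make: when $\slowExt(d)$ contracts a suffix Z-shape $\lrangle{2d-|w_0|,d}$ of some $w_0$, one must have $2d-|w_0|\ge c$, because otherwise $\lrangle{c,d}$ would already have been a suffix Z-shape of the proper prefix $w_0[1:2d-c]$ with $2d-c<|w_0|$ (the palindrome at $c$ covers $d$ since $d\le c+\rad_w(c)$, and $c$ is left-good, so Lemma~\ref{lem:preslow} guarantees detection), which $\slowExt(d)$ would have detected and contracted earlier---a contradiction. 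Only this yields $c\le|w_1|\le d$ and hence the three-way bookkeeping you rely on ($|w_1|=c$: return; $c<|w_1|<d$: break with $c+\rad_{w_1}(c)=|w_1|$; $|w_1|=d$: repair slot $d$). Without it, your survival claims for $c$ and its arms are unjustified.

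A secondary slip: in the case $d+\Pals[d]<c+\Pals[c]$ you invoke Corollary~\ref{cor:pchain} to conclude $\mscr{F}_w(d)\le|w|-1$, but that corollary bounds $\mscr{F}_w(d)$ by $\mscr{F}_w$ of a position to its \emph{left}, and the only candidate, $\mscr{F}_w(c)$, is exactly the quantity the lemma is still in the process of bounding---the appeal is circular (at this point $\mscr{F}_w(c)$ could a priori be $|w|$, via an as-yet-unstabilized position in $[c+1:d]$). The paper instead unfolds $\mscr{F}_w(d)$ as the maximum of $F_w(d)$ and $\mscr{F}_w(e)$ over $d\plink_w e$, noting every such $e$ lies in the interval $(d:c+\rad_w(c)]$ because $d+\rad_w(d)<c+\rad_w(c)$, so invariant (i) for the current $d$ applies directly. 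With that one-line fix and the pivot bound above (plus spelling out the stability items of Condition~\ref{con:precon} for $d$ via $\mscr{A}_w(c)=\mscr{A}_w(d)$ and Corollary~\ref{cor:pchain2}, which you gloss over), your proof would coincide with the paper's.
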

\begin{proof}
	For $d = c+\Pals[c]$, the lemma follows Lemmas~\ref{lem:slow} and Condition~\ref{con:whilecon}.
	We show the lemma holds for $d-1$ if it is the case for $d$.
	
	If $d + \Pals[d] < c + \Pals[c] = c+ \rad_w(c)$, the algorithm does nothing but decreasing the value of $d$.
	No need to prove (ii) and (iii).
	Since $c$ is right-good, we already have $\Pals[d]=\rad_w(d)$.
	With the induction hypothesis this shows (iv).
	Since $d+\rad_w(d) < c + \rad_w(c) \le |w|-1$, we have
	\begin{align*}
	\mscr{F}_w(d) &= \max(\{\mscr{F}_w(e) \mid d \plink_w e\} \cup \{ F_w(d) \})
	\\ & \le \max(\{\mscr{F}_w(e) \mid d < e \le c+ \rad_w(c) \} \cup \{ c + \rad_w(c) \}) = |w| - 1 
	\,.\end{align*}
	This proves (i).
	
	Suppose $d + \Pals[d] \ge c + \Pals[c]$.
	If $\fastExt(d)$ returns $\msf{false}$, Lemma~\ref{lem:fast} implies that $\Pals[e]=\rad_w(e)$ for all $e \in [d : |w|-1]$, which means (iv),
	and that $(d;\Stack)$ is a palindrome chain such that $F_{w}(d;\Stack) = \max\{\,\mscr{F}_w(e) \mid d \le e \le c+\Pals[c]\,\} = |w|-1$ and $\Stack$ is not empty.
	We then push $d$ to the stack.  For clarity, we write the updated stack as $\Stack' = (d;\Stack)$ here.
	By $c \plink_w d$, $(c;\Stack')$ is a palindrome chain such that $F_w(c;\Stack') = F_w(\Stack)  = |w|-1$.
	This proves (i).
	Since the procedure changes $\Pals$ only at $d$ as $\Pals[d]=\rad_w(d)$, (ii) and (iii) are obvious.
	
	Now suppose that $\fastExt(d)$ returns $\msf{true}$.
	Then $\PCE(d)$ will be called.
	We first confirm that Condition~\ref{con:precon} for $d$ is satisfied.
	Since $\fastExt(d)$ returns $\msf{true}$, Lemma~\ref{lem:fast} implies that $\Stack$ is empty and $d+\rad_w(d) \ge |w|-1$.
	Together with $|w| - 1 \ge c+\rad_w(c)$, we have $c \plink_w d$.
	Since $c$ is right-good, $d$ is left-good by Lemma~\ref{lem:right2left}.
	By induction hypothesis, $\Pals[e]=\rad_w(e)$ for all $e \in [d+1:|w|-1]$.
	Recall that $c \plink_w d$ implies $\mscr{A}_w(c) = \mscr{A}_w(d)$. 
	So, all positions in $[1:\mscr{A}_w(d)-1]$ are stable.
	For positions $e \in [d+1:|w|-1]$, the assumption (i) implies $\mscr{F}_w(e) \le |w|-1$ by Corollary~\ref{cor:pchain2}, so they are stable.
	Therefore, Condition~\ref{con:precon} for $d$ is satisfied 
	 and thus we may assume that Condition~\ref{con:postcon} for $d$ is satisfied.
	
	Suppose $\PCE(d)$ returns $\msf{false}$.
	Then, $\Pals[e] = \rad_{w'}(e)$ for all $e \in [d : |w'|-1]$,
	where $w'$ is the working string after the execution of $\PCE(d)$, i.e., (iv) holds.
	Since $\PCE(d)$ does not change the value of $\Pals[e]$ for $e < d$, (ii) and (iii) hold.
	Moreover, $(d;\Stack)$ is a palindrome chain such that $F_{w'}(d;\Stack) = |w'|-1$.
	By pushing $d$ to the stack, (i) holds.
	
	If $\PCE(d)$ returns $\msf{true}$,  we exit the \textbf{for} loop.
	By Condition~\ref{con:postcon} on $d$, the stack is empty and $|w_1| \le d$ where $w_1$ is the working string after the execution of $\PCE(d)$.
	$\PCE(d)$ returns $\msf{true}$ if and only if either $\slowExt(d)$ returns $\msf{true}$ (Line~\ref{al:true1}) or $|w_1| = d$ (Line~\ref{al:true2}).
	
	Suppose $\slowExt(d)$ returns $\msf{true}$ by detecting and contracting a suffix Z-shape occurrence $\lrangle{2d - |w_0|,d}$ in some $w_0$ such that $\widehat{w_0} = {w_1} = w_0[1:2d-|w_0|]$.
	Here one can see $2d-|w_0| \ge c$, since otherwise, $\lrangle{c,d}$ was an occurrence of another Z-shape in $w_0[1 : 2d-c]$, where $2d-c < |w_0|$, which $\slowExt(d)$ should have detected and contracted earlier.
	Thus, $|w_1| \ge c$.
	If $|w_1|=c$, $\PCE(c)$ returns $\msf{true}$, too, on Line~\ref{al:true2}.
	Condition~\ref{con:postcon} on $d$ immediately implies Condition~\ref{con:postcon} for $c$, with the fact $|w_1|=c$.
	Suppose $c < |w_1| < d$.
	In this case, we break the \textbf{for} loop and iterate the \textbf{while} loop after
	 appending a new letter from $T$ to $w_1$ on Line~\ref{al:append}.
	Since $d \le c+\rad_w(c)$, we have $c+\rad_{w_1}(c) = |w_1|$.
	Therefore, since $\PCE(d)$ does not change the values of $\Pals[e]$ for $e < d$ unless $w[e]$ is deleted,
	after appending a letter to $w_1$, Condition~\ref{con:precon} shall be satisfied.
	
	Suppose $|w_1| = d$.
	In this case, we let $\Pals[d] = \Pals[2c-d]$ before breaking the \textbf{for} loop.
	$\PCE(d)$ may have changed the value of $\Pals[d]$ by calling $\slowExt(d)$.
	The value is restored to be $\Pals[d] = \Pals[2c-d]$ so that Condition~\ref{con:whilecon} for $c$ still holds.
	Apart from this point, every requirement of Condition~\ref{con:whilecon} on the next iteration follows from that on the current iteration.
\qed\end{proof}

\begin{lemma}[\textbf{while} loop]\label{lem:while}
	Suppose that Condition~\ref{con:precon} is satisfied when $\PCE(c)$ is called.
	Then at the beginning of every iteration of the \textbf{while} loop, Condition~\ref{con:whilecon} holds.
	Moreover if it returns $\msf{true}$, Condition~\ref{con:postcon} holds.
\end{lemma}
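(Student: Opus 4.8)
The plan is to prove both assertions at once by induction on the number of iterations of the \textbf{while} loop of $\PCE(c)$, treating Lemmas~\ref{lem:slow},~\ref{lem:fast} and~\ref{lem:for} as black boxes. The skeleton is deliberately thin: Lemma~\ref{lem:for} already takes Condition~\ref{con:whilecon} as its hypothesis at the entry of an iteration and guarantees, among other things, that Condition~\ref{con:whilecon} is restored whenever the \textbf{for} loop \textbf{break}s on Line~\ref{al:break}. So the inductive step for the invariant is essentially a citation, and the only genuinely new work is the base case: verifying Condition~\ref{con:whilecon} at the very first iteration from the hypothesis Condition~\ref{con:precon}.

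Since Condition~\ref{con:whilecon} strengthens Condition~\ref{con:precon} only by the symmetry clause $\Pals[e]=\Pals[2c-e]$ for $e\in[c+1:|w|-1]$, the base case reduces to that clause (everything else, including $\Stack$ being empty, is already in Condition~\ref{con:precon}). If $\PCE(c)$ is the top-level call from $\ZRed(T)$ then $c=|w|-1$, the interval is empty, and the clause is vacuous. The interesting case is when $\PCE(c)$ is a recursive call $\PCE(d)$ made from a parent $\PCE(c')$: here the positions $e\in[c+1:|w|-1]$ are precisely those the parent has already processed, so invariant~(iv) of Lemma~\ref{lem:for} gives $\Pals[e]=\rad_w(e)$, whereas their reflections $2c-e$ still lie below the parent's current loop index $d=c$ and hence retain values symmetric about $c'$ (invariant~(iii) for the part right of $c'$, and the untouched incoming values for the part at or left of $c'$). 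Since $\fastExt(d)$ returned $\msf{true}$ we have $c+\rad_w(c)\ge|w|-1$, so the palindrome at $c$ covers $[c+1:|w|-1]$; the identity $\Pals[e]=\Pals[2c-e]$ should then drop out by playing the symmetry about $c$ against the symmetry about $c'$ and applying Lemma~\ref{lem:prefast} with the right-goodness of $c'$.

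For the inductive step I would note that an iteration either terminates the function or ends in a \textbf{break}. Assuming Condition~\ref{con:whilecon} at entry, if $\slowExt(c)$ returns $\msf{false}$ its hypothesis is met and Lemma~\ref{lem:for} applies verbatim; in particular its last clause says that after a \textbf{break} on Line~\ref{al:break} Condition~\ref{con:whilecon} again holds, which closes the induction for the invariant. Well-foundedness is not an issue, since a breaking iteration appends at least one input letter on Line~\ref{al:append}, so only finitely many iterations occur.

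Finally, for the claim about returning $\msf{true}$, I would enumerate the two exits. The return on Line~\ref{al:true1} occurs when $\slowExt(c)$ returns $\msf{true}$; Lemma~\ref{lem:slow} then yields $w'=\widehat{wu}$ with $wu$ ss-reducible and right pivot $c$, so $|w'|\le c$, and since $\Stack$ is empty on entry and $\slowExt$ never touches it, the $\msf{true}$ case of Condition~\ref{con:postcon} holds. The return on Line~\ref{al:true2} is delivered directly by the final sentence of Lemma~\ref{lem:for}. I expect the main obstacle to be the recursive base case above: one must track two nested reflection centres $c'$ and $c=d$ simultaneously and argue that the exact radii recorded on the right arm still agree with the adaptive copies on the left arm, with extra care when the frontier $|w|-1$ falls inside a palindrome of the chain that extends beyond $c'$'s own arm.
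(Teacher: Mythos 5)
Your skeleton coincides with the paper's: induction over the iterations of the \textbf{while} loop, with the inductive step discharged by the ``\textbf{break}'' clause of Lemma~\ref{lem:for}, and the two $\msf{true}$-exits handled by Lemma~\ref{lem:slow} (with the empty stack coming from the invariant) and by the final sentence of Lemma~\ref{lem:for}. Those parts are fine and essentially verbatim what the paper does. The problem is the base case, which you yourself identify as ``the main obstacle'' and then do not actually carry out: the sentence that the identity $\Pals[e]=\Pals[2c-e]$ ``should then drop out by playing the symmetry about $c$ against the symmetry about $c'$ and applying Lemma~\ref{lem:prefast} with the right-goodness of $c'$'' is an announcement, not an argument, and it is precisely the one nontrivial step of this lemma.

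The missing idea is the left-goodness of $c$, which is an explicit item of Condition~\ref{con:precon} and was put there exactly for this purpose; your proof never uses it. The paper's base case is one step and entirely local to the lemma's own hypotheses: since $c+\rad_w(c)\ge|w|-1$ and $\Pals[c+r]=\rad_w(c+r)$ for $r\in[1:|w|-c-1]$ (the last item of Condition~\ref{con:precon}), the symmetry clause reduces to showing $\Pals[c-r]=\rad_w(c-r)$ on the mirrored positions, and this is delivered by Lemma~\ref{lem:leftgood} applied to the left-good center $c$ (its first clause covers $r<\rad_w(c)$; its ``moreover'' clause covers $r=\rad_w(c)$, being applicable because a Z-shape with right pivot $c$ ending at $c+r\le|w|-1$ would lie inside the irreducible $\prev{w}$). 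By contrast, your plan splits on the call site and, in the recursive case, reaches back into the parent $\PCE(c')$'s \textbf{for}-loop invariants (iii) and (iv) to reconstruct the left-arm values. Besides leaving the two-center bookkeeping undone, this oversteps the lemma's interface: the statement assumes only Condition~\ref{con:precon}, and the whole point of packaging left-goodness into that condition (via Lemmas~\ref{lem:leftgood}--\ref{lem:right2left}) is that the verification of $\PCE(c)$ need not know how its precondition was established. Re-deriving the clause from the caller's state both duplicates that packaged work and, within the mutual induction that proves Lemma~\ref{lem:stabilize}, makes the argument fragile where the paper's is modular. To repair your proof, replace the two-reflection argument by the single invocation of Lemma~\ref{lem:leftgood} on $c$ itself; the case split on the call site then becomes unnecessary.
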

\begin{proof}
	At the first iteration of the loop, 
	since $c+\rad_w(c) \ge |w|-1$ and $\Pals[c+r]=\rad_w(c+r)$ for each $r \in [1:|w|-c-1]$,
	it is enough to show that $\rad_w(c-r)=\Pals[c-r]$ for each $r \in [1:|w|-c-1]$.
	This can be seen by Lemma~\ref{lem:leftgood}.
	For the second or later iteration, we must have broken the \textbf{for} loop in the previous iteration.
	Condition~\ref{con:whilecon} follows Lemma~\ref{lem:for}.
	
	If it returns $\msf{true}$ on Line~\ref{al:true1}, $\Stack$ is empty due to Condition~\ref{con:whilecon}.
	 Lemma~\ref{lem:slow} ensures the other requirement of Condition~\ref{con:postcon}.
	If it returns $\msf{true}$ from the \textbf{for} loop on Line~\ref{al:true2}, Lemma~\ref{lem:for} ensures Condition~\ref{con:postcon}.
\qed\end{proof}

\begin{theorem}\label{thm:main}
	Our algorithm calculates the Z-normal form of the input in linear time.
\end{theorem}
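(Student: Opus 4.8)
The statement has two halves, correctness and the linear-time bound, and the first needs no new work: it is precisely Proposition~\ref{prop:correct}, which we have already reduced to Lemma~\ref{lem:stabilize}. The plan is therefore to devote the proof entirely to the running time, and the first step is to fix the accounting. Every elementary operation of Algorithms~\ref{alg:zred} and~\ref{alg:zext} is one of: (a) an $\Append$; (b) a letter comparison inside $\slowExt$; (c) an iteration of the \textbf{for} loop of $\PCE$; or (d) a $\Stack$ push or pop inside $\fastExt$. It suffices to bound the total count of each by $\mrm{O}(n)$.

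I would dispatch the easy counts first. Each $\Append$ consumes one letter of the sentinel-padded input, and no letter is ever re-read because a Z-shape contraction only truncates the already-scanned suffix of $w$; hence there are at most $n+2$ appends. Since $|w|$ starts at $0$, grows by one per append, and every contraction deletes a Z-shape tail of length at least $2$, the number of contractions is $\mrm{O}(n)$. In $\slowExt$ each successful comparison is immediately followed by an append, so successful comparisons number $\mrm{O}(n)$, while each call contributes one failing comparison and is triggered by one $\PCE$ call. Each $\fastExt$ pop is charged to the matching $\Stack$ push, a push happens at most once per \textbf{for}-loop iteration, and every $\fastExt$ call and every recursive $\PCE$ call is launched by a distinct \textbf{for}-loop iteration; calls of $\PCE$ from $\ZRed$ occur at most once per main-loop pass and so number $\mrm{O}(n)$. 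Consequently all of (a), (b), (d) together with the number of $\PCE$ calls collapse to $\mrm{O}(n)+\mrm{O}(I)$, where $I$ is the total number of \textbf{for}-loop iterations, and the whole problem reduces to proving $I=\mrm{O}(n)$.

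To bound $I$ I would split each iteration into those that stabilize a position and those that immediately precede a \textbf{break} or a terminating \textbf{return} caused by a contraction. The latter is easy: at most one occurs per \textbf{while} pass, and a \textbf{while} pass either terminates its $\PCE$ call or is closed by a contraction, so the number of such iterations is $\mrm{O}(n)$. For a \textbf{while} pass ending in a \textbf{break}, the positions its \textbf{for} loop has just stabilized all lie strictly to the right of the cut point and are deleted by that very contraction, so its stabilizing iterations number at most the letters the contraction removes, which sums to $\mrm{O}(n)$. At the top level Corollary~\ref{cor:stronglystable} guarantees that once a $\PCE$ call from $\ZRed$ returns, the positions it stabilized are strongly stable and never touched again. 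What remains --- and this is the crux --- is the stabilizing iterations of the terminal (non-\textbf{break}) passes inside recursive calls, where a surviving position could in principle be re-stabilized at several nesting levels, exactly the phenomenon that makes a naive scheme cost $\Omega(n\log n)$ in Example~\ref{ex:deletetail}.

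The hard part is to show this re-processing telescopes to $\mrm{O}(n)$, and the lever is the ``adaptive'' transfer of radii: $\slowExt$ copies $\Pals[c-r]$ into $\Pals[c+r]$, and, as argued around Figure~\ref{fig:adaptive} and formalized in Lemma~\ref{lem:fast}, these copies stay valid for detecting every Z-shape whose right pivot lies at or before the frontier even after the right arm is cut and re-extended. Hence, when a parent chain is stored in $\Stack$ and $\fastExt(d)$ runs, it rebuilds the palindrome chain from $d$ by comparing symmetric $\Pals$ entries instead of rescanning letters, and by Lemma~\ref{lem:fast} each of its steps either pops a stack element for good or halts the extension. I would therefore set up a potential argument that deposits a constant number of credits on each appended letter and on each pushed stack element and pays for every stabilizing \textbf{for}-iteration out of these credits; the decisive lemma is that a position is re-stabilized only when a fresh append or an as-yet-unspent stack credit can be charged, so that the stabilizing work is proportional to the number of appends plus pushes, hence $\mrm{O}(n)$. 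Combining $I=\mrm{O}(n)$ with the previous paragraph yields the linear running time, and proving that the credits never run dry under repeated cutting and re-extension is the step I expect to demand the most care.
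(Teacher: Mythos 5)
Your reduction of the theorem to bounding the total number of \textbf{for}-loop iterations $I$ is sound and agrees with the paper's accounting (the paper's proof of the theorem likewise covers only the time bound, leaving correctness to Proposition~\ref{prop:correct}): appends are bounded by $|T|+2$, each successful comparison in $\slowExt$ consumes an append, $\fastExt$ pops are charged to pushes and pushes to \textbf{for}-iterations, and your charging of an aborted (\textbf{break}-ending) pass to the letters its contraction deletes is correct, since by Lemma~\ref{lem:for} the break occurs with cut point $|w_1|$ strictly below every index the pass visited, so all of that work is erased along with at least as many letters. The genuine gap is exactly where you stop: the ``decisive lemma'' that stabilizing iterations of terminal passes can be charged to fresh appends is asserted as a plan, not proved, and it is the entire substance of the paper's argument. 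Your framing also slightly misdiagnoses the difficulty: no potential function and no stack credits are needed, and a surviving letter occurrence is in fact \emph{never} revisited by any \textbf{for} loop --- what must be proved is this impossibility, not a telescoping bound on re-processing, and Example~\ref{ex:deletetail} shows only that maintaining exact radii is expensive, not that the \textbf{for} loops revisit letters.

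The paper proves this as a \emph{freshness invariant}. Annotate each letter of $w$ with its original position in $T$; a letter is fresh until it indexes a call of $\PCE$ (equivalently, a \textbf{for}-iteration). The invariant, shown by induction over the \textbf{while} iterations, is that whenever $\PCE(c)$ enters its \textbf{for} loop, all letters in $\tilde{w}[b+1:|w|]$ are fresh, where $b=|w|$ was frozen at the call's entry. At the first pass this range consists solely of letters just appended by $\slowExt(c)$; after a \textbf{break}, Condition~\ref{con:postcon} guarantees the contraction deleted every position the aborted pass touched (and every position any descendant call touched, since a descendant's range starts above its parent's top), so the next pass covers only the never-visited interval $[b:|w_1|]$ plus freshly appended letters. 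Hence each letter of $T$ pays for at most one $\PCE$ call and one \textbf{for}-iteration, giving $I \le |T|+2$ outright, and your worry about ``credits running dry under repeated cutting and re-extension'' dissolves. Without this invariant, or an equivalent injection from iterations to letter occurrences, your proposal does not yet establish $I=\mrm{O}(n)$; as written it is an incomplete proof whose strategy is right but whose crux is missing.
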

\begin{proof}
	We first show that the number of calls of the function $\PCE$ is bounded by $|T|$.
	For this sake, we associate each occurrence of a letter in $w$ with the original position in $T$ and
	let $\tilde{w} \in (\Sigma \times \mathbb{N})^*$ denote the string obtained from $w$ by adding the original position to each letter.
	We assume that $(T[i],i)$ is \emph{fresh} at the beginning
	and it becomes \emph{non-fresh} when $\PCE(d)$ is called and $\tilde{w}[d+1]=(T[i],i)$.
	We claim that $\PCE(d)$ is called only when $\tilde{w}[d+1]$ is fresh, which implies that the number of calls of the function $\PCE$ is bounded by $|T|$.

	When $\PCE(d)$ is called from $\ZRed(T)$, it is right after a new letter is appended at position $d + 1$, which must be fresh.
	Suppose $\PCE(d)$ is called from $\PCE(c)$.
	It suffices to show that all letters in $\tilde{w}[b+1:|w|]$ are fresh before entering the \textbf{for} loop of $\PCE(c)$.
	This claim is obviously true if it is at the first iteration%
		\footnote{Precisely speaking, ``the first iteration'' means the first execution of instructions in the \textbf{while} loop, before ``iterating'' the loop.} of the \textbf{while} loop,
	 since at the beginning of the $\PCE(c)$, $[b+1:|w|]=\emptyset$ and after the execution of $\slowExt(c)$, all letters in $\tilde{w}[b+1:|w|]$ have just been newly appended and are fresh.
	Suppose that the claim holds at the beginning of an iteration of the \textbf{while} loop.
	The claim still holds after the execution of $\slowExt(c)$ by the same reason for the first iteration.
	The \textbf{while} loop will be repeated only when $\PCE(e)$ returns $\mathsf{true}$ for some $e \in [b:c+\Pals[c]]$ in the \textbf{for} loop in $\PCE(c)$, in which case the working string $w$ is reduced and its length becomes $e$ or smaller by Condition~\ref{con:postcon}.
	That is, all non-fresh letters on the right of $e$ are deleted and letters that will be appended are all fresh.
	This completes proving that the number of calls of the function $\PCE$ is bounded by $|T|$.
	
	The above explanation about the number of calls of $\PCE$ also shows that the total number of iterations of the \textbf{while} loop is bounded by $|T|$
	and this implies the number of calls of $\slowExt$ is also bounded by $|T|$.
	The total running time of $\slowExt$ is linearly bounded by the number of its calls plus the times of appending letters from $T$, which is bounded by $\mrm{O}(|T|)$ in total.
	The same argument on the number of calls of $\PCE$ applies to that of executions of the \textbf{for} loop.
	This implies that the total number of positions that is pushed onto the stack is bounded by $|T|$, which implies that total running time of $\fastExt$ is bounded by $\mrm{O}(|T|)$.
	
	All in all, our Z-reduction algorithm runs in linear time.
\qed\end{proof}

By using our and Raghavan's~\cite{RAGHAVAN1994108} algorithms,
the smallest path and cycle can be inferred from walks in linear time.

\begin{corollary}
	Given a string $T$ of length $n$,
	the smallest path and cycle on which $T$ is the output of a walk
	can be inferred in $\mrm{O}(n)$ time.
\end{corollary}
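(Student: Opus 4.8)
The plan is to compose our Z-normalization procedure with Raghavan's linear-time reductions, so that the corollary reduces to chaining three stages that are each already known to run in linear time. First I would invoke Raghavan's result~\cite{RAGHAVAN1994108}, which shows that inferring the minimum path graph and the minimum cycle graph from an arbitrary walk both reduce, in $\mrm{O}(n)$ time, to inferring the minimum path graph from an \emph{end-to-end} walk. Hence it suffices to solve the end-to-end case in linear time and then undo the reduction, also in linear time.

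For the end-to-end core, I would appeal to Theorem~\ref{theo:end}, which identifies the edge-label sequence of the minimum path graph with the Z-normal form $\widehat{T}$ of the walk output $T$. By Theorem~\ref{thm:main}, Algorithm~\ref{alg:zred} computes $\widehat{T}$ in $\mrm{O}(n)$ time. From the resulting string one builds the explicit path graph in time linear in $|\widehat{T}| \le n$, simply by placing $|\widehat{T}|+1$ vertices on a line and labelling the $i$th edge with $\widehat{T}[i]$. I would then carry this solution back through Raghavan's reduction to recover the minimum path or cycle graph for the original, possibly non-end-to-end, walk, which again costs only $\mrm{O}(n)$.

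The only point requiring care is bookkeeping: checking that every stage—Raghavan's forward reduction, the Z-normalization, the translation between strings and explicit graphs, and the inverse reduction—runs in linear time, and that the graph produced indeed has the minimum number of vertices. Since Raghavan established both the correctness and the $\mrm{O}(n)$ running time of his reduction in both directions, and Theorems~\ref{theo:end} and~\ref{thm:main} supply the correctness and linear cost of the end-to-end core, no genuinely new obstacle arises. I would therefore expect the main (and essentially only) work to be the verification that these linear-time pieces compose without hidden overhead, after which the $\mrm{O}(n)$ bound follows by summing the three stages.
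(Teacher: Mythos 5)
Your proposal is correct and matches the paper's own (implicit) argument exactly: the paper justifies this corollary in a single sentence by composing Raghavan's $\mrm{O}(n)$ reductions to the end-to-end case with Theorem~\ref{theo:end} and the linear-time Z-normalization of Theorem~\ref{thm:main}, precisely as you do. Your extra bookkeeping remarks (building the explicit graph from $\widehat{T}$, undoing the reduction) are harmless elaborations of the same route.
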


\section{Experiments}
This section presents experimental performance of our algorithm
 comparing with Raghavan's $\mrm{O}(n \log n)$ time algorithm~\cite{RAGHAVAN1994108}.
We implemented these algorithms in C++ and compiled with Visual C++ 12.0 (2013) compiler.
The experiments were conducted on Windows 7 PC with Xeon W3565 and 12GB RAM.
In the whole experiments, we got the average running time for $10$ times of attempts.

First, for randomly generated strings of length between $10^5$ and $10^6$ over $\Sigma$ of size $|\Sigma|=2,6,10$,
we compared the running time of the algorithms (Fig.~\ref{fig:exp_la}~(a)).
For any alphabet size, our proposed algorithm ran faster.
\begin{figure}[tb]
	\begin{subfigure}[l]{0.48\textwidth}
		\includegraphics[scale=0.45]{./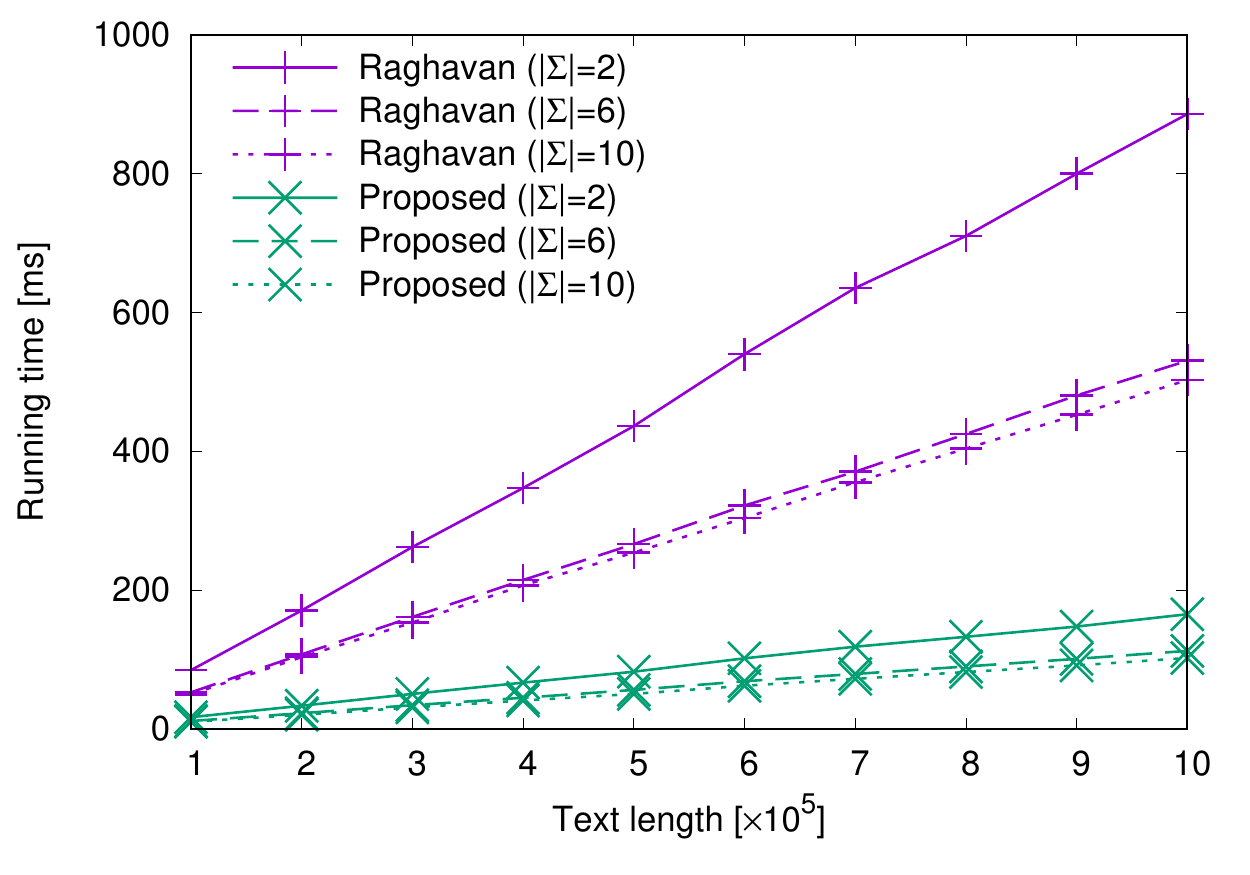}
	\caption{For length between $10^5$ and $10^6$}
	\end{subfigure}
	\begin{subfigure}[l]{0.48\textwidth}
		\includegraphics[scale=0.45]{./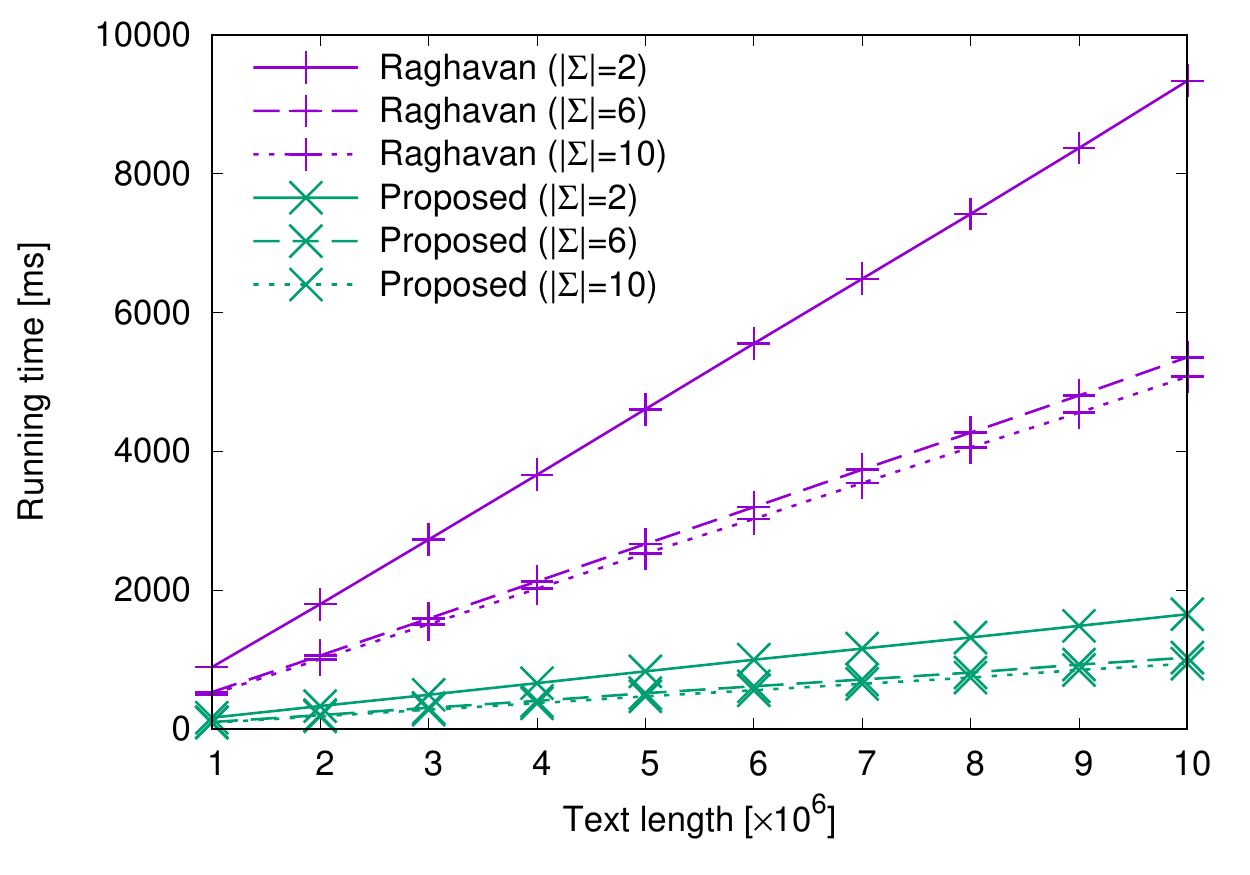}
	\caption{For length between $10^6$ and $10^7$}
	\end{subfigure}
	\caption{Running time of our Z-reduction algorithm for random strings with $|\Sigma|=2,6,10$}
	\label{fig:exp_la}
\end{figure}

Furthermore, we conducted experiments for strings of length between $10^6$ and $10^7$ with the same alphabets
and got a similar result (Fig~\ref{fig:exp_la}~(b)).
Here, the slope of Raghavan's algorithm's performance increases slightly as the string length increases.
On the other hand, our proposed algorithm keeps the same slope.
This shows the proposed algorithm runs in linear time in practice.  


\section*{Acknowledgments}
The research is supported by JSPS KAKENHI Grant Numbers JP15H05706, JP26330013 and JP18K11150, and ImPACT Program of Council for Science, Technology and Innovation (Cabinet Office, Government of Japan).

\bibliographystyle{elsarticle-num} 
\bibliography{ref}
\end{document}